\let\csname equation*\endcsname\relax
\let\csname endequation*\endcsname\relax
\theoremstyle{plain}
\newtheorem{theorem}{Theorem}
\newtheorem{proposition}[theorem]{Proposition}
\newtheorem{lemma}[theorem]{Lemma}
\newtheorem{corollary}[theorem]{Corollary}
\newtheorem{definition}[theorem]{Definition}
\newtheorem{remark}[theorem]{Remark}
\def\sDiv{\mathscr{D}}
\def\sCurl{\mathscr{C}}
\def\sCurlDagger{\mathscr{C}^\dagger}
\def\sTwist{\mathscr{T}}
\def\ObstrZero{\mathcal{O}_{2,2}^{(0)}}
\def\ObstrOne{\mathcal{O}_{2,2}^{(1)}}
\numberwithin{equation}{section}
\begin{document}

\title{Second order symmetry operators}

\author{Lars Andersson$^1$, Thomas B\"ackdahl$^2$, and Pieter Blue$^2$}
\eads{\mailto{lars.andersson@aei.mpg.de}, \mailto{t.backdahl@ed.ac.uk}, \mailto{p.blue@ed.ac.uk}}
\address{$^1$ Albert Einstein Institute, Am M\"uhlenberg 1, D-14476 Potsdam, Germany\\
$^2$  The University of Edinburgh, James Clerk Maxwell Building, Mayfield Road, Edinburgh,  EH9 3JZ, UK}

\begin{abstract}
Using systematic calculations in spinor language, we obtain simple
descriptions of the second order symmetry operators for the conformal wave equation, 
the Dirac-Weyl equation and the Maxwell equation on a curved four dimensional Lorentzian manifold. 
The conditions for existence of symmetry operators for the different equations are seen to be related. 
Computer algebra tools have been developed and used to systematically reduce the equations to a form which allows geometrical interpretation.
\end{abstract}

\medskip
\noindent

\section{Introduction}
The discovery by Carter \cite{carter:1968PhRv..174.1559C} 
of a fourth constant of the motion for the geodesic equations in the Kerr
black hole spacetime, allowing the geodesic equations to be integrated, 
together with the subsequent discovery by
Teukolsky, Chandrasekhar and others of 
the separability of the spin-$s$ equations for all half-integer spins up to
$s=2$ (which corresponds to the case of linearized Einstein equations) 
in the Kerr geometry, 
provides an essential tool for the analysis of fields in the Kerr geometry. 
The geometric fact behind the existence of Carter's constant is, as shown by Walker and Penrose \cite{PenroseWalker:1970}, the existence of a Killing tensor.
A Killing tensor is a symmetric tensor 
$K_{ab} = K_{(ab)}$, satisfying the equation $\nabla_{(a} K_{bc)} = 0$.
This condition implies that the quantity $K =  K_{ab} \dot \gamma^a \dot \gamma^b$ is constant along affinely parametrized geodesics. In particular, viewed as a function on phase space, $K$ Poisson commutes with the Hamiltonian generating the geodesic flow, $H = \dot \gamma^a \dot \gamma_a$. 

Carter further showed that in a Ricci flat spacetime with a Killing tensor $K_{ab}$, the operator ${\mathcal K} = \nabla_a K^{ab} \nabla_b$, which may be viewed as the ``quantization'' of $K$, commutes with the d'Alembertian $\mathcal H = \nabla^a \nabla_a$, which in turn is the ``quantization'' of $H$, 
cf. \cite{carter:1977PhRvD..16.3395C}. In particular, the operator $\mathcal K$ is a symmetry operator
for the wave equation $\mathcal H \phi = 0$, in the sense that it maps solutions to solutions. The properties of separability, and existence of symmetry operators, for partial differential equations are closely related  \cite{miller:MR0460751}. In fact, specializing to the Kerr geometry,  the symmetry operator found by Carter may be viewed as the spin-0 case of the symmetry operators for the higher spin fields as manifested in the Teukolsky system, see eg.  \cite{kalnins:miller:williams:1989JMP....30.2360K,kalnins:miller:williams:1989JMP....30.2925K}. 

In this paper we give necessary and sufficient conditions for the existence of second order symmetry operators, for massless test fields of spin 0, 1/2, 1, on a globally hyperbolic
Lorentzian spacetime of dimension 4. (As explained in Section~\ref{sec:indepspinors}, the global hyperbolicity condition can be relaxed.)
In each case, the conditions are the existence of a conformal Killing tensor or Killing spinor, and certain auxiliary conditions relating the Weyl curvature and the Killing tensor or spinor. 
We are particularly interested in symmetry operators for the spin-1 or Maxwell equation. In this case, we give a single auxiliary condition, which is substantially more transparent than the collection previously given in~\cite{KalMcLWil92a}. For the massless spin-1/2 or Dirac-Weyl equation,
our result on second order symmetry operators represents a simplification of the conditions given by McLenaghan, Smith and Walker \cite{mclenaghan:smith:walker:2000RSPSA.456.2629G} for the existence of symmetry operators of order two.
The conditions we find for spins 1/2 and 1 are closely related to the condition found recently for the spin-0 case for the conformal wave equation by 
Michel, Radoux and {\v S}ilhan
\cite{michel:radoux:silhan:2013arXiv1308.1046M}, cf. Theorem \ref{thm:intro:spin0} below. 
 
A major motivation for the work in this paper is provided by the application by two of the authors \cite{andersson:blue:2009arXiv0908.2265A} of the Carter symmetry operator for the wave equation in the Kerr spacetime, to prove an integrated energy estimate and
boundedness for solutions of the wave equation. The method used is a
generalization of the vector fields method \cite{klainerman:MR784477} 
to allow not only Killing vector
symmetries but symmetry operators of higher order. In order to apply such
methods to fields with non-zero spin, such as the Maxwell field, it is
desirable to have a clear understanding of the conditions for the existence
of symmetry operators and their structure. This serves as one of the main
motivations for the results presented in this paper, which give simple
necessary and sufficient conditions for the existence of symmetry operators
for the Maxwell equations in a 4-dimensional Lorentzian spacetime. 

The energies constructed from higher order symmetry operators correspond to conserved currents which are not generated by contracting the stress energy tensor with a conformal Killing vector. 
Such conserved currents are known to
exist eg. for the Maxwell equation, as well as fields with higher spin 
on Minkowski space, see
\cite{anco:pohjanpelto:2003RSPSA.459.1215A} and references therein. 
In a subsequent paper
\cite{andersson:backdahl:blue:currents} we shall present a detailed study of conserved currents up to second order for the Maxwell field. 

We will assume that all objects are smooth, we work in signature $(+,-,-,-)$, 
and we use the 2-spinor formalism, following the conventions and notation of \cite{PenRin84, PenRin86}.
For a translation to the Dirac 4-spinor notation, we refer to \cite[Page 221]{PenRin84}.
Recall that $\Lambda/24$ is the scalar curvature, $\Phi_{ABA'B'}$ the Ricci spinor, and $\Psi_{ABCD}$ the Weyl spinor.
Even though several results are independent of the existence of a spin structure, we will for simplicity assume that the spacetime is spin.
The 2-spinor formalism allows one to efficiently decompose spinor expressions into irreducible parts. 
All irreducible parts of a spinor are totally
symmetric spinors formed by taking traces of the spinor and symmetrizing all
free indices. Making use of these facts,  any spinor expression can be decomposed in terms of symmetric
spinors and spin metrics. This procedure is described in detail in Section 3.3 in \cite{PenRin84} and in particular by Proposition 3.3.54. 

This decomposition has been implemented in the package \emph{SymManipulator} \cite{Bae11a} by the second author. \emph{SymManipulator} is part of the \emph{xAct} tensor algebra package \cite{xAct} for \emph{Mathematica}. 
The package \emph{SymManipulator} includes many canonicalization and simplification steps to make the resulting expressions compact enough and the calculations rapid enough so that fairly large problems can be handled. A Mathematica 9 notebook file containing the main calculations for this paper is available as supplementary data at \url{http://hdl.handle.net/10283/541}.

We shall in this paper consider only massless spin-$s$ test fields. 
For the spin-$0$ case the field equation is the conformal wave equation
\begin{equation}\label{eq:confwave1}
(\nabla^a \nabla_a + 4 \Lambda) \phi = 0,
\end{equation}
for a scalar field $\phi$, 
while for non-zero spin the field is a symmetric spinor 
$\phi_{A \cdots F}$ of valence $(2s,0)$ satisfying the equation
\begin{equation}\label{eq:massless}
\nabla^A{}_{A'} \phi_{A \cdots F} = 0 .
\end{equation} 
In this paper we shall restrict our considerations to spins $0,1/2,1$. For $s \geq 3/2$, equation \eqref{eq:massless} implies algebraic consistency conditions, which strongly restrict the space of solutions in the presence of non-vanishing Weyl curvature. Note however
that there are consistent equations for fields of higher spin, see \cite[\S 5.8]{PenRin84} for discussion. 

Recall that a Killing spinor of valence $(k,l)$ is a symmetric spinor
$L_{A_1\cdots A_k}{}^{A'_1 \cdots A'_l}$,
\begin{equation}\label{eq:KSgeneral}
\nabla_{(A_1}{}^{(A'_1} L_{A_2\cdots A_{k+1})}{}^{A'_2 \cdots A'_{l+1})} = 0.
\end{equation}
A valence $(1,1)$ Killing spinor is simply a conformal Killing vector, while a valence $(2,0)$ Killing spinor is equivalent to a conformal Killing-Yano 2-form. On the other hand, a Killing spinor of valence $(2,2)$ is simply a traceless symmetric conformal Killing tensor.  
It is important to note that \eqref{eq:confwave1}, \eqref{eq:massless} and \eqref{eq:KSgeneral} are conformally invariant if $\phi$ and $\phi_{A \cdots F}$ are given conformal weight $-1$, and $L^{A_1\cdots A_kA'_1 \cdots A'_l}$ is given conformal weight $0$. See \cite[sections 5.7 and 6.7]{PenRin84} for details.

\newcommand{\Kcal}{\mathcal K}
\newcommand{\Fcal}{\mathcal F} 
\newcommand{\Hcal}{\mathcal H}
Recall that a symmetry operator for a system $\Hcal\varphi=0$, is a linear partial differential operator $\Kcal$ such that $\Hcal\Kcal\varphi=0$ for all $\varphi$ such that $\Hcal\varphi=0$. 
We say that two operators $\Kcal_1$ and $\Kcal_2$ are equivalent if $\Kcal_1-\Kcal_2=\Fcal\Hcal$ for some differential operator $\Fcal$. We are interested only in \emph{non-trivial} symmetry operators, i.e. operators which are not equivalent to the trivial operator $0$. For simplicity, we will only consider equivalence classes of symmetry operators.

To state our main results, we need two auxiliary conditions. 
\begin{definition} \label{def:auxcond}
Let $L_{AB}{}^{A'B'}$ be a Killing spinor of valence $(2,2)$.
\begin{enumerate} \setcounter{enumi}{-1}
\renewcommand{\theenumi}{A\arabic{enumi}}
\renewcommand{\labelenumi}{\theenumi)} 
\item \label{point:A0}  
$L_{AB}{}^{A'B'}$ satisfies 
auxiliary condition \ref{point:A0}
if there is a function $Q$ such that
\begin{align}
\nabla_{AA'}Q={}&\tfrac{1}{3} \Psi_{ABCD} \nabla^{(B|B'|}L^{CD)}{}_{A'B'}
 + \tfrac{1}{3} \bar\Psi_{A'B'C'D'} \nabla^{B(B'}L_{AB}{}^{C'D')} \nonumber\\&
 + L^{BC}{}_{A'}{}^{B'} \nabla_{(A}{}^{C'}\Phi_{BC)B'C'}
 + L_{A}{}^{BB'C'} \nabla^{C}{}_{(A'}\Phi_{|BC|B'C')} . 
\label{eq:A0}
\end{align}
\item \label{point:A1} 
$L_{AB}{}^{A'B'}$ satisfies 
auxiliary condition \ref{point:A1} if there is a vector field $P_A{}^{A'}$ such that 
\begin{align}
\nabla_{(A}{}^{(A'}P_{B)}{}^{B')}= {}&L{}^{CDA'B'} \Psi_{ABCD}
 - L{}_{AB}{}^{C'D'} \bar\Psi^{A'B'}{}_{C'D'} . 
\label{eq:A1}
\end{align}
\end{enumerate}
\end{definition}
\begin{remark}
Under conformal transformations such that $L^{ABA'B'}$, $P^{AA'}$ and $Q$ are given conformal weight $0$, the equations \eqref{eq:A0} and \eqref{eq:A1} are conformally invariant.
\end{remark}

We start by recalling the result of Michel et al. for the spin-$0$ conformal wave equation, which we state here in the case of a Lorentzian spacetime of dimension 4. 

\begin{theorem}[{\cite[Theorem~4.8]{michel:radoux:silhan:2013arXiv1308.1046M}}] \label{thm:intro:spin0}
Consider the conformal wave equation  
\begin{equation}\label{eq:confwave}
(\nabla^a \nabla_a + 4 \Lambda ) \phi = 0
\end{equation}
in a 4-dimensional Lorentzian spacetime. 
There is a non-trivial second order symmetry operator for \eqref{eq:confwave}
if and only if there is a non-zero Killing spinor of valence $(2,2)$ satisfying condition \ref{point:A0} of Definition \ref{def:auxcond}. 
\end{theorem}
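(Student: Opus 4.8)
The plan is to prove the equivalence between existence of a second order symmetry operator for the conformal wave equation and existence of a Killing spinor of valence $(2,2)$ satisfying condition \ref{point:A0}. Since this is quoted as Theorem~4.8 of Michel--Radoux--{\v S}ilhan, my strategy is to reconstruct the argument in the 2-spinor formalism that the paper adopts throughout. First I would write the most general second order differential operator $\Kcal$ acting on a scalar $\phi$ of conformal weight $-1$, namely $\Kcal\phi = S^{ab}\nabla_a\nabla_b\phi + T^a\nabla_a\phi + U\phi$, and since only the symmetric part of $S^{ab}$ contributes to the leading symbol after commuting derivatives, decompose $S^{ab}$ into its trace and trace-free symmetric part. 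The key structural input is conformal invariance: requiring $\Kcal$ to be conformally well-defined (or, equivalently, computing modulo the equation $\Hcal\phi=(\nabla^a\nabla_a+4\Lambda)\phi=0$ and modulo trivial operators $\Fcal\Hcal$) sharply constrains the lower-order coefficients $T^a$ and $U$ in terms of $S^{ab}$ and the curvature.

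The central step is then to impose the symmetry condition $\Hcal\Kcal\phi=0$ whenever $\Hcal\phi=0$. I would compute $\Hcal\Kcal\phi$ explicitly, substitute $\nabla^a\nabla_a\phi = -4\Lambda\phi$ wherever the full Laplacian appears, and organize the result by differential order: the coefficients of $\nabla_a\nabla_b\nabla_c\phi$, of $\nabla_a\nabla_b\phi$, of $\nabla_a\phi$, and of $\phi$ must each vanish. The top-order (third derivative) condition forces the trace-free symmetric part of $S^{ab}$ to be a conformal Killing tensor, which in spinor language is exactly the statement that its spinor equivalent $L_{AB}{}^{A'B'}$ is a Killing spinor of valence $(2,2)$, i.e. satisfies \eqref{eq:KSgeneral}. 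The trace part of $S^{ab}$ and the remaining freedom in $T^a$ can be absorbed into trivial operators and conformal rescalings, so the genuinely new content lives in the next orders.

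The heart of the matter, and the step I expect to be the main obstacle, is the reduction of the lower-order obstructions. After the conformal Killing tensor condition is satisfied, the surviving first-order and zeroth-order conditions do not automatically vanish; instead they combine into an integrability requirement. Concretely, eliminating $T^a$ and $U$ leaves an equation asserting that a certain curvature-weighted expression built from $L_{AB}{}^{A'B'}$ must be a gradient $\nabla_{AA'}Q$ for some scalar $Q$. This is precisely condition \ref{point:A0}, equation \eqref{eq:A0}, and the difficulty is purely computational: one must commute covariant derivatives repeatedly, use the commutator identities expressing $\nabla$-commutators in terms of $\Psi_{ABCD}$, $\Phi_{ABA'B'}$ and $\Lambda$, apply the Killing spinor equation and its first prolongation to replace symmetrized derivatives of $L$, and symmetrize over primed and unprimed indices to collapse everything into irreducible parts. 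This is exactly the kind of systematic spinor reduction the paper attributes to the \emph{SymManipulator} package, and the delicate point is verifying that all terms not of the form $\nabla_{AA'}Q$ cancel, leaving the gradient condition as the single genuine obstruction.

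Conversely, given a Killing spinor of valence $(2,2)$ satisfying \ref{point:A0} with its associated $Q$, I would run the construction backwards: define $\Kcal$ by taking $S^{ab}$ to be the tensor equivalent of $L_{AB}{}^{A'B'}$ and choosing $T^a$ and $U$ using $Q$ so that all the order-by-order conditions derived above are met, and then check that the resulting $\Kcal$ is non-trivial precisely when $L_{AB}{}^{A'B'}$ is non-zero. The non-triviality claim requires showing that a non-zero Killing spinor cannot produce an operator equivalent to $0$ modulo $\Fcal\Hcal$, which follows from comparing principal symbols: a trivial operator has principal symbol divisible by the symbol of $\Hcal$, whereas the trace-free conformal Killing tensor contributes a symbol that is not, so the two coincide only when $L_{AB}{}^{A'B'}=0$.
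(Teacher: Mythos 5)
Your proposal is correct and follows essentially the same route as the paper: Section~\ref{sec:spin0} (deferring the detailed computation to Michel--Radoux--{\v S}ilhan and to the technique used for the Dirac--Weyl and Maxwell cases) likewise starts from the general second order ansatz modulo trivial operators $\Fcal\Hcal$, finds order by order that the leading trace-free coefficient must be a valence $(2,2)$ Killing spinor $L_{AB}{}^{A'B'}$ and the first order coefficient a conformal Killing vector $P_{AA'}$ (which may be taken zero for the existence statement), and reduces the remaining zeroth order obstruction by commutator identities to the requirement that $\tfrac{2}{5}(\ObstrZero L)_{A}{}^{A'}$ be a gradient, which is exactly condition \ref{point:A0}. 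Your converse construction and the principal-symbol argument for non-triviality likewise match the paper's framework of equivalence classes of operators.
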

Previous work on the conformal wave equation was done by  
 \cite{kamran:mclenaghan:1985LMaPh...9...65K}, see also Kress
  \cite{kress:thesis}, see also \cite{kalnins:miller:1983JMP....24.1047K}. Symmetry operators of general order for the Laplace-Beltrami operator in the conformally flat case have been analyzed by Eastwood \cite{eastwood:MR2180410}. 

Next we consider fields with spins 1/2 and 1. The massless spin-1/2 equations are  
\begin{subequations}\label{eq:spin1/2eqs}
\begin{align} 
\nabla^A{}_{A'} \phi_A &= 0, \label{eq:spin1/2left} \\
\intertext{and its complex conjugate form} 
\nabla_A{}^{A'} \chi_{A'} &= 0, \label{eq:spin1/2right}
\end{align}
\end{subequations}
which we shall refer to as the left and right Dirac-Weyl equations \footnote{The use of the terms left and right is explained by noting that spinors of valence $(k,0)$ represent left-handed particles, while spinors of valence $(0,k)$ represent right-handed particles, cf. \cite[\S 5.7]{PenRin84}. The Dirac equation is the equation for massive, charged spin-1/2 fields, and couples the left- and right-handed parts of the field, see \cite[\S 4.4]{PenRin84}. We shall not consider the symmetry operators for the Dirac equation here.}.
Analogously with the terminology used by  Kalnins et al. 
\cite{KalMcLWil92a} for the spin-1 case, we call a symmetry operator $\phi_A \mapsto \lambda_A$, which takes a solution of the left equation to a solution of the left equation a symmetry operator of the \emph{first kind}, while an operator $\phi_A \mapsto \chi_{A'}$  which takes a solution of the left equation to a solution of the right equation a symmetry operator of the \emph{second kind}.  

If one considers symmetry operators in the Dirac 4-spinor notation, a 4-spinor would correspond to a pair of 2-spinors $(\phi_A, \varphi_{A'})$. Therefore a symmetry  operator $(\phi_A, \varphi_{A'})\mapsto (\lambda_A, \chi_{A'})$ for a 4-spinor is formed by a combination of symmetry operators of first $\phi_A\mapsto \lambda_{A}$, and second $\phi_A\mapsto \chi_{A'}$ kind, together with complex conjugate versions of first $\varphi_{A'}\mapsto \chi_{A'}$, and second $\varphi_{A'}\mapsto \lambda_A$ kind symmetry operators.

\begin{theorem} \label{thm:intro:spin1/2} 
Consider the Dirac-Weyl equations \eqref{eq:spin1/2eqs} in a Lorentzian spacetime of dimension~4. 
\begin{enumerate} 
\item There is a non-trivial second order symmetry operator of the first kind for the Dirac-Weyl equation 
if and only if there is a non-zero Killing spinor of valence $(2,2)$ satisfying auxiliary conditions \ref{point:A0} and \ref{point:A1} of definition \ref{def:auxcond}. 
\item There is a non-trivial second order symmetry operator of the second kind for the Dirac-Weyl equation if and only if there is a non-zero Killing spinor $L_{ABC}{}^{A'}$ of valence $(3,1)$, such that the auxiliary condition 
\begin{align}
0={}&\tfrac{3}{4} \Psi_{ABCD} \nabla^{FA'}L^{CD}{}_{FA'}
 + \tfrac{5}{6} \Psi_{B}{}^{CDF} \nabla_{(A}{}^{A'}L_{CDF)A'}\nonumber\\&
 + \tfrac{5}{6} \Psi_{A}{}^{CDF} \nabla_{(B}{}^{A'}L_{CDF)A'}
 -  \tfrac{3}{5} L_{B}{}^{CDA'} \nabla_{(A}{}^{B'}\Phi_{CD)A'B'}\nonumber\\&
 -  \tfrac{3}{5} L_{A}{}^{CDA'} \nabla_{(B}{}^{B'}\Phi_{CD)A'B'}
 + \tfrac{4}{3} L^{CDFA'} \nabla_{(A|A'|}\Psi_{BCDF)}.
\label{eq:A1/2*} 
\end{align}
is satisfied. 
\end{enumerate} 
\end{theorem}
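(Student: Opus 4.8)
The plan is to treat both parts by the same scheme: write down the most general second order operator of the relevant type, impose that it map solutions to solutions, and then read off necessary and sufficient conditions order by order in the derivatives of the field. I would begin by decomposing an arbitrary second order operator into irreducible pieces using the fundamental operators $\sDiv$, $\sCurl$, $\sCurlDagger$ and $\sTwist$ obtained by projecting $\nabla_{AA'}$ onto its symmetric and contracted parts; this reduces the operator to a finite list of symmetric spinor coefficients, one for each order. For a first kind operator $\phi_A\mapsto\lambda_A$ the counting of free indices in $\nabla\nabla\phi_A$ forces the principal coefficient to be a spinor of valence $(2,2)$, while for a second kind operator $\phi_A\mapsto\chi_{A'}$ it must be of valence $(3,1)$; in each case the remaining coefficients carry the lower order data. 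Modulo the equivalence relation $\Kcal_1-\Kcal_2=\Fcal\Hcal$, I may freely drop any term proportional to the field equation, which trims the ansatz considerably.

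The next step is to apply the operator $\Hcal$ defining the target equation and to demand that the result vanish for every solution of the left Dirac-Weyl equation. Expanding $\Hcal\Kcal\phi$ and repeatedly commuting covariant derivatives introduces the curvature spinors $\Psi_{ABCD}$, $\Phi_{ABA'B'}$ and $\Lambda$; after substituting the field equation and its first prolongation, the expression becomes a sum of terms each homogeneous in the jets of $\phi_A$ that are not constrained by the equation. Since these jets are pointwise arbitrary, the symmetry condition splits into one algebraic-differential equation on the coefficients at each order. The top order equation says precisely that the principal coefficient is annihilated by the totally symmetric derivative $\sTwist$, i.e. that it is a Killing spinor of valence $(3,1)$ for the second kind (respectively $(2,2)$ for the first), which is the first half of each assertion.

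It then remains to analyse the lower order equations. Using the Killing spinor equation and its differential consequences, together with the irreducible decomposition of the derivatives of the Killing spinor and of the curvature, I would reduce the subleading equations to the requirement that certain explicitly computable spinors built from $L$ and from $\Psi$, $\Phi$ be symmetrized derivatives of lower valence fields. For the first kind these integrability statements are exactly the existence of the vector $P_A{}^{A'}$ in \ref{point:A1} and of the function $Q$ in \ref{point:A0}; for the second kind the single surviving obstruction is \eqref{eq:A1/2*}, the vanishing of a fixed combination of $\Psi$, $\Phi$ and the derivatives of the valence $(3,1)$ Killing spinor. Necessity follows because these conditions are forced by the existence of the operator; for sufficiency I would run the construction in reverse, using $Q$ and $P$ (respectively the data realizing \eqref{eq:A1/2*}) to fix the subleading coefficients and thereby exhibit an explicit operator, checking non-triviality by inspecting its principal part.

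The main obstacle is the reduction in the third step: the raw lower order conditions are large expressions in the jets of $L$ and the curvature, and turning them into the compact geometric statements \ref{point:A0}, \ref{point:A1} and \eqref{eq:A1/2*} requires systematically applying spinor commutators, Bianchi-type identities, and the full irreducible decomposition of every intermediate spinor. This bookkeeping is precisely what \emph{SymManipulator} automates, and carrying it out by hand would be the hard part; establishing that the resulting obstruction is genuinely a gradient (respectively a symmetrized derivative), so that the auxiliary potentials exist exactly when the operator does, is the conceptually delicate point hidden inside this computation.
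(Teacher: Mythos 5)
Your proposal follows essentially the same route as the paper's proofs of Theorems~\ref{Thm::SymOpFirstKindDirac} and \ref{Thm::SymOpSecondKindDirac}: reduce the general second order operator via the field equation to coefficients contracting symmetrized ($\sTwist$) derivatives, split the symmetry condition order by order using the pointwise independence of $\phi_A$ and its symmetrized jets (the exactness property discussed in Section~\ref{sec:indepspinors}), obtain the Killing spinor equation from the top orders, and recast the surviving lower order conditions as the existence of $P$ and $Q$ (conditions \ref{point:A1} and \ref{point:A0}) or the single obstruction \eqref{eq:A1/2*} by means of commutator identities, with sufficiency read off from the explicit operator. The one imprecision is that index counting alone does not force the principal coefficient to have valence $(2,2)$ (resp.\ $(3,1)$): it a priori also contains a $(4,2)$ (resp.\ $(3,3)$) irreducible piece, which is eliminated by the third order part of the symmetry condition, the Killing spinor equation then arising at second order---exactly as in the paper.
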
 
\begin{remark}
\begin{enumerate}
\item{Under conformal transformations such that $\widehat L^{ABCA'}=L^{ABCA'}$, the equation \eqref{eq:A1/2*} is conformally invariant. }
\item{We remark that the auxiliary condition \ref{point:A0}, appears both in Theorem~\ref{thm:intro:spin1/2}, and for the conformal wave equation in Theorem~\ref{thm:intro:spin0}.}
\end{enumerate}
\end{remark}

In previous work, 
Benn and Kress \cite{benn:kress:2004CQGra..21..427B} showed that a first order symmetry operator of the second kind for the Dirac equation exists exactly when there is a valence $(2,0)$ Killing spinor. See also Carter and McLenaghan \cite{carter:mclenaghan:1979PhRvD..19.1093C}, and Durand, Lina, and Vinet 
\cite{durand:lina:vinet:1988PhRvD..38.3837D} for earlier work. The conditions for the existence of a second order symmetry operator for the Dirac-Weyl equations in a general spacetime were considered in \cite{mclenaghan:smith:walker:2000RSPSA.456.2629G},
see also \cite{fels:kamran:1990RSPSA.428..229F}.
The conditions derived here represent a simplification of the conditions found in \cite{mclenaghan:smith:walker:2000RSPSA.456.2629G}. Further, we mention that 
symmetry operators of general order for the Dirac operator on Minkowski space have been analyzed by Michel \cite{michel:2014arXiv1208.4052}. 

For the spin-1 case, we similarly have the left and right Maxwell equations 
\begin{subequations}\label{eq:spin1eqs}
\begin{align}
\nabla^B{}_{A'} \phi_{AB} &= 0, \label{eq:spin1left} \\ 
\nabla_A{}^{B'} \chi_{A'B'} &= 0 \label{eq:spin1right}
\end{align}
\end{subequations}
The left-handed and right-handed spinors $\phi_{AB}$, $\chi_{A'B'}$ represent an anti-self-dual and a self-dual 2-form, respectively. Each equation in \eqref{eq:spin1eqs} is thus equivalent to a real Maxwell equation, cf. \cite[\S 3.4]{PenRin84}. Analogously to the spin-1/2 case, we consider second order symmetry operators of the first and second kind.
\begin{theorem} \label{thm:intro:spin1}
Consider the Maxwell equations \eqref{eq:spin1eqs} in a Lorentzian spacetime of dimension 4. 
\begin{enumerate} 
\item \label{point:first:maxwell} There is a non-trivial second order symmetry operator of the first kind for the Maxwell equation if and only if there is a non-zero Killing spinor of valence $(2,2)$ such that the auxiliary condition \ref{point:A1} of definition \ref{def:auxcond} is satisfied. 
\item \label{point:second:maxwell}
There is a non-trivial second order symmetry operator of the second kind for the Maxwell equation if and only if there is a non-zero Killing spinor $L_{ABCD}$ of valence $(4,0)$.
\end{enumerate} 
\end{theorem}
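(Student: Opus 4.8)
The plan is to treat both kinds by the same three-stage scheme: write down the most general second-order operator of the required valence, impose that it intertwines the relevant field equations, and then read off the geometric data from the resulting overdetermined system. For the first kind I would seek a symmetric $\lambda_{AB}$ of valence $(2,0)$ built from $\phi_{CD}$, and for the second kind a symmetric $\chi_{A'B'}$ of valence $(0,2)$; in each case I would expand the operator in the fundamental irreducible first-order operators $\sDiv$, $\sCurl$, $\sCurlDagger$, $\sTwist$ and decompose every coefficient spinor into its totally symmetric parts and spin-metric traces, following Proposition~3.3.54 of \cite{PenRin84}. The crucial preliminary reduction is that, modulo the left Maxwell equation \eqref{eq:spin1left}, the divergence $\nabla^{B}{}_{A'}\phi_{AB}$ vanishes on solutions, so the independent $2$-jet of a Maxwell field is carried by the totally symmetrized derivatives $\nabla_{(A}{}^{A'}\phi_{BC)}$ and $\nabla_{(A}{}^{(A'}\nabla_{B}{}^{B')}\phi_{CD)}$ together with the curvature terms produced by commuting derivatives. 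Substituting this reduction collapses the ansatz to a finite list of independent coefficient spinors.

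The second stage is to apply the target field equation to the output and collect the coefficient of each independent jet component of $\phi$; requiring each to vanish produces a hierarchy of equations ordered by differential weight. I expect the top-weight equation to assert that the principal-symbol coefficient is annihilated by the symmetrized derivative, i.e.\ that it is a Killing spinor: of valence $(2,2)$ for the first kind, matching $L_{AB}{}^{A'B'}$ of \eqref{eq:KSgeneral}, and of valence $(4,0)$ for the second kind, matching $L_{ABCD}$. The counting is forced by index balance: two covariant derivatives applied to $\phi_{AB}$ produce a valence $(4,2)$ object, and to land in valence $(2,0)$ one must contract against a $(2,2)$ coefficient, whereas to land in valence $(0,2)$ one must contract against a $(4,0)$ coefficient.

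The third stage is to analyze the lower-weight equations once the principal symbol has been fixed to a Killing spinor. Here I would repeatedly substitute the Killing spinor equation and the Bianchi identities to trade derivatives of $L$ for the Weyl spinor $\Psi_{ABCD}$, the Ricci spinor $\Phi_{ABA'B'}$, and their derivatives. For the second kind I expect every remaining equation to be an algebraic or differential consequence of the valence $(4,0)$ Killing spinor equation alone, so that no extra condition survives, which gives part~\ref{point:second:maxwell}. For the first kind I expect a single surviving integrability condition, expressible as the existence of a vector field $P_A{}^{A'}$ solving \eqref{eq:A1}, which is exactly auxiliary condition \ref{point:A1}; the residual freedom in the lower-order terms of the operator should be precisely the freedom to adjust $P$, so that the condition is both necessary and sufficient. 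In each direction the converse is an explicit construction: given the Killing spinor, and for the first kind the potential $P$, one writes the operator and verifies the intertwining property directly, while non-triviality follows because the principal symbol is built from the nonzero Killing spinor and so cannot be absorbed into a trivial operator $\Fcal\Hcal$.

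The main obstacle is the third stage: the system obtained by matching coefficients is very large, and the reduction to the compact condition \eqref{eq:A1}, and to the complete absence of a condition for the second kind, rests on systematically exploiting the irreducible decomposition, the commutator identities, and the Bianchi identities to recognize that all but one combination are automatically satisfied. This bookkeeping is what the package \emph{SymManipulator} \cite{Bae11a} is designed to handle; the genuine conceptual difficulty is to single out, among the many reduced equations, the one geometrically meaningful obstruction, and to confirm that the leftover freedom in the lower-order terms is spanned by trivial operators together with the adjustment of $P$.
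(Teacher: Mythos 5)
Your proposal is correct and follows essentially the same route as the paper: the proofs of Theorems~\ref{Thm::SymOpFirstKind} and \ref{Thm::SymOpSecondKind} proceed exactly by your three-stage scheme --- a general second-order ansatz reduced modulo the field equation to symmetrized derivatives (using exactness of the Maxwell field as in Section~\ref{sec:indepspinors}), order-by-order coefficient matching via irreducible decompositions and the commutator relations of Lemma~\ref{lemma:commutators}, with the non-trace parts of the leading coefficient forced to vanish, the surviving valence $(2,2)$ part forced to be a Killing spinor, condition \ref{point:A1} emerging at first order through the ansatz introducing $P_{AA'}$, and only a constant $Q$ surviving at order zero. The single cosmetic difference is that for the second kind the paper invokes the complete analysis of \cite{KalMcLWil92a} (noting the same procedure applies) and concentrates on the potential form, whereas you would rerun the derivation directly.
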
 
Note that no auxiliary condition is needed in point \emph{(\ref{point:second:maxwell})} of Theorem \ref{thm:intro:spin1}. 
The conditions for the existence of second order symmetry operators for the Maxwell equations have been given in previous work by Kalnins, McLenaghan and Williams 
\cite{KalMcLWil92a}, see also  \cite{kalnins:mclenaghan:williams:1992grra.conf..129K}, following earlier work by 
Kalnins, Miller and Williams \cite{kalnins:miller:williams:1989JMP....30.2360K},  see also \cite{kress:thesis}. In \cite{KalMcLWil92a}, the conditions for a second order symmetry operator of the second kind were analyzed completely, and agree with the condition given in point \emph{(\ref{point:second:maxwell})} of Theorem~\ref{thm:intro:spin1}. However, the conditions for a second order symmetry operator of the first kind stated there consist of a set of five equations, of a not particularly transparent nature. The result given here in point \emph{(\ref{point:first:maxwell})} of Theorem \ref{thm:intro:spin1} provides a substantial simplification and clarification of this previous result. 

The necessary and sufficient conditions given in theorems \ref{thm:intro:spin0}, \ref{thm:intro:spin1/2}, \ref{thm:intro:spin1} involve the existence of a Killing spinor and auxiliary conditions. The following result gives examples of Killing spinors for which the auxiliary conditions \ref{point:A0}, \ref{point:A1} and equation \eqref{eq:A1/2*} are satisfied.
\begin{proposition}\label{prop:factorize}
Let $\xi^{AA'}$ and $\zeta^{AA'}$ be (not necessarily distinct) conformal Killing vectors and let $\kappa_{AB}$ be a Killing spinor of valence $(2,0)$. 
\begin{enumerate} 
\item The symmetric spinor $\xi_{(A}{}^{(A'} \zeta_{B)}{}^{B')}$ is a Killing spinor of valence $(2,2)$, which admits solutions to the auxiliary conditions \ref{point:A0} and \ref{point:A1}. 
\item The symmetric spinor $\kappa_{AB}\bar{\kappa}_{A'B'}$ is also a Killing spinor of valence $(2,2)$, which admits solutions to the auxiliary conditions \ref{point:A0} and \ref{point:A1}. 
\item The spinor $\kappa_{(AB}\xi_{C)}{}^{C'}$ is a Killing spinor of valence $(3,1)$, which satisfies auxiliary equation \eqref{eq:A1/2*}. 
\item 
The spinor $\kappa_{(AB}\kappa_{CD)}$ is a Killing spinor of valence $(4,0)$. \label{point:factoredvalence4KS}
\end{enumerate}
\end{proposition}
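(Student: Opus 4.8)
The plan is to separate each assertion into two independent claims: that the stated spinor is a Killing spinor of the indicated valence, and (for parts (1)--(3)) that the relevant auxiliary condition holds. The Killing spinor claims, including the whole of part \ref{point:factoredvalence4KS}, follow from a single mechanism. Writing the candidate as a symmetrized product and applying the Leibniz rule, the covariant derivative lands on one factor at a time. For each factor I would use the irreducible decomposition of $\nabla_{A}{}^{A'}(\cdots)$ (Proposition 3.3.54 of \cite{PenRin84}): the totally symmetric part is exactly the left-hand side of the defining equation \eqref{eq:KSgeneral} for that factor, hence vanishes since $\xi$, $\zeta$ are conformal Killing vectors and $\kappa$, $\bar\kappa$ are Killing spinors. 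What remains is a sum of trace terms, each carrying a factor $\epsilon_{AB}$ or $\epsilon_{A'B'}$ on a pair of indices that is subsequently symmetrized in the defining equation \eqref{eq:KSgeneral} for the product. Since $\epsilon$ is antisymmetric, every such term is annihilated by that symmetrization, so the product satisfies \eqref{eq:KSgeneral}. This disposes of part \ref{point:factoredvalence4KS} completely and reduces parts (1)--(3) to the auxiliary conditions.

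For the auxiliary conditions the guiding heuristic is that factored Killing spinors are the symbols of \emph{composed} symmetry operators: a conformal Killing vector furnishes a first order symmetry operator (a suitably weighted Lie derivative) for each of the conformally invariant equations \eqref{eq:confwave1}, \eqref{eq:spin1/2eqs}, \eqref{eq:spin1eqs}, while a valence $(2,0)$ Killing spinor furnishes a first order operator of the second kind. Composing two such operators produces a second order symmetry operator whose principal symbol is precisely the symmetrized product appearing in the proposition, so the existence halves of Theorems \ref{thm:intro:spin0}, \ref{thm:intro:spin1/2} and \ref{thm:intro:spin1} predict that the corresponding auxiliary condition ought to be solvable. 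To make this rigorous I would verify the conditions directly. Substituting the factored form of $L$ into the right-hand sides of \eqref{eq:A0}, \eqref{eq:A1} and \eqref{eq:A1/2*}, I would first eliminate all first derivatives of the factors using the (conformal) Killing equations, leaving expressions in the factors, their divergence parts, and the curvature spinors $\Psi_{ABCD}$, $\Phi_{ABA'B'}$. For \eqref{eq:A1/2*}, which contains no auxiliary field, one then checks that the resulting sum vanishes identically after inserting the second order integrability conditions for $\kappa$ and $\xi$ (the algebraic Weyl-alignment condition $\Psi_{(ABC}{}^{F}\kappa_{D)F}=0$ satisfied by a valence $(2,0)$ Killing spinor, and the curvature constraint on $\nabla\nabla\xi$ for a conformal Killing vector).

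For conditions \ref{point:A0} and \ref{point:A1} one must instead produce the auxiliary fields. For \ref{point:A0} it suffices to show that the right-hand side of \eqref{eq:A0} is a closed one-form, whereupon $Q$ exists by integration (globally, under the standing topological hypotheses); the closedness is again a curvature identity that I would derive from the integrability conditions of the factors. For \ref{point:A1} the analogous step is to show that the symmetric source on the right-hand side of \eqref{eq:A1} lies in the image of the conformal Killing operator $P_{A}{}^{A'}\mapsto \nabla_{(A}{}^{(A'}P_{B)}{}^{B')}$, equivalently that its integrability obstruction vanishes; a convenient alternative is to write down an explicit $P_{A}{}^{A'}$ built quadratically from the factors and their first derivatives and to verify \eqref{eq:A1} by substitution.

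I expect the main obstacle to be the auxiliary conditions rather than the Killing spinor property: identifying the correct candidate fields $Q$ and $P$, and then discharging the dense curvature bookkeeping that appears once \eqref{eq:A0}, \eqref{eq:A1} and \eqref{eq:A1/2*} are expanded and the integrability conditions are substituted. This is exactly the kind of systematic spinor reduction for which the \emph{SymManipulator} package was developed, and I would carry out the final verification with it.
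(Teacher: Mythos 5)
Your reduction of the Killing spinor claims---Leibniz rule plus the irreducible decomposition \eqref{eq:IrrDecGeneralDer}, with every trace term killed by the outer symmetrization (each carries an $\epsilon$ or $\bar\epsilon$ on a symmetrized pair) and every totally symmetric part killed by the factors' Killing equations---is exactly the tacit argument behind part (iv) and the Killing-spinor halves of (i)--(iii) in the paper. Your treatment of part (iii) also coincides with the paper's proof in Section~\ref{sec:factorizations}: substitute $\kappa_{(AB}\xi_{C)}{}^{C'}$ into \eqref{eq:A1/2*} and reduce using precisely the integrability conditions you name, namely $\Psi_{(ABC}{}^{F}\kappa_{D)F}=0$ (together with its $\sCurlDagger$-derivative) and the second-order curvature constraint on a conformal Killing vector; the paper's residue is $-2(\sCurl_{1,1}\xi)^{CD}\Psi_{(ABC}{}^{F}\kappa_{D)F}$, which vanishes by the first condition.

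Where you diverge is in the auxiliary conditions of parts (i)--(ii), and there your primary route has two genuine weak points. The paper does not argue abstractly that the right-hand sides of \eqref{eq:A0} and \eqref{eq:A1} are exact or lie in the image of $\sTwist_{1,1}$; it exhibits explicit bilinear candidates $\mathfrak{Q}_{\xi\zeta}$, $\mathfrak{P}_{\xi\zeta}$ in \eqref{eq:Qxizetadef}--\eqref{eq:Pxizetadef} and $\mathfrak{Q}_{\kappa}$, $\mathfrak{P}_{\kappa}$ in \eqref{eq:Qkappadef}--\eqref{eq:Pkappadef}, obtained from general second-order bilinear ans\"atze, and verifies $(\sTwist_{0,0}\mathfrak{Q})_{A}{}^{A'}=(\ObstrZero L)_{A}{}^{A'}$ and $(\sTwist_{1,1}\mathfrak{P})_{AB}{}^{A'B'}=(\ObstrOne L)_{AB}{}^{A'B'}$ by systematic commutator reductions. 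First, closedness of the one-form on the right of \eqref{eq:A0} only yields a \emph{local} $Q$; global exactness needs vanishing first de~Rham cohomology, and global hyperbolicity (the paper's standing assumption) does not imply simple connectedness, so ``globally, under the standing topological hypotheses'' is not justified---it is the explicit formula that secures a global $Q$. Second, for condition \ref{point:A1} the phrase ``its integrability obstruction vanishes'' is not a routine pointwise check: $\sTwist_{1,1}$ is an overdetermined operator of finite type, and characterizing its range requires prolonging the system, not verifying a single curvature identity. You do hedge with the explicit-$P$ alternative, which is the paper's actual method; with that fallback (extended to $Q$ as well) your plan goes through, and the curvature bookkeeping you anticipate discharging with \emph{SymManipulator} is exactly how the paper proceeds.
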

The point \emph{(\ref{point:factoredvalence4KS})} is immediately clear. The other parts will be proven in Section~\ref{sec:factorizations}.

We now consider the following condition
\begin{align} 
0={}&\Psi_{(ABC}{}^{F}\Phi_{D)FA'B'}. \label{eq:intro:AlignmentPsiPhi}
\end{align}
relating the Ricci curvature $\Phi_{ABA'B'}$ and the Weyl curvature $\Psi_{ABCD}$. A spacetime where \eqref{eq:intro:AlignmentPsiPhi} holds will be said to satisfy the aligned matter condition.
In particular this holds in Vacuum and in the Kerr-Newman class of spacetimes.
Under the aligned matter condition we can show that the converse of Proposition~\ref{prop:factorize} part \emph{(\ref{point:factoredvalence4KS})} is true. The following theorem will be proved in Section~\ref{sec:factorValence4}.
\begin{theorem}\label{thm:Valence4Factorization}
If the aligned matter condition \eqref{eq:intro:AlignmentPsiPhi} is satisfied, $\Psi_{ABCD}\neq 0$ and $L_{ABCD}$ is a valence $(4,0)$ Killing spinor, then there is a valence $(2,0)$ Killing spinor $\kappa_{AB}$ such that
\begin{align}
L_{ABCD}={}&\kappa_{(AB}\kappa_{CD)}.
\end{align}
\end{theorem}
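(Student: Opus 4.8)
The plan is to reduce the statement to a pointwise algebraic factorization and then solve that using the algebraic integrability conditions forced by the Killing spinor equation. At each point $L_{ABCD}$ is a homogeneous quartic in the spinor variable, and $L_{ABCD}=\kappa_{(AB}\kappa_{CD)}$ holds exactly when this quartic is a perfect square, i.e. when the four principal spinors of $L$ coincide in two pairs (multiplicity pattern $\{2,2\}$ or $\{4\}$). Indeed, contracting $\kappa_{(AB}\kappa_{CD)}$ with $x^Ax^Bx^Cx^D$ gives $(\kappa_{AB}x^Ax^B)^2$. So the entire content is to show that, under the aligned matter condition and $\Psi_{ABCD}\neq0$, no principal spinor of $L$ can occur with odd multiplicity.

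First I would derive the basic algebraic integrability condition. Rewrite the Killing spinor equation $\nabla_{(A}{}^{A'}L_{BCDF)}=0$ as $\nabla_{AA'}L_{BCDF}=\tfrac{4}{5}\epsilon_{A(B}M_{CDF)A'}$ with $M_{CDFA'}:=\nabla^{G}{}_{A'}L_{CDFG}$, apply a second derivative, and split it into the symmetric curvature box $\Box_{AB}=\nabla_{X'(A}\nabla_{B)}{}^{X'}$ and the d'Alembertian. Taking the totally symmetric part in all unprimed indices annihilates every $\epsilon$-term on the right-hand side, and since $\Box_{(AB}L_{CDFG)}$ retains only the Weyl contribution (the $\Lambda$-terms carry an $\epsilon$ and drop out), one obtains the purely Weyl condition
\begin{equation*}
\Psi_{(ABC}{}^{G}L_{DFH)G}=0.
\end{equation*}
Contracting this with six copies of a simple principal spinor $o_A$ of $L$ forces $o_A$ to be a principal spinor of $\Psi_{ABCD}$ as well; thus every multiplicity-one principal direction of $L$ is already a principal direction of the Weyl spinor.

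The main obstacle is to upgrade this to the full even-multiplicity statement, and this is where the aligned matter condition must enter, since the condition above does not by itself exclude, e.g., $L_{ABCD}\propto\Psi_{ABCD}$ of Petrov type I, II or III. To rule these out one must use the differential content of the Killing spinor equation. Projecting the second derivative onto the primed box $\Box_{A'B'}$ instead produces a companion relation in which the Ricci spinor $\Phi_{ABA'B'}$ appears, and the aligned matter condition $\Psi_{(ABC}{}^{F}\Phi_{D)FA'B'}=0$ is exactly what makes these $\Phi$-terms compatible with the Weyl alignment already found. Combining this with the contracted Bianchi identity $\nabla^{A}{}_{A'}\Psi_{ABCD}=\nabla_{(B}{}^{B'}\Phi_{CD)A'B'}$ and tracking how a putative simple principal direction of $L$ would be forced to propagate, one shows that no odd-multiplicity direction can survive once $\Psi_{ABCD}\neq0$. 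I expect this case analysis, organized by the multiplicity pattern of $L$ (equivalently by Petrov type), to be the technically heaviest step and the natural place to invoke the computer-algebra reductions used elsewhere in the paper.

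Once the pattern $\{2,2\}$ or $\{4\}$ is established, the factorization $L_{ABCD}=\kappa_{(AB}\kappa_{CD)}$ holds pointwise with $\kappa_{AB}$ smooth and nonzero away from the locus where the two double directions collide, and it remains to verify that $\kappa_{AB}$ is a Killing spinor. Substituting the factorization into the valence-$(4,0)$ equation gives $0=\nabla_{(A}{}^{A'}L_{BCDF)}=2\kappa_{(BC}\nabla_{A}{}^{A'}\kappa_{DF)}$, so the totally symmetric part $S_{ADF}{}^{A'}:=\nabla_{(A}{}^{A'}\kappa_{DF)}$ obeys $\kappa_{(BC}S_{ADF)}{}^{A'}=0$ (the trace terms carry an $\epsilon$ and are killed by the symmetrization). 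Reading this as a polynomial identity in the spinor variable, the quadratic $\kappa(x)=\kappa_{AB}x^Ax^B$ divides the product, and since $\kappa\not\equiv0$ the cubic factor $S(x)^{A'}$ must vanish identically; hence $\nabla_{(A}{}^{A'}\kappa_{DF)}=0$, which is precisely the valence-$(2,0)$ Killing spinor equation. Smoothness of $\kappa_{AB}$ across points where the two double directions coincide is a minor technical point settled by continuity, completing the argument.
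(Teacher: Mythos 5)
Your overall architecture is sound, and two of your three steps are genuinely correct: the reduction of the factorization $L_{ABCD}=\kappa_{(AB}\kappa_{CD)}$ to the statement that every principal spinor of $L$ has even multiplicity, the derivation of the algebraic integrability condition $\Psi_{(ABC}{}^{G}L_{DFH)G}=0$ (this is exactly \eqref{eq:intcondKS4} in the paper), and the closing argument that $\kappa$ is Killing. Indeed your last step is arguably cleaner than the paper's: where the paper verifies $(\sTwist_{2,0}\kappa)_{ABCA'}=0$ by contracting the valence-$(4,0)$ equation with all quintuples drawn from the principal spinors (treating the cases $\alpha^{A}\beta_{A}\neq 0$ and $\alpha^{A}\beta_{A}=0$ separately), you observe that $\kappa_{(BC}(\sTwist_{2,0}\kappa)_{ADF)}{}^{A'}=0$ is the statement that the product of the associated polynomials vanishes, and divisibility in an integral domain kills the cubic factor at once. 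However, the middle of your proposal --- the claim that under the aligned matter condition no principal direction of $L$ can have odd multiplicity --- is the entire content of the theorem, and there you have written a research plan, not a proof: ``one shows that no odd-multiplicity direction can survive'' and ``I expect this case analysis \ldots to be the technically heaviest step'' are placeholders for precisely the argument that needs to be supplied.

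Concretely, two ingredients that make the paper's proof close are absent from your sketch. First, the integrability condition is far stronger than the consequence you extract from it: for $\Psi_{ABCD}\neq 0$, the condition $L_{(ABC}{}^{L}\Psi_{DFH)L}=0$ forces $L_{ABCD}$ to be \emph{proportional} to $\Psi_{ABCD}$ (it is the vanishing of the Jacobian, i.e.\ first transvectant, of two binary quartics; this is the step the paper cites to \cite{KalMcLWil92a}). Your weaker observation, that a simple principal spinor of $L$ is a principal spinor of $\Psi$, becomes vacuous once proportionality is known, and --- as you yourself note --- it cannot by itself exclude $L\propto\Psi$ of Petrov type I, II or III. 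Second, with proportionality the aligned matter condition transfers to $L_{(ABC}{}^{F}\Phi_{D)FA'B'}=0$; differentiating this once via the Killing equation yields \eqref{eq:DerLPhiAlign}, and contracting with five copies of a putative simple principal spinor $\alpha^{A}$ shows that $\alpha^{A}$ is geodesic and shear-free with $\alpha^{A}\alpha^{B}\alpha^{C}(\sCurl_{2,2}\Phi)_{ABCA'}=0$, whence by the Bianchi identity $\alpha^{A}\alpha^{B}\alpha^{C}\nabla^{DD'}\Psi_{ABCD}=0$. It is then the \emph{generalized Goldberg--Sachs theorem} that forces $\alpha^{A}$ to be a repeated principal spinor of $\Psi$, hence of $L$ (by proportionality), giving the contradiction that rules out odd multiplicity. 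Neither the proportionality lemma nor Goldberg--Sachs appears in your proposal, and the alternative route you gesture at (a primed-box companion identity plus ``tracking propagation'' of a simple direction) is not shown to substitute for them --- the primed-box integrability conditions for a $(4,0)$ Killing spinor are of high differential order and there is no indication how your case analysis would terminate. Until this step is carried out, the proof has a genuine gap at its center.
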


\begin{remark}
If $\Psi_{ABCD}=0$, the valence $(4,0)$ Killing spinor will still factor but in terms of valence $(1,0)$ Killing spinors, which then can be combined into valence $(2,0)$ Killing spinors. However, the two factors might be distinct. 
\end{remark}

A calculation shows that if \eqref{eq:intro:AlignmentPsiPhi} holds, $\kappa_{AB}$ is a valence $(2,0)$ Killing spinor, then $\xi^{AA'} = \nabla^{BA'} \kappa^{A}{}_{B}$ is a Killing vector field. Taking this fact into account, we have the following corollary to the results stated above. It tells that generically one can generate a wide variety of symmetry operators from just a single valence $(2,0)$ Killing spinor.
\begin{corollary} \label{cor:sufficient} Consider the massless test fields of spins 0, 1/2 and 1 in a Lorentzian spacetime of dimension 4. Assume that there is Killing spinor $\kappa_{AB}$ (not identically zero) of valence $(2,0)$. Then there are non-trivial second order symmetry operators for the massless spin-$s$ field equations for spins 0 and 1, as well as a non-trivial second order symmetry operator of the first kind for the massless spin-$1/2$ field.

If, in addition, the aligned matter condition \eqref{eq:intro:AlignmentPsiPhi} 
holds, and $\xi_{AA'} = \nabla^{B}{}_{A'} \kappa_{AB}$ is not identically zero,  
then there is also a non-trivial second order symmetry operators of the second kind for the massless spin-$1/2$ field.  
\end{corollary}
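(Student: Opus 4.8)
The plan is to derive the corollary by assembling Proposition~\ref{prop:factorize} with the three main theorems, introducing no new computation: from the single valence $(2,0)$ Killing spinor $\kappa_{AB}$ I would manufacture, for each spin and kind, a Killing spinor of the valence required by the relevant theorem, check that it is not identically zero and satisfies the auxiliary condition demanded there, and then invoke the ``if'' direction of that theorem. The whole argument is bookkeeping apart from one genuine non-vanishing argument.

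For the unconditional assertions I would proceed as follows. By the second item of Proposition~\ref{prop:factorize} the spinor $\kappa_{AB}\bar\kappa_{A'B'}$ is a valence $(2,2)$ Killing spinor admitting solutions of both \ref{point:A0} and \ref{point:A1}, and it is nonzero wherever $\kappa_{AB}$ is, hence not identically zero. Feeding it into Theorem~\ref{thm:intro:spin0} (which uses only \ref{point:A0}) produces the spin-$0$ operator; into point~\emph{(\ref{point:first:maxwell})} of Theorem~\ref{thm:intro:spin1} (which uses \ref{point:A1}) the first-kind Maxwell operator; and into the first part of Theorem~\ref{thm:intro:spin1/2} (which uses both \ref{point:A0} and \ref{point:A1}) the first-kind Dirac--Weyl operator. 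For the remaining spin-$1$ operator I would instead take $\kappa_{(AB}\kappa_{CD)}$, which by item~\emph{(\ref{point:factoredvalence4KS})} of Proposition~\ref{prop:factorize} is a valence $(4,0)$ Killing spinor, nonzero because the symmetric square of a nonzero symmetric spinor is nonzero; point~\emph{(\ref{point:second:maxwell})} of Theorem~\ref{thm:intro:spin1} then applies with no auxiliary condition.

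For the conditional statement I would use the vector $\xi_{AA'}=\nabla^{B}{}_{A'}\kappa_{AB}$. Under the aligned matter condition \eqref{eq:intro:AlignmentPsiPhi} this is, by the fact recorded just before the corollary, a Killing vector and in particular a conformal Killing vector, so the third item of Proposition~\ref{prop:factorize} shows $\kappa_{(AB}\xi_{C)}{}^{C'}$ to be a valence $(3,1)$ Killing spinor satisfying \eqref{eq:A1/2*}. Provided this spinor is not identically zero, the second part of Theorem~\ref{thm:intro:spin1/2} yields the second-kind Dirac--Weyl operator.

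The point that needs real care, and which I expect to be the main obstacle, is showing that $\kappa_{(AB}\xi_{C)}{}^{C'}$ is not identically zero given only that $\kappa_{AB}$ and $\xi_{AA'}$ are each not identically zero. At any point where both factors are nonzero the symmetrized product is nonzero: holding the primed index fixed, the totally symmetric product of the degree-two spinor $\kappa$ and the degree-one spinor $\xi^{C'}$ corresponds, under the usual identification of symmetric spinors with homogeneous polynomials in two variables, to a product of polynomials, which vanishes only if one factor does. The difficulty is that the zero sets of $\kappa_{AB}$ and of $\xi_{AA'}$ need not coincide, so I must locate a common point at which both are nonzero. For this I would use that a nontrivial Killing spinor and a nontrivial conformal Killing vector each solve a first-order system that closes up under prolongation, so that a solution vanishing on an open set vanishes throughout the connected spacetime; two such closed zero sets therefore cannot cover the spacetime, and any point in the complement of their union furnishes the required common nonvanishing point. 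This subtlety does not affect $\kappa_{AB}\bar\kappa_{A'B'}$ or $\kappa_{(AB}\kappa_{CD)}$, which vanish exactly where $\kappa_{AB}$ does.
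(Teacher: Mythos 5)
Your proposal is correct and takes essentially the same route as the paper, which offers no separate proof of the corollary but presents it as the immediate assembly of Proposition~\ref{prop:factorize} (parts (ii), (iii), (iv)) with the ``if'' directions of Theorems~\ref{thm:intro:spin0}, \ref{thm:intro:spin1/2} and \ref{thm:intro:spin1}, together with the remark that $\xi_{AA'}=\nabla^{B}{}_{A'}\kappa_{AB}$ is a Killing vector under the aligned matter condition. Your unique-continuation plus Baire-category argument for the non-vanishing of $\kappa_{(AB}\xi_{C)}{}^{C'}$ settles a point the paper leaves implicit, and it is sound: both the valence $(2,0)$ Killing spinor equation and the conformal Killing vector equation are of finite type, so on a connected spacetime their nontrivial solutions have nowhere-dense zero sets, and at a common point of non-vanishing the contraction with a suitable $\bar\mu_{C'}$ reduces the claim to the non-vanishing of a product of homogeneous polynomials.
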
 
We end this introduction by giving a simple form for symmetry operators for the Maxwell equation, generated from a Killing spinor of valence $(2,0)$.  

\begin{theorem}\label{Thm:SymopMaxwellSimple} 
Let $\kappa_{AB}$ be a Killing spinor of valence $(2,0)$ and let 
\begin{align} 
\Theta_{AB}\equiv{}&-2 \kappa_{(A}{}^{C}\phi_{B)C}.
\end{align} 
Define the potentials
\begin{subequations} \label{eq:intro:ABdef}
\begin{align} 
A_{AA'}={}&\bar{\kappa}_{A'}{}^{B'} \nabla_{BB'}\Theta_{A}{}^{B}
 -  \tfrac{1}{3} \Theta_{A}{}^{B} \nabla_{BB'}\bar{\kappa}_{A'}{}^{B'},\\
B_{AA'}={}&\kappa_{A}{}^{B} \nabla_{CA'}\Theta_{B}{}^{C}
 + \tfrac{1}{3} \Theta_{A}{}^{B} \nabla_{CA'}\kappa_{B}{}^{C} . 
\end{align}
\end{subequations} 
Assume that $\phi_{AB}$ is a solution to the Maxwell equation in a Lorentzian spacetime of dimension~4. 
Let $A_{AA'}, B_{AA'}$ be given by \eqref{eq:intro:ABdef}. Then
\begin{subequations}
\begin{align} 
\chi_{AB}={}&\nabla_{(B}{}^{A'}A_{A)A'},\\
\omega_{A'B'}={}&\nabla^{B}{}_{(A'}B_{|B|B')}
\end{align} 
\end{subequations}
are solutions to the left and right Maxwell equations, respectively. 
\end{theorem}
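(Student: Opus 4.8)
I would verify both assertions by direct computation in the two-spinor calculus, treating them as two structurally parallel problems: showing that $\chi_{AB}=\nabla_{(B}{}^{A'}A_{A)A'}$ obeys the left Maxwell equation \eqref{eq:spin1left}, $\nabla^{B}{}_{A'}\chi_{AB}=0$, and that $\omega_{A'B'}=\nabla^{B}{}_{(A'}B_{|B|B')}$ obeys its primed analogue, the two computations differing only by interchanging the roles of $\kappa$ and $\bar\kappa$ (note that $A_{AA'}$ and $B_{AA'}$ are built from the \emph{same} $\Theta_{AB}$, so this is a genuine parallel, not a conjugation). The one conceptual simplification I would exploit at the outset is that $\chi_{AB}$ is, by construction, the anti-self-dual part of the field strength of the complex potential $A_{AA'}$; since any field strength of a potential satisfies the closed (Bianchi) half of Maxwell's equations identically, the left equation for $\chi_{AB}$ is equivalent, via that identity, to a right Maxwell equation for the self-dual part $\nabla^{C}{}_{(A'}A_{|C|B')}$ of the same potential. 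This reformulates the target as a single spinor identity.

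\emph{Reduction.} First I would substitute $\Theta_{AB}=-2\kappa_{(A}{}^{C}\phi_{B)C}$ into \eqref{eq:intro:ABdef} and expand $\chi_{AB}$ and its divergence, so that the whole expression is second order in $\phi_{AB}$ and involves $\kappa_{AB}$, $\bar\kappa_{A'B'}$ together with their first and second derivatives. The Killing spinor equation $\nabla_{(A}{}^{A'}\kappa_{BC)}=0$, i.e.\ $\sTwist\kappa=0$, together with its conjugate for $\bar\kappa$, lets me trade every first derivative of $\kappa$ for its irreducible divergence part $\xi_{AA'}=\nabla^{B}{}_{A'}\kappa_{AB}$ (and similarly for $\bar\kappa$), the totally symmetric twistor component being absent. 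I would then commute covariant derivatives so as to move a $\nabla^{B}{}_{A'}$ onto $\phi_{AB}$ and apply the Maxwell equation \eqref{eq:spin1left}; the surviving second derivatives of $\phi$ are symmetrized, and their antisymmetric parts rewritten through the curvature spinors $\Psi_{ABCD}$, $\Phi_{ABA'B'}$ and $\Lambda$ by the standard commutator identities.

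\emph{Main obstacle.} After these manipulations two families of terms remain: algebraic combinations of $\kappa$, $\xi$ and $\phi$, which cancel once the coefficient $\tfrac{1}{3}$ appearing in \eqref{eq:intro:ABdef} is used, and curvature terms in which $\Psi$, $\Phi$ and $\Lambda$ multiply $\kappa\phi$ and $\xi\phi$. Because Theorem~\ref{Thm:SymopMaxwellSimple} assumes no curvature restriction whatsoever, these curvature contributions must cancel identically, and establishing this is the delicate point. The cancellation rests on the integrability conditions of the Killing spinor equation, in particular the algebraic relation $\Psi_{(ABC}{}^{F}\kappa_{D)F}=0$ and the differential identities expressing $\nabla_{AA'}\xi_{BB'}$ in terms of $\kappa_{AB}$ and the curvature; it is precisely the tuned factor $\tfrac{1}{3}$ that makes the Weyl and Ricci terms combine to zero. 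Carrying out this bookkeeping by hand is laborious, and this is exactly the step at which the \emph{SymManipulator} package is used to confirm the identity. The right Maxwell equation $\nabla_{A}{}^{B'}\omega_{A'B'}=0$ then follows from the verbatim primed analogue of the computation. I would finally remark that the mere existence of a second order symmetry operator of the first kind for the Maxwell field is already guaranteed by Corollary~\ref{cor:sufficient}, so the actual content here is the verification that \eqref{eq:intro:ABdef} provides an explicit and compact representative of such an operator.
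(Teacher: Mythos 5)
Your strategy is sound and would succeed, but it takes a genuinely different route from the paper's. The paper never verifies the Maxwell equations directly from the explicit potentials \eqref{eq:intro:ABdef}: Theorem~\ref{Thm:SymopMaxwellSimple} is obtained by specializing the general classification results, Theorem~\ref{Thm::SymOpFirstKind} and Theorem~\ref{Thm::SymOpSecondKind} (which already package the symmetry operators in potential form, with $(\sCurlDagger_{1,1}A)_{A'B'}=0$ and $(\sCurl_{1,1}B)_{AB}=0$ established there), to the factorized Killing spinors $L_{ABA'B'}=\kappa_{AB}\bar\kappa_{A'B'}$ and $L_{ABCD}=\kappa_{(AB}\kappa_{CD)}$; the auxiliary condition \ref{point:A1} is solved explicitly by $\mathfrak{P}_{\kappa}$ in Proposition~\ref{prop:factorize}(ii), and Sections~\ref{sec:symopfirstmaxwellfact} and \ref{sec:symopsecondmaxwellfact} carry out the simplification to \eqref{eq:intro:ABdef}. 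Your ``tuned'' coefficient $\tfrac13$ is, in the paper's logic, the trace of the choice $P^{AA'}=-\tfrac23\mathfrak{P}_{\kappa}{}^{AA'}$, and since $(\sTwist_{1,1}\mathfrak{P}_{\kappa})_{AB}{}^{A'B'}=(\ObstrOne\mathfrak{L}_{\kappa})_{AB}{}^{A'B'}$ is itself a pure commutator consequence of $(\sTwist_{2,0}\kappa)_{ABCA'}=0$, your expectation that the direct computation closes with no curvature hypothesis is justified; one small calibration is that the algebraic integrability condition $\Psi_{(ABC}{}^{F}\kappa_{D)F}=0$ is really needed only on the second-kind side, where the paper uses its valence $(4,0)$ form \eqref{eq:intcondKS4} to obtain $(\sCurl_{1,1}B)_{AB}=0$, while the first-kind cancellation goes through on commutators alone. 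Your Bianchi reduction is correct — the curvature terms in \eqref{eq:CurlCurlDagger} cancel when evaluated on a valence $(1,1)$ spinor, so $(\sCurlDagger_{2,0}\sCurl_{1,1}A)_{AA'}=(\sCurl_{0,2}\sCurlDagger_{1,1}A)_{AA'}$ identically — but note that the paper's identities \eqref{eq:CurlDaggerA} and \eqref{eq:CurlB} are strictly stronger than your reformulated target: they are first order in $\phi_{AB}$, hence cheaper to verify than your second-order divergence identity, and they additionally show that $A_{AA'}+B_{AA'}$ serves as a simultaneous potential for $\chi_{AB}$ and $\omega_{A'B'}$, information your weaker divergence-based reduction would not recover. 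What your route buys in exchange is self-containedness: it bypasses the full order-by-order classification of Section~\ref{sec:spin1}, at the cost of a larger machine-assisted cancellation — which is at the same level of rigour as the paper itself, whose reductions are likewise performed with \emph{SymManipulator}.
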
 
The proof can be found in sections \ref{sec:symopfirstmaxwellfact} and \ref{sec:symopsecondmaxwellfact}.
The general form of the symmetry operators for spins 0, 1/2 and 1 is discussed in detail below. 
\begin{remark}
The symmetry operators of the Maxwell equation can in general be written in potential form. See Theorem~\ref{Thm::SymOpFirstKind} and Theorem~\ref{Thm::SymOpSecondKind}.
\end{remark}

The method used in this paper can also be used to show that the symmetry operators $R$-commute with the Dirac and Maxwell equations. Recall that an operator $S$ is said to $R$-commute with a linear PDE $L\phi=0$ if there is an operator $R$ such that $LS=RL$. Even providing a formula for the relevant $R$ operators would require additional notation, so we have omitted this result from this paper.

\subsection*{Overview of this paper} 
In Section~\ref{sec:prel} we define the fundamental operators $\sDiv, \sCurl, \sCurlDagger, \sTwist$ obtained by projecting the covariant derivative of a symmetric spinor on its irreducible parts. These operators are analogues of the Stein-Weiss operators discussed in Riemannian geometry and play a central role in our analysis. We give the commutation properties of these operators, derive the integrability conditions for Killing spinors, and end the section by discussing some aspects of the methods used in the analysis.  Section~\ref{sec:spin0} gives the analysis of symmetry operators for the conformal wave equation. The results here are given for completeness, and agree with those in \cite{michel:radoux:silhan:2013arXiv1308.1046M} for the case of a Lorentzian spacetime of dimesion 4. The symmetry operators for the Dirac-Weyl equation are discussed in Section~\ref{sec:spin1/2} and our results for the Maxwell case are given in Section~\ref{sec:spin1}. Special conditions under which the auxiliary conditions can be solved is discussed in Section~\ref{sec:factorizations}.
Finally, Section~\ref{sec:symopfactored} contains simplified expressions for the symmetry operators for some of the cases discussed in Section~\ref{sec:factorizations}.

\section{Preliminaries} \label{sec:prel} 

\subsection{Fundamental operators} 
Let $S_{k,l}$ denote the vector bundle of symmetric spinors with $k$ unprimed indices and $l$ primed indices. We will call these spinors symmetric valence $(k,l)$ spinors. Furthermore, let $\mathcal{S}_{k,l}$ denote the space of smooth ($C^\infty$) sections of $S_{k,l}$.
\begin{definition}
For any $\varphi_{A_1\dots A_k}{}^{A_{1}'\dots A_{l}'}\in \mathcal{S}_{k,l}$, we define the operators
$\sDiv_{k,l}:\mathcal{S}_{k,l}\rightarrow \mathcal{S}_{k-1,l-1}$,
$\sCurl_{k,l}:\mathcal{S}_{k,l}\rightarrow \mathcal{S}_{k+1,l-1}$,
$\sCurlDagger_{k,l}:\mathcal{S}_{k,l}\rightarrow \mathcal{S}_{k-1,l+1}$ and
$\sTwist_{k,l}:\mathcal{S}_{k,l}\rightarrow \mathcal{S}_{k+1,l+1}$ as
\begin{subequations}
\begin{align}
(\sDiv_{k,l}\varphi)_{A_1\dots A_{k-1}}{}^{A_1'\dots A_{l-1}'}\equiv{}&
\nabla^{BB'}\varphi_{A_1\dots A_{k-1}B}{}^{A_1'\dots A_{l-1}'}{}_{B'},\\
(\sCurl_{k,l}\varphi)_{A_1\dots A_{k+1}}{}^{A_1'\dots A_{l-1}'}\equiv{}&
\nabla_{(A_1}{}^{B'}\varphi_{A_2\dots A_{k+1})}{}^{A_1'\dots A_{l-1}'}{}_{B'},\\
(\sCurlDagger_{k,l}\varphi)_{A_1\dots A_{k-1}}{}^{A_1'\dots A_{l+1}'}\equiv{}&
\nabla^{B(A_1'}\varphi_{A_1\dots A_{k-1}B}{}^{A_2'\dots A_{l+1}')},\\
(\sTwist_{k,l}\varphi)_{A_1\dots A_{k+1}}{}^{A_1'\dots A_{l+1}'}\equiv{}&
\nabla_{(A_1}{}^{(A_1'}\varphi_{A_2\dots A_{k+1})}{}^{A_2'\dots A_{l+1}')}.
\end{align}
\end{subequations}
\end{definition}
\begin{remark}
\begin{enumerate}
\item{
These operators are all conformally covariant, but the conformal weight differs between the operators. See \cite[Section~6.7]{PenRin86} for details.}
\item{The left Dirac-Weyl and Maxwell equations can be written as $(\sCurlDagger_{1,0}\phi)_{A'}=0$ and $(\sCurlDagger_{2,0}\phi)_{AA'}=0$ respectively. Similarly the right equations can be written in terms of the $\sCurl$ operator.
}
\end{enumerate}
\end{remark}
The operator $\sDiv_{k,l}$ only makes sense when $k\geq 1$ and $l\geq 1$. Likewise $\sCurl_{k,l}$ is
defined only if $l \geq 1$ and	$\sCurlDagger_{k,l}$ only if $k\geq 1$. To make a clean presentation, we will use formulae where invalid operators appear for some choices of $k$ and $l$. However, the operators will always be multiplied with a factor that vanishes for these invalid choices of $k$ and $l$.
From the definition it is clear that the complex conjugates of 
$(\sDiv_{k,l}\varphi)$,
$(\sCurl_{k,l}\varphi)$,
$(\sCurlDagger_{k,l}\varphi)$ and
$(\sTwist_{k,l}\varphi)$ are
$(\sDiv_{l,k}\bar\varphi)$,
$(\sCurlDagger_{l,k}\bar\varphi)$,
$(\sCurl_{l,k}\bar\varphi)$ and
$(\sTwist_{l,k}\bar\varphi)$ respectively, with the appropriate indices.

The main motivation for the introduction of these operators is the irreducible decomposition of the covariant derivative of a symmetric spinor field.
\begin{lemma}
For any $\varphi_{A_1\dots A_k}{}^{A_{1}'\dots A_{l}'}\in \mathcal{S}_{k,l}$, we have the irreducible decomposition
\begin{align}
\nabla_{A_1}{}^{A_1'}\varphi{}_{A_2\dots A_{k+1}}{}^{A_2'\dots A_{l+1}'}={}&
(\sTwist_{k,l}\varphi){}_{A_1\dots A_{k+1}}{}^{A_1'\dots A_{l+1}'}\nonumber\\
&-\tfrac{l}{l+1}\bar\epsilon^{A_1'(A_2'}(\sCurl_{k,l}\varphi){}_{A_1\dots A_{k+1}}{}^{A_3'\dots A_{l+1}')}\nonumber\\
&-\tfrac{k}{k+1}\epsilon_{A_1(A_2}(\sCurlDagger_{k,l}\varphi){}_{A_3\dots A_{k+1})}{}^{A_1'\dots A_{l+1}'}\nonumber\\
&+\tfrac{kl}{(k+1)(l+1)}\epsilon_{A_1(A_2}\bar\epsilon^{A_1'(A_2'}(\sDiv_{k,l}\varphi){}_{A_3\dots A_{k+1})}{}^{A_3'\dots A_{l+1}')}.\label{eq:IrrDecGeneralDer}
\end{align}
\end{lemma}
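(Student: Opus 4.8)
The plan is to reduce the statement to one elementary two-spinor identity and apply it twice, once on each index type. The building block is the resolution of a single index against an already-symmetric block: if $\psi_{A_1 A_2\cdots A_{k+1}}$ is symmetric in $A_2,\dots,A_{k+1}$, then
\[
\psi_{A_1 A_2\cdots A_{k+1}} = \psi_{(A_1 A_2\cdots A_{k+1})} + \tfrac{k}{k+1}\,\epsilon_{A_1(A_2}\psi^{B}{}_{|B|A_3\cdots A_{k+1})},
\]
which is exactly the statement that in two-spinor calculus the antisymmetric part of any pair of indices is $\epsilon$ times the corresponding trace, specialized and iterated (this is Proposition 3.3.54 of \cite{PenRin84}). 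First I would verify this building block and its coefficient by antisymmetrizing $A_1$ against the block and counting the $k$ available pairs; the case $k=1$ is just $\psi_{AB}=\psi_{(AB)}+\tfrac12\epsilon_{AB}\psi^{C}{}_{C}$.

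Next I would apply this identity to $\nabla_{A_1}{}^{A_1'}\varphi_{A_2\cdots A_{k+1}}{}^{A_2'\cdots A_{l+1}'}$ separately on the two index groups, using that the primed and unprimed resolutions act on disjoint indices and therefore commute. Resolving the upper primed index $A_1'$ against the symmetric block $A_2'\cdots A_{l+1}'$ splits the derivative into a primed-symmetric piece plus a $\bar\epsilon^{A_1'(A_2'}$ term with coefficient $\tfrac{l}{l+1}$, whose trace is exactly the contraction $\nabla_{(A_1}{}^{B'}\varphi_{\cdots)}{}_{B'}$ defining $\sCurl$. Resolving the lower unprimed index $A_1$ against $A_2\cdots A_{k+1}$ does the analogous thing, producing the $\epsilon_{A_1(A_2}$ term with coefficient $\tfrac{k}{k+1}$ and trace $\nabla^{B(A_1'}\varphi_{\cdots B}{}^{\cdots)}$ defining $\sCurlDagger$. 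Performing both resolutions yields four terms; their symmetric-symmetric, symmetric-trace, trace-symmetric and trace-trace pieces are by definition $\sTwist_{k,l}\varphi$, $\sCurl_{k,l}\varphi$, $\sCurlDagger_{k,l}\varphi$ and $\sDiv_{k,l}\varphi$, carrying the product coefficients $1$, $\tfrac{l}{l+1}$, $\tfrac{k}{k+1}$ and $\tfrac{kl}{(k+1)(l+1)}$ as claimed.

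The substantive difficulty is entirely in the bookkeeping of the two relative signs. The minus signs in front of the $\sCurl$ and $\sCurlDagger$ terms appear because putting each trace into the precise shape fixed by the definitions of those operators requires raising or lowering the contracted index through $\epsilon$ or $\bar\epsilon$ (equivalently, moving the dummy index past the derivative's own index), and each such move through an antisymmetric metric spinor costs a sign; the trace-trace $\sDiv$ term incurs both signs and so comes out positive. I expect to pin these signs down once by checking the smallest nontrivial case, say $k=l=1$ where the left-hand side is $\nabla_{A_1}{}^{A_1'}\varphi_{A_2}{}^{A_2'}$, after which the general coefficients are forced by the combinatorial identity with no further sign ambiguity.
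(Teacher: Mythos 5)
Your proposal is correct and takes essentially the same route as the paper: the paper's proof likewise invokes Proposition 3.3.54 of Penrose--Rindler to fix the form of the decomposition and then finds the coefficients ``by contracting indices and partially expanding the symmetries,'' which is precisely what your iterated single-index resolutions and trace-counting accomplish. Your packaging as two commuting resolutions (unprimed and primed) with the product coefficients $1$, $\tfrac{l}{l+1}$, $\tfrac{k}{k+1}$, $\tfrac{kl}{(k+1)(l+1)}$ is just a more explicit write-up of that argument, and deferring the two convention-dependent signs to a lowest-valence check (where, note, the Penrose--Rindler identity reads $\psi_{AB}=\psi_{(AB)}+\tfrac{1}{2}\epsilon_{AB}\psi_{C}{}^{C}$, so watch the order of the raised and lowered trace indices) is a sound way to finish.
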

\begin{proof}
It follows from in \cite[Proposition 3.3.54]{PenRin84} that the irreducible decomposition must have this form. The coefficients are then found by contracting indices and partially expanding the symmetries.
\end{proof}
With this notation, the Bianchi system takes the form
\begin{subequations}
\begin{align}
(\sDiv_{2,2} \Phi)_{AA'}={}&-3 (\sTwist_{0,0} \Lambda)_{AA'},\\
(\sCurlDagger_{4,0} \Psi)_{ABCA'}={}&(\sCurl_{2,2} \Phi)_{ABCA'}.
\end{align}
\end{subequations}
In the rest of the paper we will use these equations every time the left hand sides appear in the calculations.

With the definitions above, a Killing spinor of valence $(k,l)$ is an element 
$L_{A\cdots F}{}^{ A'\dots F'} \in \ker \sTwist_{k,l}$, a conformal Killing vector is a Killing spinor of valence $(1,1)$, and a  
trace-less conformal Killing tensor is a Killing spinor of valence $(2,2)$. 
We further introduce the following operators, acting on a valence $(2,2)$ Killing spinor. 
\begin{definition}
For $L_{AB}{}^{ A'B'} \in \ker \sTwist_{2,2}$, define
\begin{subequations}
\begin{align} 
(\ObstrZero L)_A{}^{A'} \equiv {}&\tfrac{1}{3} \Psi_{ABCD} (\sCurl_{2,2} L)^{BCDA'}
 +  L^{BCA'B'} (\sCurl_{2,2} \Phi)_{ABCB'}\nonumber\\
& + \tfrac{1}{3} \bar\Psi^{A'}{}_{B'C'D'} (\sCurlDagger_{2,2} L)_{A}{}^{B'C'D'}
 + L_{A}{}^{BB'C'} (\sCurlDagger_{2,2} \Phi)_{B}{}^{A'}{}_{B'C'}. \\
(\ObstrOne L)_{AB}{}^{A'B'} \equiv {}&L{}^{CDA'B'} \Psi_{ABCD}
 - L{}_{AB}{}^{C'D'} \bar\Psi^{A'B'}{}_{C'D'} 
\end{align}
\end{subequations}
\end{definition}
The operators $\ObstrZero$ and $\ObstrOne$ are the right hand sides of \eqref{eq:A0} and \eqref{eq:A1} in conditions \ref{point:A0} and \ref{point:A1} respectively. They will play an important role in the conditions for the existence of symmetry operators. 

Given a conformal Killing vector $\xi^{AA'}$, we follow \cite[Equations (2) and (15)]{MR2056970}, see also \cite{anco:pohjanpelto:2003RSPSA.459.1215A}, and define a conformally weighted Lie derivative acting on a symmetric valance $(2s,0)$ spinor field as follows 
\begin{definition}
For $\xi^{AA'} \in \ker \sTwist_{1,1}$, and $\varphi_{A_1\dots A_{2s}}\in \mathcal{S}_{2s,0}$, we define
\begin{align}
\hat{\mathcal{L}}_{\xi}\varphi_{A_1\dots A_{2s}}\equiv{}&\xi^{BB'} \nabla_{BB'}\varphi_{A_1\dots A_{2s}}+s \varphi_{B(A_2\dots A_{2s}} \nabla_{A_1)B'}\xi^{BB'}
 + \tfrac{1-s}{4} \varphi_{A_1\dots A_{2s}} \nabla^{CC'}\xi_{CC'}.
\end{align}
\end{definition}
This operator turns out to be important when we describe first order symmetry operators. See Section~\ref{sec:symopsecondDiracfactored} for further discussion.

\subsection{Commutator relations}
Let $\varphi_{A_1\dots A_k}{}^{A_{1}'\dots A_{l}'}\in \mathcal{S}_{k,l}$ and define the standard commutators
\begin{equation}
\square_{AB}\equiv \nabla_{(A|A'|}\nabla_{B)}{}^{A'} \qquad \text{ and }\qquad \square_{A'B'}\equiv \nabla_{A(A'}\nabla^{A}{}_{B')}.
\end{equation}
Acting on spinors, these commutators can always be written in terms of curvature spinors as described  in \cite[Section 4.9]{PenRin84}.

\begin{lemma}\label{lemma:commutators}
The operators $\sDiv$, $\sCurl$, $\sCurlDagger$ and $\sTwist$ satisfies the following commutator relations
\begin{subequations}
\begin{align}
(\sDiv_{k+1,l-1} &\sCurl_{k,l} \varphi){}_{A_1\dots A_{k}}{}^{A_1'\dots A_{l-2}'}\nonumber\\*
 ={}&\tfrac{k}{k+1}(\sCurl_{k-1,l-1} \sDiv_{k,l} \varphi){}_{A_1\dots A_{k}}{}^{A_1'\dots A_{l-2}'}
-\square_{B'C'}\varphi{}_{A_1\dots A_{k}}{}^{A_1'\dots A_{l-2}'B'C'},
\quad k\geq 0, l\geq 2, \label{eq:DivCurl}\\
(\sDiv_{k-1,l+1} &\sCurlDagger_{k,l} \varphi){}_{A_1\dots A_{k-2}}{}^{A_1'\dots A_{l}'}\nonumber\\*
 ={}&\tfrac{l}{l+1}(\sCurlDagger_{k-1,l-1} \sDiv_{k,l} \varphi){}_{A_1\dots A_{k-2}}{}^{A_1'\dots A_{l}'}
-\square_{BC}\varphi{}_{A_1\dots A_{k-2}}{}^{BCA_1'\dots A_{l}'},
\quad k\geq 2, l\geq 0,\label{eq:DivCurlDagger}\\
(\sCurl_{k+1,l+1}&\sTwist_{k,l} \varphi){}_{A_1\dots A_{k+2}}{}^{A_1'\dots A_{l}'}\nonumber\\*
={}&\tfrac{l}{l+1}(\sTwist_{k+1,l-1} \sCurl_{k,l} \varphi){}_{A_1\dots A_{k+2}}{}^{A_1'\dots A_{l}'}
-\square_{(A_1A_2}\varphi{}_{A_3\dots A_{k+2})}{}^{A_1'\dots A_{l}'},
\quad k\geq 0, l\geq 0,\label{eq:CurlTwist}\\
(\sCurlDagger_{k+1,l+1}&\sTwist_{k,l}\varphi){}_{A_1\dots A_{k}}{}^{A_1'\dots A_{l+2}'}\nonumber\\*
={}&
\tfrac{k}{k+1}(\sTwist_{k-1,l+1} \sCurlDagger_{k,l} \varphi){}_{A_1\dots A_{k}}{}^{A_1'\dots A_{l+2}'}
-\square^{(A_1'A_2'}\varphi{}_{A_1\dots A_{k}}{}^{A_3'\dots A_{l+2}')},
\quad k\geq 0, l\geq 0,\label{eq:CurlDaggerTwist}\\
(\sDiv_{k+1,l+1}&\sTwist_{k,l}\varphi)_{A_1\dots A_k}{}^{A_1'\dots A_l'}\nonumber\\*
={}&
-(\tfrac{1}{k+1}+\tfrac{1}{l+1})(\sCurl_{k-1,l+1}\sCurlDagger_{k,l}\varphi)_{A_1\dots A_k}{}^{A_1'\dots A_l'}
+\tfrac{l(l+2)}{(l+1)^2}(\sTwist_{k-1,l-1}\sDiv_{k,l}\varphi)_{A_1\dots A_k}{}^{A_1'\dots A_l'}\nonumber\\*
&-\tfrac{l+2}{l+1}\square^B{}_{(A_1}\varphi_{A_2\dots A_k)B}{}^{A_1'\dots A_l'}
-\tfrac{l}{l+1}\square^{B'(A_1'}\varphi_{A_1\dots A_k}{}^{A_2'\dots A_l')}{}_{B'},
\quad k\geq 1, l\geq 0,\label{eq:DivTwistCurlCurlDagger}\\
(\sDiv_{k+1,l+1}&\sTwist_{k,l}\varphi)_{A_1\dots A_k}{}^{A_1'\dots A_l'}\nonumber\\*
={}&
-(\tfrac{1}{k+1}+\tfrac{1}{l+1})(\sCurlDagger_{k+1,l-1}\sCurl_{k,l}\varphi)_{A_1\dots A_k}{}^{A_1'\dots A_l'}
+\tfrac{k(k+2)}{(k+1)^2}(\sTwist_{k-1,l-1}\sDiv_{k,l}\varphi)_{A_1\dots A_k}{}^{A_1'\dots A_l'}\nonumber\\*
&-\tfrac{k}{k+1}\square^B{}_{(A_1}\varphi_{A_2\dots A_k)B}{}^{A_1'\dots A_l'}
-\tfrac{k+2}{k+1}\square^{B'(A_1'}\varphi_{A_1\dots A_k}{}^{A_2'\dots A_l')}{}_{B'},
\quad k\geq 0, l\geq 1,\label{eq:DivTwistCurlDaggerCurl}\\
(\sCurl_{k-1,l+1}&\sCurlDagger_{k,l}\varphi)_{A_1\dots A_k}{}^{A_1'\dots A_l'}\nonumber\\*
={}&
(\sCurlDagger_{k+1,l-1}\sCurl_{k,l}\varphi)_{A_1\dots A_k}{}^{A_1'\dots A_l'}
+(\tfrac{1}{k+1}-\tfrac{1}{l+1})(\sTwist_{k-1,l-1}\sDiv_{k,l}\varphi)_{A_1\dots A_k}{}^{A_1'\dots A_l'}\nonumber\\*
&-\square_{(A_1}{}^{B}\varphi_{A_2\dots A_k)B}{}^{A_1'\dots A_l'}
+\square^{B'(A_1'}\varphi_{A_1\dots A_k}{}^{A_2'\dots A_l')}{}_{B'},
\quad k\geq 1, l\geq 1.\label{eq:CurlCurlDagger}
\end{align}
\end{subequations}
\end{lemma}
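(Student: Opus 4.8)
The plan is to derive every identity in the list from a single algebraic source: the irreducible splitting of the second covariant derivative $\nabla_{AA'}\nabla_{BB'}$ over the unprimed pair $AB$ and the primed pair $A'B'$. Decomposing $\nabla_{AA'}\nabla_{BB'}$ into its totally symmetric part together with the trace terms — using the definitions $\square_{AB}=\nabla_{(A|A'|}\nabla_{B)}{}^{A'}$ and $\square_{A'B'}=\nabla_{A(A'}\nabla^{A}{}_{B')}$ for the antisymmetric pieces and the d'Alembertian for the fully antisymmetric piece — writes it as a totally symmetric double derivative plus terms carried by $\epsilon_{AB}$, $\bar\epsilon_{A'B'}$ and the commutators. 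Since the fully symmetric double derivative is symmetric in its two index blocks, once every operator composition on both sides of a given relation is expanded through the definitions into a symmetrized and contracted form of $\nabla\nabla\varphi$, the symmetric contributions agree up to a reorganization of symmetrizations, and the leftover is exactly the $\square$-terms recorded on the right-hand sides.

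For a representative case such as \eqref{eq:DivCurl} or \eqref{eq:CurlTwist}, I would first expand the left-hand composition: apply the inner operator via its definition, then the outer operator, producing $\nabla\nabla\varphi$ with a prescribed pattern of index symmetrization and one or two contractions. Substituting the splitting above, the totally symmetric second-derivative term must be re-expressed so that its symmetrization matches that of the reversed composition appearing on the right; this is carried out by expanding the symmetrization over the larger index set into partial symmetrizations with $\epsilon$/$\bar\epsilon$ contractions, using the standard symmetrization identities of \cite{PenRin84}. The combinatorial factors generated at this step are precisely the rational coefficients such as $\tfrac{k}{k+1}$, $\tfrac{l(l+2)}{(l+1)^2}$ and $\tfrac{1}{k+1}\pm\tfrac{1}{l+1}$ occurring in the statement, while the collected $\square$-terms reproduce the curvature contributions. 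The basic decomposition \eqref{eq:IrrDecGeneralDer} is used throughout to re-identify the symmetrized/contracted second derivatives as genuine compositions of the four fundamental operators.

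Not all seven relations need independent derivations. The two expansions \eqref{eq:DivTwistCurlCurlDagger} and \eqref{eq:DivTwistCurlDaggerCurl} of $\sDiv_{k+1,l+1}\sTwist_{k,l}$ share the same left-hand side, so subtracting them eliminates $\sDiv\sTwist$; dividing by the common coefficient $\tfrac{1}{k+1}+\tfrac{1}{l+1}$ and using $\tfrac{l(l+2)}{(l+1)^2}-\tfrac{k(k+2)}{(k+1)^2}=(\tfrac{1}{k+1}-\tfrac{1}{l+1})(\tfrac{1}{k+1}+\tfrac{1}{l+1})$ then reproduces \eqref{eq:CurlCurlDagger} exactly, with the correct signs and coefficients. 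It therefore suffices to establish the mixed relations \eqref{eq:DivCurl}--\eqref{eq:CurlDaggerTwist} and the two $\sDiv\sTwist$ expansions directly. The stated restrictions on $k$ and $l$ only guarantee that the operators involved are defined; where an invalid operator formally appears it is multiplied by a vanishing combinatorial factor, consistent with the convention adopted before the lemma.

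The main obstacle is the bookkeeping of the nested symmetrizations and the resulting numerical coefficients: expanding a symmetrization over $k+2$ unprimed (or $l+2$ primed) indices into the partial symmetrizations demanded by the target operators, and tracking how many $\epsilon$-contractions each term contributes, is where the precise coefficients are pinned down and where sign or counting errors naturally arise. There is no conceptual difficulty beyond this — each identity is forced once the symmetric/$\square$ split is inserted — which is exactly why the computation is well suited to, and was in fact performed with, the automated symmetric-spinor manipulations of the \emph{SymManipulator} package.
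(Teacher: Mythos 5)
Your proposal is correct in substance and its computational content is essentially that of the paper: expand the compositions through the definitions, use the irreducible decomposition \eqref{eq:IrrDecGeneralDer} to re-identify symmetrized second derivatives as compositions of the fundamental operators, and collect the $\square$-terms; in particular, your derivation of \eqref{eq:CurlCurlDagger} as the difference of \eqref{eq:DivTwistCurlCurlDagger} and \eqref{eq:DivTwistCurlDaggerCurl} is exactly the paper's reduction, and your coefficient check $\tfrac{l(l+2)}{(l+1)^2}-\tfrac{k(k+2)}{(k+1)^2}=\bigl(\tfrac{1}{k+1}-\tfrac{1}{l+1}\bigr)\bigl(\tfrac{1}{k+1}+\tfrac{1}{l+1}\bigr)$ is verified correctly. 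Two organizational differences are worth noting. First, the paper cuts the work further than you do: since complex conjugation swaps $\sCurl\leftrightarrow\sCurlDagger$ and $(k,l)\to(l,k)$, the pairs \eqref{eq:DivCurl}/\eqref{eq:DivCurlDagger}, \eqref{eq:CurlTwist}/\eqref{eq:CurlDaggerTwist} and \eqref{eq:DivTwistCurlCurlDagger}/\eqref{eq:DivTwistCurlDaggerCurl} are conjugates of one another, so only three identities need direct proof, whereas you propose six. Second, the paper's direct computations do not substitute a global decomposition of $\nabla_{AA'}\nabla_{BB'}$ in one shot; instead they partially expand one symmetrization at a time and exploit a small fixed-point trick: in the proofs of \eqref{eq:CurlDaggerTwist} and \eqref{eq:DivTwistCurlCurlDagger} the target term reappears on the right with coefficient $\tfrac{1}{k+1}$ and is solved for, which keeps each step's bookkeeping to a single partial expansion. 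Your single-source strategy is sound and arguably more systematic, but, as you acknowledge, it concentrates all the combinatorial risk in the re-expansion of the large symmetrizations, which is precisely where the paper's incremental isolation procedure (and its machine verification) is gentler; neither difference amounts to a gap.
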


\begin{proof}
We first observe that \eqref{eq:DivCurl} and \eqref{eq:DivCurlDagger} are related by complex conjugation. Likewise \eqref{eq:CurlTwist} and \eqref{eq:CurlDaggerTwist} as well as \eqref{eq:DivTwistCurlCurlDagger} and \eqref{eq:DivTwistCurlDaggerCurl} are also related by complex conjugation. Furthermore, \eqref{eq:CurlCurlDagger} is given by the difference between \eqref{eq:DivTwistCurlCurlDagger} and \eqref{eq:DivTwistCurlDaggerCurl}.  It is therefore enough to prove \eqref{eq:DivCurl},
\eqref{eq:CurlDaggerTwist} and \eqref{eq:DivTwistCurlCurlDagger}. We consider each in turn.

\begin{itemize}
\item{We first prove \eqref{eq:DivCurl}.
We partially expand the symmetry, identify the commutator in one term, and commute derivatives in the other.
\begin{align*}
(\sDiv_{k+1,l-1}& \sCurl_{k,l} \varphi){}_{A_1\dots A_{k}}{}^{A_1'\dots A_{l-2}'}\\
 ={}&\nabla^{BB'}\nabla_{(A_1}{}^{C'}\varphi_{A_2\dots A_k B)}{}^{A_1'\dots A_{l-2}'}{}_{B'C'}\\
 ={}&\tfrac{1}{k+1}\nabla^{B(B'}\nabla_{B}{}^{C')}\varphi_{A_1\dots A_k}{}^{A_1'\dots A_{l-2}'}{}_{B'C'}
+\tfrac{k}{k+1}\nabla^{B(B'}\nabla_{(A_1}{}^{C')}\varphi_{A_2\dots A_k) B}{}^{A_1'\dots A_{l-2}'}{}_{B'C'}\\
 ={}&-\tfrac{1}{k+1}\square^{B'C'}\varphi_{A_1\dots A_k}{}^{A_1'\dots A_{l-2}'}{}_{B'C'}
+\tfrac{k}{k+1}\epsilon^B{}_{(A_1}\square^{B'C'}\varphi_{A_2\dots A_k) B}{}^{A_1'\dots A_{l-2}'}{}_{B'C'}\\
&+\tfrac{k}{k+1}\nabla_{(A_1}{}^{C'}\nabla^{BB'}\varphi_{A_2\dots A_k) B}{}^{A_1'\dots A_{l-2}'}{}_{B'C'}\\
 ={}&
\tfrac{k}{k+1}(\sCurl_{k-1,l-1} \sDiv_{k,l} \varphi){}_{A_1\dots A_{k}}{}^{A_1'\dots A_{l-2}'}
-\square_{B'C'}\varphi{}_{A_1\dots A_{k}}{}^{A_1'\dots A_{l-2}'B'C'}.
\end{align*}
}
\item{
To prove \eqref{eq:CurlDaggerTwist}, we first partially expand the symmetrization over the unprimed indices in the irreducible decomposition \eqref{eq:IrrDecGeneralDer} and symmetrizing over the primed indices. This gives
\begin{align}
\nabla_{A_1}{}^{(A_2'}\varphi_{A_2\dots A_k B}{}^{A_3'\dots A_{l+2}')}={}&
(\sTwist_{k,l}\varphi){}_{A_1\dots A_{k}B}{}^{A_2'\dots A_{l+2}'}
-\tfrac{1}{k+1}\epsilon_{A_1B}(\sCurlDagger_{k,l}\varphi){}_{A_2\dots A_{k}}{}^{A_2'\dots A_{l+2}'}\nonumber\\
&-\tfrac{k-1}{k+1}\epsilon_{A_1(A_2}(\sCurlDagger_{k,l}\varphi){}_{A_3\dots A_{k})B}{}^{A_2'\dots A_{l+2}'}.\label{eq:Irrdecvarphihelp}
\end{align}
Using the definitions of $\sTwist$ and $\sCurlDagger$, commuting derivatives and using \eqref{eq:Irrdecvarphihelp}, we have
\begin{align}
(\sTwist_{k-1,l+1} \sCurlDagger_{k,l} \varphi){}_{A_1\dots A_{k}}{}^{A_1'\dots A_{l+2}'}
={}&
\nabla_{(A_1}{}^{(A_1'}\nabla^{|B|A_2'}\varphi_{A_2\dots A_{k})B}{}^{A_3'\dots A_{l+2}')} \nonumber\\
={}&
\square^{(A_1'A_2'}\varphi{}_{A_1\dots A_{k}}{}^{A_3'\dots A_{l}')}
+\nabla^{B(A_1'}\nabla_{(A_1}{}^{A_2'}\varphi_{A_2\dots A_{k})B}{}^{A_3'\dots A_{l+2}')}\nonumber\\
={}&
\square^{(A_1'A_2'}\varphi{}_{A_1\dots A_{k}}{}^{A_3'\dots A_{l}')}
+\nabla^{B(A_1'}(\sTwist_{k,l}\varphi){}_{A_1\dots A_{k}B}{}^{A_2'\dots A_{l+2}')}\nonumber\\
&-\tfrac{1}{k+1}\epsilon_{(A_1|B|}\nabla^{B(A_1'}(\sCurlDagger_{k,l}\varphi){}_{A_2\dots A_{k})}{}^{A_2'\dots A_{l+2}')}\nonumber\\
={}&\square^{(A_1'A_2'}\varphi{}_{A_1\dots A_{k}}{}^{A_3'\dots A_{l}')}
+(\sCurlDagger_{k+1,l+1}\sTwist_{k,l}\varphi){}_{A_1\dots A_{k}}{}^{A_1'\dots A_{l+2}'}\nonumber\\
&+\tfrac{1}{k+1}(\sTwist_{k-1,l+1}\sCurlDagger_{k,l}\varphi){}_{A_1\dots A_{k}}{}^{A_1'\dots A_{l+2}'}.
\end{align}
Isolating the $\sCurlDagger\sTwist$-term gives \eqref{eq:CurlDaggerTwist}.
}
\item{
Finally to prove \eqref{eq:DivTwistCurlCurlDagger}, we assume $k\geq 1$ and observe 
\begin{align}
(\sDiv_{k+1,l+1}&\sTwist_{k,l}\varphi)_{A_1\dots A_k}{}^{A_1'\dots A_l'}\nonumber\\
={}&-\nabla^B{}_{B'}\nabla_{(B}{}^{(B'}\varphi_{A_1\dots A_k)}{}^{A_1'\dots A_l')}\nonumber\\
={}&
-\tfrac{1}{k+1}\nabla^B{}_{B'}\nabla_{B}{}^{(B'}\varphi_{A_1\dots A_k}{}^{A_1'\dots A_l')}
-\tfrac{k}{k+1}\nabla^B{}_{B'}\nabla_{(A_1}{}^{(B'}\varphi_{A_2\dots A_k)B}{}^{A_1'\dots A_l')}\nonumber\\
={}&
\tfrac{1}{k+1}(\sDiv_{k+1,l+1}\sTwist_{k,l}\varphi)_{A_1\dots A_k}{}^{A_1'\dots A_l'}
-\tfrac{k}{(k+1)^2}(\sCurl_{k-1,l+1}\sCurlDagger_{k,l}\varphi)_{A_1\dots A_k}{}^{A_1'\dots A_l'}\nonumber\\
&-\tfrac{k}{k+1}\nabla^B{}_{B'}\nabla_{(A_1}{}^{(B'}\varphi_{A_2\dots A_k)B}{}^{A_1'\dots A_l')},
\end{align}
where we in the last step used the irreducible decomposition \eqref{eq:IrrDecGeneralDer} on the first term.
We can solve for the $\sDiv\sTwist$-term from which it follows that
\begin{align}
(\sDiv_{k+1,l+1}&\sTwist_{k,l}\varphi)_{A_1\dots A_k}{}^{A_1'\dots A_l'}\nonumber\\
={}&
-\tfrac{1}{k+1}(\sCurl_{k-1,l+1}\sCurlDagger_{k,l}\varphi)_{A_1\dots A_k}{}^{A_1'\dots A_l'}
-\nabla^B{}_{B'}\nabla_{(A_1}{}^{(B'}\varphi_{A_2\dots A_k)B}{}^{A_1'\dots A_l')}\nonumber\\
={}&
-\tfrac{1}{k+1}(\sCurl_{k-1,l+1}\sCurlDagger_{k,l}\varphi)_{A_1\dots A_k}{}^{A_1'\dots A_l'}
-\tfrac{1}{l+1}\nabla^B{}_{B'}\nabla_{(A_1}{}^{B'}\varphi_{A_2\dots A_k)B}{}^{A_1'\dots A_l'}\nonumber\\
&-\tfrac{l}{l+1}\nabla^B{}_{B'}\nabla_{(A_1}{}^{(A_1'}\varphi_{A_2\dots A_k)B}{}^{A_2'\dots A_l')B'}
\nonumber\\
={}&
-\tfrac{1}{k+1}(\sCurl_{k-1,l+1}\sCurlDagger_{k,l}\varphi)_{A_1\dots A_k}{}^{A_1'\dots A_l'}
-\tfrac{1}{l+1}\nabla_{(A_1}{}^{B'}\nabla^{B}{}_{|B'|}\varphi_{A_2\dots A_k)B}{}^{A_1'\dots A_l'}\nonumber\\
&
-\tfrac{2}{l+1}\square^B{}_{(A_1}\varphi_{A_2\dots A_k)B}{}^{A_1'\dots A_l'}
-\tfrac{l}{l+1}\nabla_{(A_1}{}^{(A_1'}\nabla^{|B|}{}_{|B'|}\varphi_{A_2\dots A_k)B}{}^{A_2'\dots A_l')B'}\nonumber\\
&-\tfrac{l}{l+1}\square^B{}_{(A_1}\varphi_{A_2\dots A_k)B}{}^{A_1'\dots A_l'}
-\tfrac{l}{l+1}\square^{B'(A_1'}\varphi_{A_1\dots A_k}{}^{A_2'\dots A_l')}{}_{B'}
\nonumber\\
={}&
-(\tfrac{1}{k+1}+\tfrac{1}{l+1})(\sCurl_{k-1,l+1}\sCurlDagger_{k,l}\varphi)_{A_1\dots A_k}{}^{A_1'\dots A_l'}
+\tfrac{l(l+2)}{(l+1)^2}(\sTwist_{k-1,l-1}\sDiv_{k,l}\varphi)_{A_1\dots A_k}{}^{A_1'\dots A_l'}\nonumber\\
&-\tfrac{l+2}{l+1}\square^B{}_{(A_1}\varphi_{A_2\dots A_k)B}{}^{A_1'\dots A_l'}
-\tfrac{l}{l+1}\square^{B'(A_1'}\varphi_{A_1\dots A_k}{}^{A_2'\dots A_l')}{}_{B'}.
\end{align}
}
\end{itemize}
\end{proof}
\begin{remark}
The operators $\sDiv$, $\sCurl$, $\sCurlDagger$ and $\sTwist$ together with the irreducible decomposition \eqref{eq:IrrDecGeneralDer} and the relations in Lemma~\ref{lemma:commutators} have all been implemented in the \emph{SymManipulator} package version 0.9.0 \cite{Bae11a}.
\end{remark}

\subsection{Integrability conditions for Killing spinors}\label{sec:integrabilitycond}
Here we demonstrate a procedure for obtaining an integrability condition for a Killing spinor of arbitrary valence.
Let $\kappa_{A_1\dots A_k}{}^{A'_1\dots A'_l}\in \ker \sTwist_{k,l}$.
By applying the $\sCurl$ operator $l+1$ times to the Killing spinor equation, and repeatedly commute derivatives with \eqref{eq:CurlTwist} we get
\begin{align}
0={}& (\underbrace{\sCurl_{k+l+1,1}\sCurl_{k+l,2} \cdots \sCurl_{k+2,l}\sCurl_{k+1,l+1}}_{l+1}\sTwist_{k,l}\kappa)_{A_{1}\dots A_{k+l+2}}\nonumber\\
={}&\frac{l}{l+1}(\sCurl_{k+l+1,1}\sCurl_{k+l,2} \cdots \sCurl_{k+2,l}\sTwist_{k+1,l-1}\sCurl_{k,l}\kappa)_{A_{1}\dots A_{k+l+2}}+\text{curvature terms}\nonumber\\
={}&\frac{1}{l+1}(\sCurl_{k+l+1,1}\sTwist_{k+l,0}\sCurl_{k+l-1,1} \cdots \sCurl_{k+1,l-1}\sCurl_{k,l}\kappa)_{A_{1}\dots A_{k+l+2}}+\text{curvature terms}\nonumber\\
={}&\text{curvature terms}.
\end{align}
Here, the curvature terms have $l-m$ derivatives of $\kappa$ and $m$ derivatives of the curvature spinors, where $0\leq m\leq l$. The main idea behind this is the observation that the commutator \eqref{eq:CurlTwist} acting on a spinor field without primed indices only gives curvature terms.
In the same way we can use \eqref{eq:CurlDaggerTwist} to get
\begin{align}
0={}& (\underbrace{\sCurlDagger_{1,k+l+1}\sCurlDagger_{2,k+l} \cdots \sCurlDagger_{k,l+2}\sCurlDagger_{k+1,l+1}}_{k+1}\sTwist_{k,l}\kappa)_{A'_{1}\dots A'_{k+l+2}}\nonumber\\
={}&\text{curvature terms}.
\end{align}

\subsection{Splitting equations into independent parts}\label{sec:indepspinors}
In our derivation of necessary conditions for the existence of symmetry operators, it is crucial that, at each fixed point in spacetime, we can freely choose the values of the Dirac-Weyl and the Maxwell field and of the symmetric components of any given order of their derivatives. The remaining components of the derivatives to a given order, which involve at least one pair of antisymmetrized indices, can be solved for using the field equations or curvature conditions. See sections \ref{sec:ReductionDirac} and \ref{sec:ReductionMaxwell} for detailed expressions. In the literature, the condition that the symmetric components can be freely and independently specified but that no other parts can be is referred to as the exactness of the set of fields \cite[Section~5.10]{PenRin84}. The symmetric components of the derivatives are exactly those that can be expressed in terms of the operator $\sTwist$. One can show that, in a globally hyperbolic spacetime, the Dirac-Weyl and Maxwell fields each form exact sets. However, it is not necessary for the spacetime to be globally hyperbolic for this condition to hold. If the spacetime is such that the fields fail to form an exact set, then our methods still give sufficient conditions for the existence of symmetry operators, but they may no longer be necessary.

The freedom to choose the symmetric components is used in this paper to show that equations of the type $L^{ABA'}(\sTwist_{1,0}\phi)_{ABA'}+M^{A}\phi_{A}=0$ with $(\sCurl_{1,0}^\dagger\phi)_{A'}=0$ forces $L^{(AB)A'}=0$ and $M^A=0$ because $(\sTwist_{1,0}\phi)_{ABA'}$ and $\phi_A$ can be freely and independently specified at a single point. Similar arguments involving derivatives of up to third order are also used. 

In several places we will have equations of the form
\begin{align}\label{eq:ProductExample}
0=S^{ABC}{}_{A'}(\sTwist_{1,0}\phi)_{AB}{}^{A'}T_{C},
\end{align}
where $T_{A}$ and $(\sTwist_{1,0}\phi)_{ABA'}$ are free and independent. In particular 
all linear combinations of the form $(\sTwist_{1,0}\phi)_{AB}{}^{A'}T_{C}$ will then span the space of spinors $W_{ABC}{}^{A'}=W_{(AB)C}{}^{A'}$. As the equation \eqref{eq:ProductExample} is linear we therefore get 
\begin{align}
0=S^{ABC}{}_{A'}W_{ABC}{}^{A'},
\end{align}
for all $W_{ABC}{}^{A'}=W_{(AB)C}{}^{A'}$.  We can then make an irreducible decomposition 
\begin{align}
W_{ABC}{}^{A'}={}&W_{(ABC)}{}^{A'}
 -  \tfrac{2}{3} W_{(A}{}^{D}{}_{|D|}{}^{A'}\epsilon_{B)C},
\end{align}
which gives
\begin{align}
0={}&(- \tfrac{1}{3} S_{B}{}^{C}{}_{CA'}
 -  \tfrac{1}{3} S^{C}{}_{BCA'}) W^{BA}{}_{A}{}^{A'}
 -  S_{ABCA'} W^{(ABC)A'}.
\end{align}
As $W_{ABC}{}^{A'}$ is free, its irreducible components $W_{(ABC)}{}^{A'}$ and $W_{A}{}^{D}{}_{D}{}^{A'}$ are free and independent. We can therefore conclude that 
\begin{subequations}
\begin{align}
0={}&S_{B}{}^{C}{}_{CA'}+ S^{C}{}_{BCA'},\\
0={}&  S_{(ABC)A'}.
\end{align}
\end{subequations}
Observe that we only get the symmetric part in the last equation due to the symmetry of $W_{(ABC)}{}^{A'}$.

Instead of introducing a new spinor $W_{ABC}{}^{A'}$ we will in the rest of the paper work directly with the irreducible decomposition of $(\sTwist_{1,0}\phi)_{AB}{}^{A'}T_{C}$ and get
\begin{align}
0={}& (- \tfrac{1}{3} S_{A}{}^{C}{}_{CA'}
 -  \tfrac{1}{3} S^{C}{}_{ACA'}) T_{B} (\sTwist_{1,0} \phi)^{ABA'}
- S_{ABCA'} T^{(A}(\sTwist_{1,0} \phi)^{BC)A'}.
\end{align}
The formal computations will be the same, and by the argument above, the symmetrized coefficients for the irreducible parts $T_{B} (\sTwist_{1,0} \phi)^{ABA'}$ and $T^{(A}(\sTwist_{1,0} \phi)^{BC)A'}$ will individually have to vanish.

\section{The conformal wave equation} \label{sec:spin0}
For completeness we give here a detailed description of the symmetry operators for the conformal wave equation. 
\begin{theorem}[\cite{michel:radoux:silhan:2013arXiv1308.1046M}]
The equation
\begin{align}
(\square+4 \Lambda) \phi={}&0,
\end{align}
has a symmetry operator $\phi \rightarrow \chi$ , with order less or equal to two, if and only if there are spinors $L_{AB}{}^{A'B'}=L_{(AB)}{}^{(A'B')}$, $P_{AA'}$ and $Q$ such that
\begin{subequations}
\begin{align}
(\sTwist_{2,2} L)_{ABC}{}^{A'B'C'}={}&0,\\
(\sTwist_{1,1} P)_{AB}{}^{A'B'}={}&0,\\
(\sTwist_{0,0} Q)_{A}{}^{A'}={}&\tfrac{2}{5}(\ObstrZero L)_A{}^{A'}.\label{eq:auxcondwave}
\end{align}
\end{subequations}
The symmetry operator then takes the form
\begin{align}
\chi={}&- \tfrac{3}{5} L^{ABA'B'} \Phi_{ABA'B'} \phi
 + Q \phi
 + \tfrac{1}{4} \phi (\sDiv_{1,1} P)
 + \tfrac{1}{15} \phi (\sDiv_{1,1} \sDiv_{2,2} L)
 + P^{AA'} (\sTwist_{0,0} \phi)_{AA'}\nonumber\\
& + \tfrac{2}{3} (\sDiv_{2,2} L)^{AA'} (\sTwist_{0,0} \phi)_{AA'}
 + L^{ABA'B'} (\sTwist_{1,1} \sTwist_{0,0} \phi)_{ABA'B'}.\label{eq:wavesymop1}
\end{align}
\end{theorem}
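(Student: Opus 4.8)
The plan is to determine all equivalence classes of second-order symmetry operators by starting from the most general ansatz and imposing the symmetry condition $(\square+4\Lambda)\chi=0$ on solutions. By the exactness discussion of Section~\ref{sec:indepspinors}, modulo the field equation any second-order operator is equivalent to one built only from the totally symmetric derivatives of $\phi$, so I would write
\begin{align}
\chi={}&L^{ABA'B'}(\sTwist_{1,1}\sTwist_{0,0}\phi)_{ABA'B'}
+N^{AA'}(\sTwist_{0,0}\phi)_{AA'}+M\phi,\nonumber
\end{align}
with unknown coefficients $L_{AB}{}^{A'B'}=L_{(AB)}{}^{(A'B')}$, $N_{AA'}$ and a scalar $M$. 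The theorem then amounts to finding necessary and sufficient conditions on the triple $(L,N,M)$ and to re-expressing $N$ and $M$ through $P$ and $Q$ as in~\eqref{eq:wavesymop1}.

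Next I would apply $(\square+4\Lambda)$ to this ansatz and use the commutator relations of Lemma~\ref{lemma:commutators} together with the Bianchi system to move all derivatives onto $\phi$, at each stage discarding multiples of $(\square+4\Lambda)\phi$ and its derivatives, which vanish on solutions. The outcome is an expression that is linear in the independent symmetric parts $\phi$, $(\sTwist_{0,0}\phi)_{AA'}$, $(\sTwist_{1,1}\sTwist_{0,0}\phi)_{ABA'B'}$ and $(\sTwist_{2,2}\sTwist_{1,1}\sTwist_{0,0}\phi)_{ABCA'B'C'}$, whose coefficients are built from $L,N,M$ and the curvature spinors. Since by exactness these can be prescribed freely and independently at a point, the splitting argument of Section~\ref{sec:indepspinors} forces each coefficient to vanish separately, and this produces the conditions.

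The conditions should then emerge one valence at a time. The coefficient of the valence $(3,3)$ part is a nonzero multiple of $(\sTwist_{2,2}L)$, so its vanishing is exactly the statement that $L$ is a valence $(2,2)$ Killing spinor. The valence $(2,2)$ coefficient should, after setting $P_{AA'}=N_{AA'}-\tfrac{2}{3}(\sDiv_{2,2}L)_{AA'}$, reduce to $(\sTwist_{1,1}P)=0$, i.e. $P$ is a conformal Killing vector; this is the origin of the combination $N=P+\tfrac{2}{3}\sDiv_{2,2}L$ appearing in~\eqref{eq:wavesymop1}. The valence $(1,1)$ coefficient, once $M$ is written in terms of $Q$, $P$ and $L$ as read off from~\eqref{eq:wavesymop1}, should collapse precisely to $(\sTwist_{0,0}Q)_A{}^{A'}=\tfrac{2}{5}(\ObstrZero L)_A{}^{A'}$, which is~\eqref{eq:auxcondwave} and auxiliary condition~\ref{point:A0}. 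Finally the residual valence $(0,0)$ coefficient involves $\square M$; I expect it to be implied by the $(1,1)$ condition, obtained by applying $\sDiv_{1,1}$ to the latter and invoking the Bianchi system, so that only the three stated conditions remain.

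The main obstacle will be the lowest-order reduction: bookkeeping the many curvature terms generated by the commutators so that the scalar obstruction organises itself exactly into $\ObstrZero L$ with the correct numerical factor $\tfrac{2}{5}$, rather than into some a priori larger set of independent curvature contractions. This is precisely where the canonicalisation in \emph{SymManipulator} is indispensable. Both implications follow from the same computation: necessity is the independent-parts argument just described, while sufficiency is obtained by reversing it, substituting the three conditions back and verifying directly that $(\square+4\Lambda)\chi=0$ on solutions, which simultaneously confirms that the operator has the explicit form~\eqref{eq:wavesymop1}.
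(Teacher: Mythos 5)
Your proposal is correct and coincides with the paper's approach: the paper does not prove this theorem independently (it is quoted from \cite{michel:radoux:silhan:2013arXiv1308.1046M}) but states that ``the proof can also be carried out using the same technique as in the rest of the paper'', and that technique---the general ansatz in symmetrized $\sTwist$-derivatives, order-by-order vanishing of the coefficients of the independent parts $\phi$, $(\sTwist_{0,0}\phi)_{AA'}$, $(\sTwist_{1,1}\sTwist_{0,0}\phi)_{ABA'B'}$ and $(\sTwist_{2,2}\sTwist_{1,1}\sTwist_{0,0}\phi)_{ABCA'B'C'}$ via the splitting argument of Section~\ref{sec:indepspinors}, commutator reductions, and the ans\"atze $N_{AA'}=P_{AA'}+\tfrac{2}{3}(\sDiv_{2,2}L)_{AA'}$ and $M=Q+\cdots$---is exactly what you outline, as carried out in detail for the Dirac--Weyl case in Section~\ref{sec:spin1/2}. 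Your order-by-order predictions, including that the residual zeroth-order scalar condition follows from the first-order one by applying $\sDiv_{1,1}$ and commuting, match that template.
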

The existence of $Q$ satisfying \eqref{eq:auxcondwave} is exactly the auxiliary condition \ref{point:A0}.
The proof can also be carried out using the same technique as in the rest of the paper.

\section{The Dirac-Weyl equation} \label{sec:spin1/2} 
The following theorems imply Theorem~\ref{thm:intro:spin1/2}. 
\begin{theorem}\label{Thm::SymOpFirstKindDirac}
There exists a symmetry operator of the first kind for the Dirac-Weyl equation $\phi_{A}\rightarrow \chi_{A}$, with order less or equal to two, if and only if there are spinor fields
$L_{AB}{}^{A'B'}=L_{(AB)}{}^{(A'B')}$, $P_{AA'}$ and $Q$ such that
\begin{subequations}
\begin{align}
(\sTwist_{2,2} L)_{ABC}{}^{A'B'C'}={}&0,\\
(\sTwist_{1,1} P)_{AB}{}^{A'B'}={}&- \tfrac{1}{3} (\ObstrOne L)_{AB}{}^{A'B'},\label{eq:ObstrDiracSecond1}\\
(\sTwist_{0,0} Q)_{A}{}^{A'}={}&\tfrac{3}{10} (\ObstrZero L)_{A}{}^{A'}.\label{eq:ObstrDiracSecond2}
\end{align}
\end{subequations}
The symmetry operator then takes the form
\begin{align}
\chi_{A}={}&- \tfrac{8}{15} L^{BCA'B'} \Phi_{BCA'B'} \phi_{A}
 + Q \phi_{A}
 + \tfrac{1}{2} \phi^{B} (\sCurl_{1,1} P)_{AB}
 + \tfrac{2}{9} \phi^{B} (\sCurl_{1,1} \sDiv_{2,2} L)_{AB}
 + \tfrac{3}{8} \phi_{A} (\sDiv_{1,1} P)\nonumber\\
& + \tfrac{2}{15} \phi_{A} (\sDiv_{1,1} \sDiv_{2,2} L)
 + P^{BA'} (\sTwist_{1,0} \phi)_{ABA'}
 + \tfrac{8}{9} (\sDiv_{2,2} L)^{BA'} (\sTwist_{1,0} \phi)_{ABA'}\nonumber\\
& + \tfrac{2}{3} (\sCurl_{2,2} L)_{ABCA'} (\sTwist_{1,0} \phi)^{BCA'}
 + L^{BCA'B'} (\sTwist_{2,1} \sTwist_{1,0} \phi)_{ABCA'B'}.\label{eq:diracsymop1}
\end{align}
\end{theorem}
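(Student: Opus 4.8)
The plan is to characterize the second-order symmetry operators of the first kind directly, by writing the most general such operator in terms of the fundamental operators $\sTwist$, $\sCurl$, $\sCurlDagger$, $\sDiv$, then imposing the requirement that it map solutions of $(\sCurlDagger_{1,0}\phi)_{A'}=0$ to solutions of the left Dirac-Weyl equation. First I would use the irreducible decomposition \eqref{eq:IrrDecGeneralDer} to split an arbitrary second-order operator $\phi_A\mapsto\chi_A$ into pieces built from the symmetrized derivatives $(\sTwist_{1,0}\phi)$, $(\sTwist_{2,1}\sTwist_{1,0}\phi)$ and their lower-order traces. The key organizing principle, described in Section~\ref{sec:indepspinors}, is that on a spacetime where the Dirac-Weyl field forms an exact set, the fully symmetric components of $\phi_A$ and its derivatives can be freely and independently prescribed at a point, while all non-symmetric components are eliminated using the field equation $(\sCurlDagger_{1,0}\phi)_{A'}=0$ and its derivatives (the reduction formulae of Section~\ref{sec:ReductionDirac}). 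This means the operator is, without loss of generality, a contraction of coefficient spinors against the free jets $\phi_A$, $(\sTwist_{1,0}\phi)_{ABA'}$, $(\sTwist_{2,1}\sTwist_{1,0}\phi)_{ABCA'B'}$.

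Next I would compute $(\sCurlDagger_{2,0}\chi)$, i.e. apply the left Dirac-Weyl operator to the candidate $\chi_A$, and again reduce everything to the free jets of $\phi$ using the reduction procedure and the commutator relations of Lemma~\ref{lemma:commutators}. Demanding that this vanish for all solutions $\phi$ forces the coefficients of each independent irreducible jet component to vanish separately, by the splitting argument spelled out around \eqref{eq:ProductExample}. The top-order (third-derivative) coefficient will force the leading coefficient spinor $L_{AB}{}^{A'B'}$ to lie in $\ker\sTwist_{2,2}$, i.e. to be a Killing spinor of valence $(2,2)$; this is where the Bianchi system $(\sCurlDagger_{4,0}\Psi)=(\sCurl_{2,2}\Phi)$ and the integrability computations of Section~\ref{sec:integrabilitycond} enter to rewrite curvature terms. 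The successively lower-order coefficient equations will then produce the conditions $(\sTwist_{1,1}P)=-\tfrac13(\ObstrOne L)$ and $(\sTwist_{0,0}Q)=\tfrac{3}{10}(\ObstrZero L)$, where $P$ and $Q$ emerge as the free data in the subleading terms of the operator and the operators $\ObstrZero,\ObstrOne$ are exactly the curvature expressions defined just before the commutator subsection.

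The converse direction is the constructive one: given $L,P,Q$ satisfying the three displayed equations, I would verify by direct substitution that the operator \eqref{eq:diracsymop1} does map solutions to solutions, using Lemma~\ref{lemma:commutators} to commute the $\sTwist$, $\sCurl$, $\sDiv$ factors past the Dirac-Weyl operator and the defining equations for $P$ and $Q$ to cancel the curvature remainders. Translating back, the existence of $P$ and $Q$ solving \eqref{eq:ObstrDiracSecond1} and \eqref{eq:ObstrDiracSecond2} is precisely auxiliary conditions \ref{point:A1} and \ref{point:A0} of Definition~\ref{def:auxcond}, up to the constant normalizations, which is how this theorem yields part (i) of Theorem~\ref{thm:intro:spin1/2}.

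The main obstacle I anticipate is the bookkeeping in the reduction step: expressing the second-order operator in a canonical form in which the coefficients of genuinely independent jet components are isolated, and correctly absorbing all the curvature terms generated by repeated commutation into the operators $\ObstrZero$ and $\ObstrOne$. The delicate point is ensuring that the non-symmetric derivative components are consistently eliminated via the field equation (so that no spurious conditions appear and none are missed), and that the emergent auxiliary fields $P$ and $Q$ are genuinely free rather than further constrained. This is exactly the kind of large but systematic computation that the \emph{SymManipulator} package is designed to carry out, and I expect the precise rational coefficients in \eqref{eq:diracsymop1} to be fixed only through that mechanized calculation.
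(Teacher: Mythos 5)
Your proposal is correct and is essentially the paper's own proof: the paper likewise writes the general operator as contractions of coefficient spinors against the exact-set jets $\phi_A$, $(\sTwist_{1,0}\phi)_{ABA'}$, $(\sTwist_{2,1}\sTwist_{1,0}\phi)_{ABCA'B'}$ (using the reduction formulae of Section~\ref{sec:ReductionDirac}), imposes $(\sCurlDagger_{1,0}\chi)_{A'}=0$, splits the resulting equation by order via the independence argument of Section~\ref{sec:indepspinors}, introduces $P$ and $Q$ as free data through ans\"atze for the trace parts $\underset{1,1}{M}{}_{A}{}^{A'}$ and $\underset{0,0}{N}{}$, and obtains sufficiency because every step is an equivalence. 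The only detail off in your sketch is that the Killing spinor equation $(\sTwist_{2,2}L)_{ABC}{}^{A'B'C'}=0$ arises from the \emph{second}-order coefficients, the third-order part merely forcing the totally symmetric part $\underset{4,2}{L}{}_{ABCD}{}^{A'B'}$ of the leading coefficient to vanish, but this misattribution does not affect the argument.
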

\begin{remark}
\begin{enumerate}
\item{
Observe that \eqref{eq:ObstrDiracSecond1} is the auxiliary condition \ref{point:A1} for existence of a symmetry operator of the first kind for Maxwell equation, and \eqref{eq:ObstrDiracSecond2} is the auxiliary condition \ref{point:A0} for existence of a symmetry operator for the conformal wave equation. }
\item{
With $L_{ABA'B'}=0$ the first order operator takes the form
\begin{align}
\chi_{A}={}&
\hat{\mathcal{L}}_{P}\phi_{A}+ Q \phi_{A}.
\end{align}
}
\end{enumerate}
\end{remark}

\begin{theorem}\label{Thm::SymOpSecondKindDirac}
There exists a symmetry operator of the second kind for the Dirac-Weyl equation $\phi_{A}\rightarrow \omega_{A'}$, with order less or equal to two, if and only if there are spinor fields
$L_{ABC}{}^{A'}=L_{(ABC)}{}^{A'}$ and $P_{AB}=P_{(AB)}$ such that
\begin{subequations}
\begin{align}
(\sTwist_{3,1} L)_{ABCD}{}^{A'B'}={}&0,\label{eq:TwistL31}\\
(\sTwist_{2,0} P)_{ABC}{}^{A'}={}&0,\label{eq:TwistPDirac2}\\
0={}&- \tfrac{9}{8} \Psi_{ABCD} (\sDiv_{3,1} L)^{CD}
 + \tfrac{9}{5} L_{(A}{}^{CDA'}(\sCurl_{2,2} \Phi)_{B)CDA'}\nonumber\\
&
 -  \tfrac{5}{2} \Psi_{(A}{}^{CDF}(\sCurl_{3,1} L)_{B)CDF}
 - 2 L^{CDFA'} (\sTwist_{4,0} \Psi)_{ABCDFA'}. \label{eq:DiracSecondKindObstr}
\end{align}
\end{subequations}
The operator takes the form
\begin{align}
\omega_{A'}={}&- \tfrac{1}{2} L_{BCDB'} \Phi^{CD}{}_{A'}{}^{B'} \phi^{B}
 + \tfrac{2}{3} \phi^{B} (\sCurlDagger_{2,0} P)_{BA'}
 + \tfrac{1}{4} \phi^{B} (\sCurlDagger_{2,0} \sDiv_{3,1} L)_{BA'}
 + P^{BC} (\sTwist_{1,0} \phi)_{BCA'}\nonumber\\
& + \tfrac{3}{4} (\sDiv_{3,1} L)^{BC} (\sTwist_{1,0} \phi)_{BCA'}
 + \tfrac{3}{4} (\sCurlDagger_{3,1} L)_{BCA'B'} (\sTwist_{1,0} \phi)^{BCB'}
 + L^{BCDB'} (\sTwist_{2,1} \sTwist_{1,0} \phi)_{BCDA'B'}.
\end{align}
\end{theorem}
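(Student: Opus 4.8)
The plan is to determine the most general second-order operator $\phi_A \mapsto \omega_{A'}$ modulo trivial operators, impose that its output solves the right Dirac-Weyl equation $(\sCurl_{0,1}\omega)_A = 0$ whenever $(\sCurlDagger_{1,0}\phi)_{A'} = 0$, and read off the resulting constraints on the coefficients. The central tool is the observation from Section~\ref{sec:indepspinors} that, for a field forming an exact set, the totally symmetrized derivatives $\phi_A$, $(\sTwist_{1,0}\phi)_{ABA'}$, $(\sTwist_{2,1}\sTwist_{1,0}\phi)_{ABCA'B'}$, and their higher analogues may be specified freely and independently at a single point, while every other component of a derivative is expressible through the field equation and the curvature by means of the irreducible decomposition \eqref{eq:IrrDecGeneralDer} and the commutators of Lemma~\ref{lemma:commutators}.

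First I would write $\omega_{A'}$ as a general linear combination of $\phi_A$ and its first and second covariant derivatives with arbitrary coefficient spinor fields. Decomposing each derivative with \eqref{eq:IrrDecGeneralDer}, discarding every term containing a factor $(\sCurlDagger_{1,0}\phi)_{A'}$ or one of its derivatives (these vanish on solutions and hence only alter the operator by a trivial one), and using \eqref{eq:CurlDaggerTwist} together with the remaining commutators to rewrite the surviving mixed second derivative as $(\sTwist_{2,1}\sTwist_{1,0}\phi)$ plus curvature times lower-order terms, I expect to reach a canonical form
\begin{align}
\omega_{A'}={}& L^{BCDB'}(\sTwist_{2,1}\sTwist_{1,0}\phi)_{BCDA'B'}
 + \Sigma^{BC}(\sTwist_{1,0}\phi)_{BCA'}\nonumber\\
&+ \Gamma_{BCB'A'}(\sTwist_{1,0}\phi)^{BCB'}
 + M^{B}{}_{A'}\phi_{B},\nonumber
\end{align}
in which the principal coefficient $L_{BCD}{}^{B'}$ is an arbitrary valence $(3,1)$ spinor (symmetric on $BCD$), $\Gamma$ is symmetric in $B'A'$, and $\Sigma$, $\Gamma$, $M$ are otherwise arbitrary lower-order coefficients encoding, respectively, the two irreducible parts of the first-order freedom and the zeroth-order freedom.

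Next I would apply the right Dirac-Weyl operator, i.e.\ compute $(\sCurl_{0,1}\omega)_A$, producing a third-order expression in $\phi$. Reducing it to the same canonical form --- now also involving the free third-order quantity $(\sTwist_{3,2}\sTwist_{2,1}\sTwist_{1,0}\phi)$ --- via \eqref{eq:IrrDecGeneralDer} and Lemma~\ref{lemma:commutators}, and invoking the independence of the symmetrized components, forces each coefficient to vanish separately. I expect the top-order coefficient to yield precisely the Killing equation $(\sTwist_{3,1}L)=0$, \eqref{eq:TwistL31}. The vanishing of the next coefficient should determine $\Sigma$ up to a valence $(2,0)$ spinor $P$ obeying $(\sTwist_{2,0}P)=0$, \eqref{eq:TwistPDirac2}, with the $L$-dependent part appearing as $\tfrac{3}{4}(\sDiv_{3,1}L)$, while $\Gamma$ is fixed in terms of $\tfrac{3}{4}(\sCurlDagger_{3,1}L)$. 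The remaining zeroth-order equations should then pin down $M$ in terms of $\Phi$, $P$ and $L$, reproducing the explicit operator in the statement, and the single genuine obstruction that survives is the auxiliary condition \eqref{eq:DiracSecondKindObstr}.

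This one reduction settles both implications simultaneously: run forward it extracts \eqref{eq:TwistL31}, \eqref{eq:TwistPDirac2} and \eqref{eq:DiracSecondKindObstr} as necessary, and conversely, feeding a pair $(L,P)$ satisfying these three conditions back into the canonical form makes $(\sCurl_{0,1}\omega)_A$ vanish identically, so the operator is genuinely a symmetry operator. The main obstacle is the bookkeeping in the third-order reduction: each commutation generates curvature terms (through Lemma~\ref{lemma:commutators}) that must be carried along and repeatedly simplified using the Bianchi identities $(\sDiv_{2,2}\Phi)=-3(\sTwist_{0,0}\Lambda)$ and $(\sCurlDagger_{4,0}\Psi)=(\sCurl_{2,2}\Phi)$ before the coefficients can be matched. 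Organizing these cancellations so that the obstruction collapses to the compact shape \eqref{eq:DiracSecondKindObstr}, rather than an unwieldy system, is exactly the role played by the \emph{SymManipulator} computations.
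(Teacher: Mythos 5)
Your strategy is in substance the paper's own proof: reduce all derivatives of $\phi_A$ to totally symmetrized ones via the field equation and Lemma~\ref{lemma:commutators}, write the general operator against this basis, apply $(\sCurl_{0,1}\,\cdot\,)_A$, invoke the exactness/independence argument of Section~\ref{sec:indepspinors} order by order, and let the zeroth-order commutator gymnastics (the $\sCurl\sCurlDagger\sDiv$ rearrangement via \eqref{eq:DivCurlDagger}, \eqref{eq:DivCurl} and \eqref{eq:DivTwistCurlCurlDagger}) collapse to \eqref{eq:DiracSecondKindObstr}. However, there is a concrete gap in your canonical form. The coefficient multiplying $(\sTwist_{2,1}\sTwist_{1,0}\phi)_{BCD}{}^{B'C'}$ carries the free index $A'$, so it is a spinor $L_{A'}{}^{BCDB'C'}$ whose irreducible decomposition contains, besides the trace part $\underset{3,1}{L}{}_{BCD}{}^{B'}$ that you keep, a totally symmetric part $\underset{3,3}{L}{}_{BCD}{}^{A'B'C'}$. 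You correctly retained both irreducible parts of the first-order coefficient (your $\Sigma$ and $\Gamma$) but silently dropped the $(3,3)$ component at second order; the reduction to canonical form does not eliminate it, so your ansatz is not the most general second-order operator and the necessity direction (``only if'') is not established as written. In the paper it is precisely the third-order part of the symmetry condition that forces $\underset{3,3}{L}{}_{ABC}{}^{A'B'C'}=0$.

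A related misattribution follows from this omission: with your truncated ansatz the free third-order quantity $(\sTwist_{3,2}\sTwist_{2,1}\sTwist_{1,0}\phi)$ never appears at all, because in $\nabla_{A}{}^{A'}\bigl[L^{BCD}{}_{B'}(\sTwist_{2,1}\sTwist_{1,0}\phi)_{BCDA'}{}^{B'}\bigr]$ the contraction over $A'$ annihilates the totally symmetric piece of the decomposition \eqref{eq:IrrDecGeneralDer}, leaving only $(\sCurl_{3,2}\sTwist_{2,1}\sTwist_{1,0}\phi)$ and $(\sDiv_{3,2}\sTwist_{2,1}\sTwist_{1,0}\phi)$, which the identities of Section~\ref{sec:ReductionDirac} convert into curvature times lower-order terms. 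Hence the Killing equation \eqref{eq:TwistL31} is not the ``top-order coefficient'' condition as you assert: it arises from the second-order coefficients, alongside the determination $\Gamma\sim\tfrac{1}{2}(\sCurlDagger_{3,1}\underset{3,1}{L}{})$, while the genuine third-order condition in the full problem is the vanishing of $\underset{3,3}{L}{}$. Once you restore the $(3,3)$ component to the ansatz and let the third-order analysis dispose of it, your outline coincides with the paper's argument, and the remaining zeroth-order reduction is indeed the bookkeeping you describe.
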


\begin{remark}
The scheme for deriving integrability conditions in Section~\ref{sec:integrabilitycond} can be used to show that
\begin{align}
0={}&- \tfrac{2}{5} L_{(ABC}{}^{A'}(\sCurl_{2,2} \Phi)_{DFH)A'}
 + 3 L_{(AB}{}^{LA'}(\sTwist_{4,0} \Psi)_{CDFH)LA'}
 + 5 \Psi_{(ABC}{}^{L}(\sCurl_{3,1} L)_{DFH)L}\nonumber\\
& + \tfrac{3}{4} \Psi_{(ABCD}(\sDiv_{3,1} L)_{FH)},
\end{align}
follows from \eqref{eq:TwistL31}. 
Despite the superficial similarity of this equation to the condition \eqref{eq:DiracSecondKindObstr}, we conjecture that \eqref{eq:DiracSecondKindObstr} does not follow from \eqref{eq:TwistL31}.
\end{remark}

\subsection{Reduction of derivatives of the field}\label{sec:ReductionDirac}
In our notation, the Dirac-Weyl equation $\nabla^{A}{}_{A'}\phi_{A}=0$, takes the form $(\sCurlDagger_{1,0} \phi)_{A'}=0$. We see that the only remaining irreducible part of $\nabla_{A}{}^{A'}\phi_{B}$ is $(\sTwist_{1,0} \phi)_{AB}{}^{A'}$. By commuting derivatives we see that all higher order derivatives of $\phi_{A}$ can be reduced to totally symmetrized derivatives and lower order terms consisting of curvature times lower order symmetrized derivatives.

Together with the Dirac-Weyl equation, the commutators \eqref{eq:DivTwistCurlCurlDagger}, \eqref{eq:CurlTwist}, \eqref{eq:CurlDaggerTwist} give
\begin{subequations}
\begin{align}
(\sDiv_{2,1} \sTwist_{1,0} \phi)_{A}={}&-6 \Lambda \phi_{A},\\
(\sCurl_{2,1} \sTwist_{1,0} \phi)_{ABC}={}&- \Psi_{ABCD} \phi^{D},\\
(\sCurlDagger_{2,1} \sTwist_{1,0} \phi)_{AA'B'}={}&- \Phi_{ABA'B'} \phi^{B}.
\end{align}
\end{subequations}
The higher order derivatives can be computed by using the commutators \eqref{eq:DivTwistCurlCurlDagger}, \eqref{eq:CurlTwist}, \eqref{eq:CurlDaggerTwist} together with the equations above and the Bianchi system to get
\begin{subequations}
\begin{align}
(\sDiv_{3,2} \sTwist_{2,1} \sTwist_{1,0} \phi)_{AB}{}^{A'}={}&\tfrac{5}{6} \phi^{C} (\sCurl_{2,2} \Phi)_{ABC}{}^{A'}
 + \tfrac{10}{3} \Phi_{(A}{}^{CA'B'}(\sTwist_{1,0} \phi)_{B)CB'}
 -  \tfrac{16}{3} \phi_{(A}(\sTwist_{0,0} \Lambda)_{B)}{}^{A'}\nonumber\\
& - 12 \Lambda (\sTwist_{1,0} \phi)_{AB}{}^{A'}
 + \tfrac{3}{2} \Psi_{ABCD} (\sTwist_{1,0} \phi)^{CDA'},\\
(\sCurl_{3,2} \sTwist_{2,1} \sTwist_{1,0} \phi)_{ABCD}{}^{A'}={}&\Phi_{(AB}{}^{A'B'}(\sTwist_{1,0} \phi)_{CD)B'}
 + \tfrac{5}{2} \Psi_{(ABC}{}^{F}(\sTwist_{1,0} \phi)_{D)F}{}^{A'}\nonumber\\
& -  \tfrac{1}{10} \phi_{(A}(\sCurl_{2,2} \Phi)_{BCD)}{}^{A'}
 -  \tfrac{1}{2} \phi^{F} (\sTwist_{4,0} \Psi)_{ABCDF}{}^{A'},\\
(\sCurlDagger_{3,2} \sTwist_{2,1} \sTwist_{1,0} \phi)_{AB}{}^{A'B'C'}={}&\tfrac{8}{3} \Phi^{C}{}_{(A}{}^{(A'B'}(\sTwist_{1,0} \phi)_{B)C}{}^{C')}
 -  \tfrac{2}{9} \phi_{(A}(\sCurlDagger_{2,2} \Phi)_{B)}{}^{A'B'C'}\nonumber\\
& -  \tfrac{2}{3} \phi^{C} (\sTwist_{2,2} \Phi)_{ABC}{}^{A'B'C'}
 -  \bar\Psi^{A'B'C'}{}_{D'} (\sTwist_{1,0} \phi)_{AB}{}^{D'}.
\end{align}
\end{subequations}
Using irreducible decompositions and the equations above, one can in a systematic way reduce any third order derivative of $\phi_A$ in terms of $\phi_A$, $(\sTwist_{1,0} \phi)_{AB}{}^{A'}$, $(\sTwist_{2,1}\sTwist_{1,0} \phi)_{ABC}{}^{A'B'}$ and $(\sTwist_{3,2}\sTwist_{2,1}\sTwist_{1,0} \phi)_{ABCD}{}^{A'B'C'}$.

\subsection{First kind of symmetry operator for the Dirac-Weyl equation}
\begin{proof}[Proof of Theorem~\ref{Thm::SymOpFirstKindDirac}]
The general second order differential operator, mapping a Dirac-Weyl field $\phi_{A}$ to $\mathcal{S}_{1,0}$ is equivalent to $\phi_{A}\rightarrow \chi_{A}$, where
\begin{align}
\chi_{A}={}&N_{A}{}^{B} \phi_{B}
 + M_{A}{}^{BCA'} (\sTwist_{1,0} \phi)_{BCA'}
 + L_{A}{}^{BCDA'B'} (\sTwist_{2,1} \sTwist_{1,0} \phi)_{BCDA'B'},\label{eq:chiDiracDef}\\
\intertext{and}
L_{ABCD}{}^{A'B'}={}&L_{A(BCD)}{}^{(A'B')},\qquad
M_{ABC}{}^{A'}={}M_{A(BC)}{}^{A'}.\label{eq:SymLMDirac1}
\end{align}
Here, we have used the reduction of the derivatives to the $\sTwist$ operator as discussed in Section~\ref{sec:ReductionDirac}. 
The symmetries \eqref{eq:SymLMDirac1} comes from the symmetries of $(\sTwist_{1,0} \phi)_{AB}{}^{A'}$ and $(\sTwist_{2,1}\sTwist_{1,0} \phi)_{ABC}{}^{A'B'}$.
To be able to make a systematic treatment of the dependence of different components of the coefficients,
we will use the irreducible decompositions
\begin{subequations}
\begin{align}
L_{ABCD}{}^{A'B'}={}&\underset{4,2}{L}{}_{ABCD}{}^{A'B'}
 + \tfrac{3}{4} \underset{2,2}{L}{}_{(BC}{}^{A'B'}\epsilon_{D)A},\\
M_{ABC}{}^{A'}={}&\underset{3,1}{M}{}_{ABC}{}^{A'}
 + \tfrac{2}{3} \underset{1,1}{M}{}_{(B}{}^{A'}\epsilon_{C)A},\\
N_{AB}={}&\underset{2,0}{N}{}_{AB}
 -  \tfrac{1}{2} \underset{0,0}{N}{} \epsilon_{AB}.
\end{align}
\end{subequations}
where
\begin{align*}
\underset{2,2}{L}{}_{AB}{}^{A'B'}\equiv{}&L^{C}{}_{ABC}{}^{A'B'},&
\underset{1,1}{M}{}_{A}{}^{A'}\equiv{}&M^{B}{}_{AB}{}^{A'},&
\underset{0,0}{N}{}\equiv{}&N^{A}{}_{A},\\
\underset{4,2}{L}{}_{ABCD}{}^{A'B'}\equiv{}&L_{(ABCD)}{}^{A'B'},&
\underset{3,1}{M}{}_{ABC}{}^{A'}\equiv{}&M_{(ABC)}{}^{A'},&
\underset{2,0}{N}{}_{AB}\equiv{}&N_{(AB)}.
\end{align*}
We use the convention that a spinor with underscripts $\underset{k,l}{T}{}$ is a totally symmetric valence $(k,l)$ spinor.
Using these spinors, we can rewrite \eqref{eq:chiDiracDef} as
\begin{align}
\chi_{A}={}&- \tfrac{1}{2} \underset{0,0}{N}{} \phi_{A}
 -  \underset{2,0}{N}{}_{AB} \phi^{B}
 -  \tfrac{2}{3} \underset{1,1}{M}{}^{BA'} (\sTwist_{1,0} \phi)_{ABA'}
 -  \underset{3,1}{M}{}_{ABCA'} (\sTwist_{1,0} \phi)^{BCA'}\nonumber\\
& -  \tfrac{3}{4} \underset{2,2}{L}{}^{BCA'B'} (\sTwist_{2,1} \sTwist_{1,0} \phi)_{ABCA'B'}
 -  \underset{4,2}{L}{}_{ABCDA'B'} (\sTwist_{2,1} \sTwist_{1,0} \phi)^{BCDA'B'}.
\end{align}

The condition for the operator $\phi_{A}\rightarrow \chi_{A}$ to be a symmetry operator is
\begin{equation}
(\sCurlDagger_{1,0} \chi)_{A'}=0.\label{eq:chiDirac}
\end{equation}
The definition of the $\sCurlDagger$ operator, the Leibniz rule for the covariant derivative, and the irreducible decomposition \eqref{eq:IrrDecGeneralDer} allows us to write this equation in terms of the fundamental operators acting on the coefficients and the field. Furthermore, using the results from the previous subsection, we see that this equation can be reduced to a linear combination of the spinors  $(\sTwist_{3,2}\sTwist_{2,1}\sTwist_{1,0} \phi)_{ABCD}{}^{A'B'C'}$, $(\sTwist_{2,1}\sTwist_{1,0} \phi)_{ABC}{}^{A'B'}$, $(\sTwist_{1,0} \phi)_{AB}{}^{A'}$ and $\phi_A$. For a general Dirac-Weyl field and an arbitrary point on the manifold, there are no relations between these spinors. Hence, they are independent, and therefore their coefficients have to vanish individually. After the reduction of the derivatives of the field to the $\sTwist$ operator, we can therefore study the different order derivatives in \eqref{eq:chiDirac} separately. We begin with the highest order, and work our way down to order zero.

\subsubsection{Third order part}
The third order derivative term of \eqref{eq:chiDirac} is
\begin{align}\label{eq:thirdorderpartdirac}
0={}&- \underset{4,2}{L}{}^{ABCDB'C'} (\sTwist_{3,2} \sTwist_{2,1} \sTwist_{1,0} \phi)_{ABCDA'B'C'}.
\end{align}
We will now use the argument from Section~\ref{sec:indepspinors} to derive equations for the coefficients in a systematic way. To get rid of the free index in equation \eqref{eq:thirdorderpartdirac} we multiply with an arbitrary spinor field $T^{A'}$ to get
\begin{align}
0={}&- \underset{4,2}{L}{}^{ABCDB'C'}T^{A'} (\sTwist_{3,2} \sTwist_{2,1} \sTwist_{1,0} \phi)_{ABCDA'B'C'}.
\end{align}
From the argument in Section~\ref{sec:indepspinors} and the observation that $T^{A'} (\sTwist_{3,2} \sTwist_{2,1} \sTwist_{1,0} \phi)_{ABCDA'B'C'}$ is irreducible we conclude that
\begin{align}
\underset{4,2}{L}{}_{ABCD}{}^{A'B'}={}&0.
\end{align}

\subsubsection{Second order part}
The second order derivative terms of \eqref{eq:chiDirac} can now be reduced to
\begin{align}
0={}&- \underset{3,1}{M}{}^{ABCB'} (\sTwist_{2,1} \sTwist_{1,0} \phi)_{ABCA'B'}
 + \tfrac{1}{2} (\sCurl_{2,2} \underset{2,2}{L}{})^{ABCB'} (\sTwist_{2,1} \sTwist_{1,0} \phi)_{ABCA'B'}\nonumber\\
& + \tfrac{3}{4} (\sTwist_{2,2} \underset{2,2}{L}{})_{ABCA'B'C'} (\sTwist_{2,1} \sTwist_{1,0} \phi)^{ABCB'C'}.
\end{align}
Here we again multiply with an arbitrary spinor field $T^{A'}$, but here $(\sTwist_{2,1} \sTwist_{1,0} \phi)^{ABCB'C'}T^{A'}$ is not irreducible. Therefore, we decompose it into irreducible parts and get
\begin{align}
0={}&\tfrac{3}{4} T^{(A'}(\sTwist_{2,1} \sTwist_{1,0} \phi)^{|ABC|B'C')} (\sTwist_{2,2} \underset{2,2}{L}{})_{ABCA'B'C'}
\nonumber\\
& + \bigl(\tfrac{1}{2} (\sCurl_{2,2} \underset{2,2}{L}{})^{ABCB'}- \underset{3,1}{M}{}^{ABCB'} \bigr) 
T^{A'}(\sTwist_{2,1} \sTwist_{1,0} \phi)_{ABCA'B'}.
\end{align}
The argument in Section~\ref{sec:indepspinors} tells that the coefficients of the different irreducible parts have to vanish individually which gives
\begin{subequations}
\begin{align}
(\sTwist_{2,2} \underset{2,2}{L}{})_{ABC}{}^{A'B'C'}={}&0,\label{eq:TwistL22Dirac1}\\
\underset{3,1}{M}{}_{ABC}{}^{A'}={}&\tfrac{1}{2} (\sCurl_{2,2} \underset{2,2}{L}{})_{ABC}{}^{A'}.
\end{align}
\end{subequations}

\subsubsection{First order part}
The first order derivative terms of \eqref{eq:chiDirac} are
\begin{align}
0={}&- \underset{2,0}{N}{}^{AB} (\sTwist_{1,0} \phi)_{ABA'}
 + \tfrac{1}{3} (\sCurl_{1,1} \underset{1,1}{M}{})^{AB} (\sTwist_{1,0} \phi)_{ABA'}
 -  \tfrac{1}{2} (\sDiv_{3,1} \underset{3,1}{M}{})^{AB} (\sTwist_{1,0} \phi)_{ABA'}\nonumber\\
& -  \tfrac{2}{3} \underset{2,2}{L}{}_{A}{}^{CB'C'} \Phi_{BCB'C'} (\sTwist_{1,0} \phi)^{AB}{}_{A'}
 - 6 \Lambda \underset{2,2}{L}{}_{ABA'B'} (\sTwist_{1,0} \phi)^{ABB'}\nonumber\\
& + \tfrac{4}{3} \underset{2,2}{L}{}_{A}{}^{C}{}_{B'}{}^{C'} \Phi_{BCA'C'} (\sTwist_{1,0} \phi)^{ABB'}
 + \tfrac{5}{3} \underset{2,2}{L}{}_{A}{}^{C}{}_{A'}{}^{C'} \Phi_{BCB'C'} (\sTwist_{1,0} \phi)^{ABB'}\nonumber\\
& + \tfrac{3}{4} \underset{2,2}{L}{}^{CD}{}_{A'B'} \Psi_{ABCD} (\sTwist_{1,0} \phi)^{ABB'}
 + \tfrac{3}{4} \underset{2,2}{L}{}_{AB}{}^{C'D'} \bar\Psi_{A'B'C'D'} (\sTwist_{1,0} \phi)^{ABB'}\nonumber\\
& -  (\sCurlDagger_{3,1} \underset{3,1}{M}{})_{ABA'B'} (\sTwist_{1,0} \phi)^{ABB'}
 + \tfrac{2}{3} (\sTwist_{1,1} \underset{1,1}{M}{})_{ABA'B'} (\sTwist_{1,0} \phi)^{ABB'}.
\end{align}
Here we again multiply with an arbitrary spinor field $T^{A'}$ and decompose $(\sTwist_{1,0} \phi)^{ABB'}T^{A'}$ into irreducible parts. Due to the argument in Section~\ref{sec:indepspinors} the coefficients of the different irreducible parts have to vanish individually which gives
\begin{subequations}
\begin{align}
0={}&- \underset{2,0}{N}{}_{AB}
 + \tfrac{1}{3} (\sCurl_{1,1} \underset{1,1}{M}{})_{AB}
 -  \tfrac{1}{4} (\sDiv_{3,1} \sCurl_{2,2} \underset{2,2}{L}{})_{AB}
 -  \tfrac{1}{2} \underset{2,2}{L}{}_{(A}{}^{CA'B'}\Phi_{B)CA'B'},\\
0={}&-6 \Lambda \underset{2,2}{L}{}_{AB}{}^{A'B'}
 + \tfrac{3}{4} \underset{2,2}{L}{}^{CDA'B'} \Psi_{ABCD}
 + \tfrac{3}{4} \underset{2,2}{L}{}_{AB}{}^{C'D'} \bar\Psi^{A'B'}{}_{C'D'}
 -  \tfrac{1}{2} (\sCurlDagger_{3,1} \sCurl_{2,2} \underset{2,2}{L}{})_{AB}{}^{A'B'}\nonumber\\
& + 3 \underset{2,2}{L}{}_{(A}{}^{C(A'|C'|}\Phi_{B)C}{}^{B')}{}_{C'}
 + \tfrac{2}{3} (\sTwist_{1,1} \underset{1,1}{M}{})_{AB}{}^{A'B'}.
\end{align}
\end{subequations}
Using the commutators \eqref{eq:DivCurl} and \eqref{eq:DivTwistCurlDaggerCurl} together with \eqref{eq:TwistL22Dirac1}, this reduces to
\begin{subequations}
\begin{align}
\underset{2,0}{N}{}_{AB}={}&\tfrac{1}{3} (\sCurl_{1,1} \underset{1,1}{M}{})_{AB}
 -  \tfrac{1}{6} (\sCurl_{1,1} \sDiv_{2,2} \underset{2,2}{L}{})_{AB},\\
(\sTwist_{1,1} \underset{1,1}{M}{})_{AB}{}^{A'B'}={}&- \tfrac{3}{8} \underset{2,2}{L}{}^{CDA'B'} \Psi_{ABCD}
 + \tfrac{3}{8} \underset{2,2}{L}{}_{AB}{}^{C'D'} \bar\Psi^{A'B'}{}_{C'D'}
 + (\sTwist_{1,1} \sDiv_{2,2} \underset{2,2}{L}{})_{AB}{}^{A'B'}.\label{eq:TwistM11Dirac1a}
\end{align}
\end{subequations}
Isolating the $\sTwist$ terms in \eqref{eq:TwistM11Dirac1a} leads us to make the ansatz
\begin{align}
\underset{1,1}{M}{}_{A}{}^{A'}={}&- \tfrac{3}{2} P_{A}{}^{A'}
 + (\sDiv_{2,2} \underset{2,2}{L}{})_{A}{}^{A'},\label{eq:M11AnsatzDirac1}
\end{align}
where $P_{A}{}^{A'}$ is undetermined. 
With this ansatz, the first order equations reduce to
\begin{subequations}
\begin{align}
(\sTwist_{1,1} P)_{AB}{}^{A'B'}={}&\tfrac{1}{4} \underset{2,2}{L}{}^{CDA'B'} \Psi_{ABCD}
 -  \tfrac{1}{4} \underset{2,2}{L}{}_{AB}{}^{C'D'} \bar\Psi^{A'B'}{}_{C'D'}\nonumber\\
={}&\tfrac{1}{4} (\ObstrOne \underset{2,2}{L}{})_{AB}{}^{A'B'},\label{eq:TwistP11Dirac1b}\\
\underset{2,0}{N}{}_{AB}={}&- \tfrac{1}{2} (\sCurl_{1,1} P)_{AB}
 + \tfrac{1}{6} (\sCurl_{1,1} \sDiv_{2,2} \underset{2,2}{L}{})_{AB}\label{eq:N20Dirac1b}.
\end{align}
\end{subequations}

\subsubsection{Zeroth order part}\label{sec:zerothorderDirac1}
Using the equations above, the zeroth order derivative terms of \eqref{eq:chiDirac} are
\begin{align}
0={}&\phi^{A} (-2 \Lambda \underset{1,1}{M}{}_{AA'}
 + \tfrac{2}{3} \underset{1,1}{M}{}^{BB'} \Phi_{ABA'B'}
 + \tfrac{1}{4} \Psi_{ABCD} (\sCurl_{2,2} \underset{2,2}{L}{})^{BCD}{}_{A'}
 -  \tfrac{5}{12} \underset{2,2}{L}{}^{BC}{}_{A'}{}^{B'} (\sCurl_{2,2} \Phi)_{ABCB'}\nonumber\\
& -  (\sCurlDagger_{2,0} \underset{2,0}{N}{})_{AA'}
 -  \tfrac{1}{6} \underset{2,2}{L}{}_{A}{}^{BB'C'} (\sCurlDagger_{2,2} \Phi)_{BA'B'C'}
 -  \tfrac{8}{3} \underset{2,2}{L}{}_{ABA'B'} (\sTwist_{0,0} \Lambda)^{BB'}
 + \tfrac{1}{2} (\sTwist_{0,0} \underset{0,0}{N}{})_{AA'}\nonumber\\
& + \tfrac{1}{2} \underset{2,2}{L}{}^{BCB'C'} (\sTwist_{2,2} \Phi)_{ABCA'B'C'}).
\end{align}
Here, there is no reason to multiply with an arbitrary $T^{A'}$ and do an irreducible decomposition of $T^{A'}\phi^A$ because $T^{A'}\phi^A$ is already irreducible. Still the argument in Section~\ref{sec:indepspinors} gives that the coefficient of $\phi^A$ will have to vanish.
With the substitutions \eqref{eq:N20Dirac1b} and \eqref{eq:M11AnsatzDirac1}, the vanishing of this coefficient is equivalent to
\begin{align}
(\sTwist_{0,0} \underset{0,0}{N}{})_{A}{}^{A'}={}&-6 \Lambda P_{A}{}^{A'}
 + 2 \Phi_{AB}{}^{A'}{}_{B'} P^{BB'}
 -  \tfrac{1}{2} \Psi_{ABCD} (\sCurl_{2,2} \underset{2,2}{L}{})^{BCDA'}\nonumber\\
& + \tfrac{5}{6} \underset{2,2}{L}{}^{BCA'B'} (\sCurl_{2,2} \Phi)_{ABCB'}
 + \tfrac{1}{3} \underset{2,2}{L}{}_{A}{}^{BB'C'} (\sCurlDagger_{2,2} \Phi)_{B}{}^{A'}{}_{B'C'}
 -  (\sCurlDagger_{2,0} \sCurl_{1,1} P)_{A}{}^{A'}\nonumber\\
& + \tfrac{1}{3} (\sCurlDagger_{2,0} \sCurl_{1,1} \sDiv_{2,2} \underset{2,2}{L}{})_{A}{}^{A'}
 + 4 \Lambda (\sDiv_{2,2} \underset{2,2}{L}{})_{A}{}^{A'}
 -  \tfrac{4}{3} \Phi_{AB}{}^{A'}{}_{B'} (\sDiv_{2,2} \underset{2,2}{L}{})^{BB'}\nonumber\\
& + \tfrac{16}{3} \underset{2,2}{L}{}_{AB}{}^{A'}{}_{B'} (\sTwist_{0,0} \Lambda)^{BB'}
 -  \underset{2,2}{L}{}^{BCB'C'} (\sTwist_{2,2} \Phi)_{ABC}{}^{A'}{}_{B'C'}.\label{eq:TwistN00Dirac1a}
\end{align}
To simplify the $\sCurlDagger\sCurl\sDiv$ term, we first commute the innermost operators with \eqref{eq:DivCurl}. Then the outermost operators are commuted with \eqref{eq:DivCurlDagger}. After that, we are left with the operator $\sDiv\sCurlDagger\sCurl$, which can be turned into $\sDiv\sTwist\sDiv$ by using \eqref{eq:DivTwistCurlDaggerCurl} and \eqref{eq:TwistL22Dirac1}. Finally, the $\sDiv\sTwist\sDiv$ operator can be turned into $\sCurlDagger\sCurl\sDiv$ and $\sTwist\sDiv\sDiv$, again by using \eqref{eq:DivTwistCurlDaggerCurl}, but this time on the outermost operators. In detail
\begin{align*}
(\sCurlDagger_{2,0} \sCurl_{1,1} \sDiv_{2,2} \underset{2,2}{L}{})_{AA'}={}&- \tfrac{3}{2} \nabla_{BA'}\square_{B'C'}\underset{2,2}{L}{}_{A}{}^{BB'C'}
 + \tfrac{3}{2} (\sCurlDagger_{2,0} \sDiv_{3,1} \sCurl_{2,2} \underset{2,2}{L}{})_{AA'}\\*
={}&3 \square_{BC}(\sCurl_{2,2} \underset{2,2}{L}{})_{A}{}^{BC}{}_{A'}
 -  \tfrac{3}{2} \nabla_{BA'}\square_{B'C'}\underset{2,2}{L}{}_{A}{}^{BB'C'}
 + 3 (\sDiv_{2,2} \sCurlDagger_{3,1} \sCurl_{2,2} \underset{2,2}{L}{})_{AA'}\\
={}&3 \square_{BC}(\sCurl_{2,2} \underset{2,2}{L}{})_{A}{}^{BC}{}_{A'}
 -  \tfrac{3}{2} \nabla_{BA'}\square_{B'C'}\underset{2,2}{L}{}_{A}{}^{BB'C'}\nonumber\\*
& - 6 \nabla^{BC'}\square_{(A'}{}^{B'}\underset{2,2}{L}{}_{|AB|C')B'}
 - 3 \nabla^{CB'}\square_{(A}{}^{B}\underset{2,2}{L}{}_{C)BA'B'}\nonumber\\*
& + 4 (\sDiv_{2,2} \sTwist_{1,1} \sDiv_{2,2} \underset{2,2}{L}{})_{AA'}\\
={}&2 \square_{AB}(\sDiv_{2,2} \underset{2,2}{L}{})^{B}{}_{A'}
 + 6 \square_{A'B'}(\sDiv_{2,2} \underset{2,2}{L}{})_{A}{}^{B'}
 + 3 \square_{BC}(\sCurl_{2,2} \underset{2,2}{L}{})_{A}{}^{BC}{}_{A'}\nonumber\\*
& -  \tfrac{3}{2} \nabla_{BA'}\square_{B'C'}\underset{2,2}{L}{}_{A}{}^{BB'C'}
 - 6 \nabla^{BC'}\square_{(A'}{}^{B'}\underset{2,2}{L}{}_{|AB|C')B'}\nonumber\\*
& - 3 \nabla^{CB'}\square_{(A}{}^{B}\underset{2,2}{L}{}_{C)BA'B'}
 - 4 (\sCurlDagger_{2,0} \sCurl_{1,1} \sDiv_{2,2} \underset{2,2}{L}{})_{AA'}
 + 3 (\sTwist_{0,0} \sDiv_{1,1} \sDiv_{2,2} \underset{2,2}{L}{})_{AA'}.
\end{align*}
Isolating the $\sCurlDagger\sCurl\sDiv$ terms, expanding the commutators and using \eqref{eq:TwistL22Dirac1} yield
\begin{align}
(\sCurlDagger_{2,0} \sCurl_{1,1} \sDiv_{2,2} \underset{2,2}{L}{})_{AA'}={}&- \tfrac{8}{5} \Phi^{BC}{}_{A'}{}^{B'} (\sCurl_{2,2} \underset{2,2}{L}{})_{ABCB'}
 + \tfrac{6}{5} \Psi_{ABCD} (\sCurl_{2,2} \underset{2,2}{L}{})^{BCD}{}_{A'}\nonumber\\
& - 2 \underset{2,2}{L}{}^{BC}{}_{A'}{}^{B'} (\sCurl_{2,2} \Phi)_{ABCB'}
 + \tfrac{6}{5} \bar\Psi_{A'B'C'D'} (\sCurlDagger_{2,2} \underset{2,2}{L}{})_{A}{}^{B'C'D'}\nonumber\\
& -  \tfrac{8}{5} \Phi_{A}{}^{BB'C'} (\sCurlDagger_{2,2} \underset{2,2}{L}{})_{BA'B'C'}
 - 2 \underset{2,2}{L}{}_{A}{}^{BB'C'} (\sCurlDagger_{2,2} \Phi)_{BA'B'C'}\nonumber\\
& - 12 \Lambda (\sDiv_{2,2} \underset{2,2}{L}{})_{AA'}
 + \tfrac{44}{15} \Phi_{ABA'B'} (\sDiv_{2,2} \underset{2,2}{L}{})^{BB'}
 -  \tfrac{64}{5} \underset{2,2}{L}{}_{ABA'B'} (\sTwist_{0,0} \Lambda)^{BB'}\nonumber\\
& + \tfrac{3}{5} \underset{2,2}{L}{}^{BCB'C'} (\sTwist_{2,2} \Phi)_{ABCA'B'C'}
 + \tfrac{3}{5} (\sTwist_{0,0} \sDiv_{1,1} \sDiv_{2,2} \underset{2,2}{L}{})_{AA'}.\label{eq:CurlDaggerCurlDivLL22Dirac}
\end{align}
Using this in \eqref{eq:TwistN00Dirac1a}, and using \eqref{eq:DivTwistCurlCurlDagger} combined with \eqref{eq:TwistP11Dirac1b} gives
\begin{align}
(\sTwist_{0,0} \underset{0,0}{N}{})_{A}{}^{A'}={}&- \tfrac{8}{15} \Phi^{BCA'B'} (\sCurl_{2,2} \underset{2,2}{L}{})_{ABCB'}
 + \tfrac{3}{20} \Psi_{ABCD} (\sCurl_{2,2} \underset{2,2}{L}{})^{BCDA'}\nonumber\\
& -  \tfrac{1}{12} \underset{2,2}{L}{}^{BCA'B'} (\sCurl_{2,2} \Phi)_{ABCB'}
 + \tfrac{3}{20} \bar\Psi^{A'}{}_{B'C'D'} (\sCurlDagger_{2,2} \underset{2,2}{L}{})_{A}{}^{B'C'D'}\nonumber\\
& -  \tfrac{8}{15} \Phi_{A}{}^{BB'C'} (\sCurlDagger_{2,2} \underset{2,2}{L}{})_{B}{}^{A'}{}_{B'C'}
 -  \tfrac{1}{12} \underset{2,2}{L}{}_{A}{}^{BB'C'} (\sCurlDagger_{2,2} \Phi)_{B}{}^{A'}{}_{B'C'}\nonumber\\
& -  \tfrac{16}{45} \Phi_{AB}{}^{A'}{}_{B'} (\sDiv_{2,2} \underset{2,2}{L}{})^{BB'}
 + \tfrac{16}{15} \underset{2,2}{L}{}_{AB}{}^{A'}{}_{B'} (\sTwist_{0,0} \Lambda)^{BB'}\nonumber\\
& -  \tfrac{4}{5} \underset{2,2}{L}{}^{BCB'C'} (\sTwist_{2,2} \Phi)_{ABC}{}^{A'}{}_{B'C'}
 -  \tfrac{3}{4} (\sTwist_{0,0} \sDiv_{1,1} P)_{A}{}^{A'}
 + \tfrac{1}{5} (\sTwist_{0,0} \sDiv_{1,1} \sDiv_{2,2} \underset{2,2}{L}{})_{A}{}^{A'}.\label{eq:TwistN00Dirac1c}
\end{align}
To simplify the remaining terms, we define
\begin{equation}
\Upsilon\equiv\underset{2,2}{L}{}_{ABA'B'} \Phi^{ABA'B'}.\label{eq:UpsilonDef}
\end{equation}
Using \eqref{eq:TwistL22Dirac1} the gradient of $\Upsilon$ reduces to
\begin{align}
(\sTwist_{0,0} \Upsilon)_{AA'}={}&- \tfrac{4}{3} \underset{2,2}{L}{}_{A}{}^{B}{}_{A'}{}^{B'} (\sTwist_{0,0}\Lambda)_{BB'}
 + \tfrac{2}{3} \Phi^{BC}{}_{A'}{}^{B'} (\sCurl_{2,2} \underset{2,2}{L}{})_{ABCB'}\nonumber\\
& + \tfrac{2}{3} \underset{2,2}{L}{}^{BC}{}_{A'}{}^{B'} (\sCurl_{2,2} \Phi)_{ABCB'}
 + \tfrac{2}{3} \Phi_{A}{}^{BB'C'} (\sCurlDagger_{2,2} \underset{2,2}{L}{})_{BA'B'C'}\nonumber\\
& + \tfrac{2}{3} \underset{2,2}{L}{}_{A}{}^{BB'C'} (\sCurlDagger_{2,2} \Phi)_{BA'B'C'}
 + \tfrac{4}{9} \Phi_{ABA'B'} (\sDiv_{2,2} \underset{2,2}{L}{})^{BB'}\nonumber\\
& + \underset{2,2}{L}{}^{BCB'C'} (\sTwist_{2,2} \Phi)_{ABCA'B'C'}.\label{eq:GradUpsilon}
\end{align}
This can be used to eliminate most of the terms in \eqref{eq:TwistN00Dirac1c}. Together with the definition of the operator $\ObstrZero$, we find that \eqref{eq:TwistN00Dirac1c} reduces to
\begin{align}
(\sTwist_{0,0} \underset{0,0}{N}{})_{A}{}^{A'}={}&\tfrac{9}{20} (\ObstrZero \underset{2,2}{L}{})_{A}{}^{A'}
 -  \tfrac{4}{5} (\sTwist_{0,0} \Upsilon)_{A}{}^{A'}
 -  \tfrac{3}{4} (\sTwist_{0,0} \sDiv_{1,1} P)_{A}{}^{A'}
 + \tfrac{1}{5} (\sTwist_{0,0} \sDiv_{1,1} \sDiv_{2,2} \underset{2,2}{L}{})_{A}{}^{A'}.
\end{align}
It is now clear that the ansatz
\begin{align}
\underset{0,0}{N}{}={}&-2 Q
 -  \tfrac{4}{5} \Upsilon
 -  \tfrac{3}{4} (\sDiv_{1,1} P)
 + \tfrac{1}{5} (\sDiv_{1,1} \sDiv_{2,2} \underset{2,2}{L}{}),
\end{align}
with $Q$ undetermined gives
\begin{align}
(\sTwist_{0,0} Q)_{A}{}^{A'}={}&- \tfrac{9}{40} (\ObstrZero \underset{2,2}{L}{})_{A}{}^{A'}.\label{eq:TwistQ00Dirac1}
\end{align}

We can now conclude that the only restrictive equations are \eqref{eq:TwistL22Dirac1}, \eqref{eq:TwistP11Dirac1b} and \eqref{eq:TwistQ00Dirac1}. The other equations give expressions for the remaining coefficients in terms of $\underset{2,2}{L}{}_{AB}{}^{A'B'}$, $P_{AA'}$, and $Q$. For convenience we make the replacement $\underset{2,2}{L}{}_{AB}{}^{A'B'} \rightarrow - \tfrac{4}{3} L_{AB}{}^{A'B'}$.
\end{proof}

\subsection{Second kind of symmetry operator for the Dirac-Weyl equation}
\begin{proof}[Proof of Theorem~\ref{Thm::SymOpSecondKindDirac}]
The general second order differential operator, mapping a Dirac-Weyl field $\phi_{A}$ to $\mathcal{S}_{0,1}$ is equivalent to $\phi_{A}\rightarrow \omega_{A'}$, where
\begin{align}
\omega_{A'}={}&N^{B}{}_{A'} \phi_{B}
 + M_{A'}{}^{BCB'} (\sTwist_{1,0} \phi)_{BCB'}
 + L_{A'}{}^{BCDB'C'} (\sTwist_{2,1} \sTwist_{1,0} \phi)_{BCDB'C'},
\end{align}
where
\begin{align}
L_{A'ABCB'C'}={}&L_{A'(ABC)(B'C')},&
M_{A'ABB'}={}&M_{A'(AB)B'}.\label{eq:SymLMDirac2}
\end{align}
Here, we have used the reduction of the derivatives to the $\sTwist$ operator as discussed above. 
The symmetries \eqref{eq:SymLMDirac2} comes from the symmetries of $(\sTwist_{1,0} \phi)_{AB}{}^{A'}$ and $(\sTwist_{2,1}\sTwist_{1,0} \phi)_{ABC}{}^{A'B'}$.
As we did above, we will decompose the coefficients into irreducible parts to more clearly see which parts are independent.
The irreducible decompositions of $L^{A'}{}_{ABC}{}^{B'C'}$ and $M^{A'}{}_{AB}{}^{B'}$ are
\begin{align}
L^{A'}{}_{ABC}{}^{B'C'}={}&\underset{3,3}{L}{}_{ABC}{}^{A'B'C'}
 + \tfrac{2}{3} \underset{3,1}{L}{}_{ABC}{}^{(B'}\bar\epsilon^{C')A'},\\
M^{A'}{}_{AB}{}^{B'}={}&\underset{2,2}{M}{}_{AB}{}^{A'B'}
 -  \tfrac{1}{2} \underset{2,0}{M}{}_{AB} \bar\epsilon^{A'B'},
\end{align}
where
\begin{align*}
\underset{3,1}{L}{}_{ABC}{}^{A'}\equiv{}&L^{B'}{}_{ABC}{}^{A'}{}_{B'},&
\underset{2,0}{M}{}_{AB}\equiv{}&M^{A'}{}_{ABA'},\\
\underset{3,3}{L}{}_{ABC}{}^{A'B'C'}\equiv{}&L^{(A'}{}_{ABC}{}^{B'C')},&
\underset{2,2}{M}{}_{AB}{}^{A'B'}\equiv{}&M^{(A'}{}_{AB}{}^{B')}.
\end{align*}
With these irreducible decompositions, we get
\begin{align}
\omega_{A'}={}&N^{B}{}_{A'} \phi_{B}
 -  \tfrac{1}{2} \underset{2,0}{M}{}^{BC} (\sTwist_{1,0} \phi)_{BCA'}
 -  \underset{2,2}{M}{}_{BCA'B'} (\sTwist_{1,0} \phi)^{BCB'}
 -  \tfrac{2}{3} \underset{3,1}{L}{}^{BCDB'} (\sTwist_{2,1} \sTwist_{1,0} \phi)_{BCDA'B'}\nonumber\\
& -  \underset{3,3}{L}{}_{BCDA'B'C'} (\sTwist_{2,1} \sTwist_{1,0} \phi)^{BCDB'C'}.
\end{align}

The condition for the operator $\phi_{A}\rightarrow \omega_{A'}$ to be a symmetry operator is
\begin{equation}
(\sCurl_{0,1} \omega)_{A}=0.\label{eq:omegaDirac}
\end{equation}
Using the results from Section~\ref{sec:ReductionDirac}, we see that this equation can be reduced to a linear combination of the spinors  $\phi_A$, $(\sTwist_{1,0} \phi)_{AB}{}^{A'}$, $(\sTwist_{2,1}\sTwist_{1,0} \phi)_{ABC}{}^{A'B'}$ and $(\sTwist_{3,2}\sTwist_{2,1}\sTwist_{1,0} \phi)_{ABCD}{}^{A'B'C'}$. 
As above, we can treat these as independent, and therefore their coefficients have to vanish individually. After the reduction of the derivatives of the field to the $\sTwist$ operator, we can therefore study the different order derivatives in \eqref{eq:omegaDirac} separately. We begin with the highest order, and work our way down to order zero. 

\subsubsection{Third order part}
The third order part of \eqref{eq:omegaDirac} is
\begin{align}
0={}&- \underset{3,3}{L}{}^{BCDA'B'C'} (\sTwist_{3,2} \sTwist_{2,1} \sTwist_{1,0} \phi)_{ABCDA'B'C'}.
\end{align}
Using the argument from Section~\ref{sec:indepspinors}, we see that this implies
\begin{align}
\underset{3,3}{L}{}_{ABC}{}^{A'B'C'}={}&0.
\end{align}

\subsubsection{Second order part}
The second order part of \eqref{eq:omegaDirac} now takes the form
\begin{align}
0={}&- \underset{2,2}{M}{}^{BCA'B'} (\sTwist_{2,1} \sTwist_{1,0} \phi)_{ABCA'B'}
 + \tfrac{1}{2} (\sCurlDagger_{3,1} \underset{3,1}{L}{})^{BCA'B'} (\sTwist_{2,1} \sTwist_{1,0} \phi)_{ABCA'B'}\nonumber\\
& + \tfrac{2}{3} (\sTwist_{3,1} \underset{3,1}{L}{})_{ABCDA'B'} (\sTwist_{2,1} \sTwist_{1,0} \phi)^{BCDA'B'}.
\end{align}
Here we multiply with an arbitrary spinor field $T^{A}$ and decompose $(\sTwist_{2,1}\sTwist_{1,0} \phi)^{BCDC'D'}T^{A}$ into irreducible parts. Due to the argument in Section~\ref{sec:indepspinors} the coefficients of the different irreducible parts have to vanish individually which gives
\begin{subequations}
\begin{align}
(\sTwist_{3,1} \underset{3,1}{L}{})_{ABCD}{}^{A'B'}={}&0,\label{eq:TwistLL31}\\
\underset{2,2}{M}{}_{AB}{}^{A'B'}={}&\tfrac{1}{2} (\sCurlDagger_{3,1} \underset{3,1}{L}{})_{AB}{}^{A'B'}\label{eq:M22Eq1}.
\end{align}
\end{subequations}

\subsubsection{First order part}
The first order part of \eqref{eq:omegaDirac} can now be reduced to
\begin{align}
0={}&- N^{BA'} (\sTwist_{1,0} \phi)_{ABA'}
 + \tfrac{1}{3} (\sCurlDagger_{2,0} \underset{2,0}{M}{})^{BA'} (\sTwist_{1,0} \phi)_{ABA'}
 -  \tfrac{2}{3} (\sDiv_{2,2} \underset{2,2}{M}{})^{BA'} (\sTwist_{1,0} \phi)_{ABA'}\nonumber\\
& + \tfrac{1}{3} \underset{3,1}{L}{}_{BCDB'} \Phi^{CD}{}_{A'}{}^{B'} (\sTwist_{1,0} \phi)_{A}{}^{BA'}
 -  \tfrac{5}{12} \underset{3,1}{L}{}^{CDF}{}_{A'} \Psi_{BCDF} (\sTwist_{1,0} \phi)_{A}{}^{BA'}\nonumber\\
& - 6 \Lambda \underset{3,1}{L}{}_{ABCA'} (\sTwist_{1,0} \phi)^{BCA'}
 + \tfrac{1}{3} \underset{3,1}{L}{}_{BCDB'} \Phi_{A}{}^{D}{}_{A'}{}^{B'} (\sTwist_{1,0} \phi)^{BCA'}\nonumber\\
& + \tfrac{5}{3} \underset{3,1}{L}{}_{ACDB'} \Phi_{B}{}^{D}{}_{A'}{}^{B'} (\sTwist_{1,0} \phi)^{BCA'}
 + \tfrac{5}{4} \underset{3,1}{L}{}_{B}{}^{DF}{}_{A'} \Psi_{ACDF} (\sTwist_{1,0} \phi)^{BCA'}\nonumber\\
& + \tfrac{3}{4} \underset{3,1}{L}{}_{A}{}^{DF}{}_{A'} \Psi_{BCDF} (\sTwist_{1,0} \phi)^{BCA'}
 -  (\sCurl_{2,2} \underset{2,2}{M}{})_{ABCA'} (\sTwist_{1,0} \phi)^{BCA'}\nonumber\\
& + \tfrac{1}{2} (\sTwist_{2,0} \underset{2,0}{M}{})_{ABCA'} (\sTwist_{1,0} \phi)^{BCA'}.
\end{align}
Here we again multiply with an arbitrary spinor field $T^{A}$ and decompose $(\sTwist_{1,0} \phi)^{BCC'}T^{A}$ into irreducible parts. Due to the argument in Section~\ref{sec:indepspinors} the coefficients of the different irreducible parts have to vanish individually which gives
\begin{align}
0={}&- N_{A}{}^{A'}
 -  \tfrac{1}{3} \underset{3,1}{L}{}^{BCDA'} \Psi_{ABCD}
 + \tfrac{1}{3} (\sCurlDagger_{2,0} \underset{2,0}{M}{})_{A}{}^{A'}
 -  \tfrac{2}{3} (\sDiv_{2,2} \underset{2,2}{M}{})_{A}{}^{A'},\\
0={}&-6 \Lambda \underset{3,1}{L}{}_{ABC}{}^{A'}
 -  (\sCurl_{2,2} \underset{2,2}{M}{})_{ABC}{}^{A'}
 + 2 \underset{3,1}{L}{}_{(BC|DB'|}\Phi_{A)}{}^{DA'B'}
 + 2 \underset{3,1}{L}{}_{(A}{}^{DFA'}\Psi_{BC)DF}\nonumber\\
& + \tfrac{1}{2} (\sTwist_{2,0} \underset{2,0}{M}{})_{ABC}{}^{A'}.
\end{align}
By \eqref{eq:M22Eq1}, the commutator \eqref{eq:DivTwistCurlCurlDagger} and \eqref{eq:TwistLL31} these reduce to
\begin{subequations}
\begin{align}
N_{A}{}^{A'}={}&- \tfrac{1}{3} \underset{3,1}{L}{}_{ABCB'} \Phi^{BCA'B'}
 + \tfrac{1}{3} (\sCurlDagger_{2,0} \underset{2,0}{M}{})_{A}{}^{A'}
 -  \tfrac{1}{6} (\sCurlDagger_{2,0} \sDiv_{3,1} \underset{3,1}{L}{})_{A}{}^{A'},\\
(\sTwist_{2,0} \underset{2,0}{M}{})_{ABC}{}^{A'}={}&(\sTwist_{2,0} \sDiv_{3,1} \underset{3,1}{L}{})_{ABC}{}^{A'}.
\end{align}
\end{subequations}
If we make the ansatz 
\begin{align}
\underset{2,0}{M}{}_{AB}={}&-2 P_{AB}
 + (\sDiv_{3,1} \underset{3,1}{L}{})_{AB},
\end{align}
these equations reduce to
\begin{subequations}
\begin{align}
N_{A}{}^{A'}={}&- \tfrac{1}{3} \underset{3,1}{L}{}_{ABCB'} \Phi^{BCA'B'}
 -  \tfrac{2}{3} (\sCurlDagger_{2,0} P)_{A}{}^{A'}
 + \tfrac{1}{6} (\sCurlDagger_{2,0} \sDiv_{3,1} \underset{3,1}{L}{})_{A}{}^{A'},\label{eq:N11Dirac2a}\\
(\sTwist_{2,0} P)_{ABC}{}^{A'}={}&0.\label{eq:TwistPDirac2a}
\end{align}
\end{subequations}

\subsubsection{Zeroth order part}
The zeroth order part of \eqref{eq:omegaDirac} can now be reduced to
\begin{align}
0={}&-2 \Lambda \underset{2,0}{M}{}_{AB} \phi^{B}
 + \tfrac{1}{2} \underset{2,0}{M}{}^{CD} \Psi_{ABCD} \phi^{B}
 -  \phi^{B} (\sCurl_{1,1} N)_{AB}
 -  \tfrac{1}{20} \underset{3,1}{L}{}_{B}{}^{CDA'} \phi^{B} (\sCurl_{2,2} \Phi)_{ACDA'}\nonumber\\
& + \tfrac{1}{60} \underset{3,1}{L}{}^{BCDA'} \phi_{A} (\sCurl_{2,2} \Phi)_{BCDA'}
 -  \tfrac{5}{12} \underset{3,1}{L}{}_{A}{}^{CDA'} \phi^{B} (\sCurl_{2,2} \Phi)_{BCDA'}\nonumber\\
& -  \tfrac{1}{3} \Phi_{B}{}^{CA'B'} \phi^{B} (\sCurlDagger_{3,1} \underset{3,1}{L}{})_{ACA'B'}
 -  \tfrac{1}{2} \phi_{A} (\sDiv_{1,1} N)
 -  \tfrac{8}{3} \underset{3,1}{L}{}_{ABCA'} \phi^{B} (\sTwist_{0,0} \Lambda)^{CA'}\nonumber\\
& + \tfrac{1}{3} \underset{3,1}{L}{}^{CDFA'} \phi^{B} (\sTwist_{4,0} \Psi)_{ABCDFA'}.
\end{align}
Here we again multiply with an arbitrary spinor field $T^{A}$ and decompose $\phi^{B}T^{A}$ into irreducible parts. Due to the argument in Section~\ref{sec:indepspinors} the coefficients of the different irreducible parts have to vanish individually which gives
\begin{subequations}
\begin{align}
0={}&- \tfrac{1}{6} \underset{3,1}{L}{}^{ABCA'} (\sCurl_{2,2} \Phi)_{ABCA'}
 + \tfrac{1}{6} \Phi^{ABA'B'} (\sCurlDagger_{3,1} \underset{3,1}{L}{})_{ABA'B'}
 -  \tfrac{1}{2} (\sDiv_{1,1} N),\label{eq:ZerithOrderDirac2a}\\
0={}&-2 \Lambda \underset{2,0}{M}{}_{AB}
 + \tfrac{1}{2} \underset{2,0}{M}{}^{CD} \Psi_{ABCD}
 -  (\sCurl_{1,1} N)_{AB}
 -  \tfrac{7}{15} \underset{3,1}{L}{}_{(A}{}^{CDA'}(\sCurl_{2,2} \Phi)_{B)CDA'}\nonumber\\
& -  \tfrac{1}{3} \Phi_{(A}{}^{CA'B'}(\sCurlDagger_{3,1} \underset{3,1}{L}{})_{B)CA'B'}
 -  \tfrac{8}{3} \underset{3,1}{L}{}_{ABCA'} (\sTwist_{0,0} \Lambda)^{CA'}
 + \tfrac{1}{3} \underset{3,1}{L}{}^{CDFA'} (\sTwist_{4,0} \Psi)_{ABCDFA'}.\label{eq:ZerithOrderDirac2b}
\end{align}
\end{subequations}
The equation \eqref{eq:N11Dirac2a} together with the commutator \eqref{eq:DivCurlDagger} gives \eqref{eq:ZerithOrderDirac2a}. If we substitute \eqref{eq:N11Dirac2a} in \eqref{eq:ZerithOrderDirac2b}, we get a term with the third order operator $\sCurl\sCurlDagger\sDiv$. To handle this we use the same technique as in Section~\ref{sec:zerothorderDirac1}. We first commute the innermost operators with \eqref{eq:DivCurlDagger}. Then the outermost operators are commuted with \eqref{eq:DivCurl}. After that, we are left with the operator $\sDiv\sCurl\sCurlDagger$, which can be turned into $\sDiv\sTwist\sDiv$ by using \eqref{eq:DivTwistCurlCurlDagger} and \eqref{eq:TwistL22Dirac1}. Finally, the $\sDiv\sTwist\sDiv$ operator can be turned into $\sCurl\sCurlDagger\sDiv$ and $\sTwist\sDiv\sDiv$, again by using \eqref{eq:DivTwistCurlCurlDagger}, but this time on the outermost operators.
\begin{align*}
(\sCurl_{1,1} \sCurlDagger_{2,0} \sDiv_{3,1} \underset{3,1}{L}{})_{AB}={}&2 (\sCurl_{1,1} \sDiv_{2,2} \sCurlDagger_{3,1} \underset{3,1}{L}{})_{AB}
 + 2 \nabla_{(A}{}^{A'}\square^{CD}\underset{3,1}{L}{}_{B)CDA'}\\
={}&3 \square_{A'B'}(\sCurlDagger_{3,1} \underset{3,1}{L}{})_{AB}{}^{A'B'}
 + 3 (\sDiv_{3,1} \sCurl_{2,2} \sCurlDagger_{3,1} \underset{3,1}{L}{})_{AB}
 + 2 \nabla_{(A}{}^{A'}\square^{CD}\underset{3,1}{L}{}_{B)CDA'}\\
={}&3 \square_{A'B'}(\sCurlDagger_{3,1} \underset{3,1}{L}{})_{AB}{}^{A'B'}
 + 2 \nabla_{CB'}\square_{A'}{}^{B'}\underset{3,1}{L}{}_{AB}{}^{CA'}
 - 6 \nabla^{DA'}\square_{(A}{}^{C}\underset{3,1}{L}{}_{BD)CA'}\nonumber\\
& + 3 (\sDiv_{3,1} \sTwist_{2,0} \sDiv_{3,1} \underset{3,1}{L}{})_{AB}
 + 2 \nabla_{(A}{}^{A'}\square^{CD}\underset{3,1}{L}{}_{B)CDA'}\\
={}&3 \square_{A'B'}(\sCurlDagger_{3,1} \underset{3,1}{L}{})_{AB}{}^{A'B'}
 + 2 \nabla_{CB'}\square_{A'}{}^{B'}\underset{3,1}{L}{}_{AB}{}^{CA'}
 - 6 \nabla^{DA'}\square_{(A}{}^{C}\underset{3,1}{L}{}_{BD)CA'}\nonumber\\
& - 4 (\sCurl_{1,1} \sCurlDagger_{2,0} \sDiv_{3,1} \underset{3,1}{L}{})_{AB}
 - 6 \square_{(A}{}^{C}(\sDiv_{3,1} \underset{3,1}{L}{})_{B)C}
 + 2 \nabla_{(A}{}^{A'}\square^{CD}\underset{3,1}{L}{}_{B)CDA'}.
\end{align*}
Isolating the $\sCurl\sCurlDagger\sDiv$ terms and expanding the commutators and using \eqref{eq:TwistLL31} yield
\begin{align}
(\sCurl_{1,1} \sCurlDagger_{2,0} \sDiv_{3,1} \underset{3,1}{L}{})_{AB}={}&- \Psi_{B}{}^{CDF} (\sCurl_{3,1} \underset{3,1}{L}{})_{ACDF}
 -  \Psi_{A}{}^{CDF} (\sCurl_{3,1} \underset{3,1}{L}{})_{BCDF}\nonumber\\
& -  \tfrac{42}{25} \underset{3,1}{L}{}_{B}{}^{CDA'} (\sCurl_{2,2} \Phi)_{ACDA'}
 -  \tfrac{42}{25} \underset{3,1}{L}{}_{A}{}^{CDA'} (\sCurl_{2,2} \Phi)_{BCDA'}\nonumber\\
& -  \tfrac{3}{2} \Phi_{B}{}^{CA'B'} (\sCurlDagger_{3,1} \underset{3,1}{L}{})_{ACA'B'}
 -  \tfrac{3}{2} \Phi_{A}{}^{CA'B'} (\sCurlDagger_{3,1} \underset{3,1}{L}{})_{BCA'B'}\nonumber\\
& - 12 \Lambda (\sDiv_{3,1} \underset{3,1}{L}{})_{AB}
 + \tfrac{21}{10} \Psi_{ABCD} (\sDiv_{3,1} \underset{3,1}{L}{})^{CD}
 - 12 \underset{3,1}{L}{}_{ABCA'} (\sTwist_{0,0} \Lambda)^{CA'}\nonumber\\
& + \tfrac{2}{5} \underset{3,1}{L}{}^{CDFA'} (\sTwist_{4,0} \Psi)_{ABCDFA'}.
\end{align}
The equation \eqref{eq:N11Dirac2a} together with the equation above, the commutator \eqref{eq:DivTwistCurlCurlDagger} and \eqref{eq:TwistPDirac2a} gives
\begin{align}
(\sCurl_{1,1} N)_{AB}={}&4 \Lambda P_{AB}
 -  \Psi_{ABCD} P^{CD}
 - 2 \Lambda (\sDiv_{3,1} \underset{3,1}{L}{})_{AB}
 + \tfrac{7}{20} \Psi_{ABCD} (\sDiv_{3,1} \underset{3,1}{L}{})^{CD}\nonumber\\
& -  \tfrac{17}{75} \underset{3,1}{L}{}_{(A}{}^{CDA'}(\sCurl_{2,2} \Phi)_{B)CDA'}
 -  \tfrac{1}{3} \Phi_{(A}{}^{CA'B'}(\sCurlDagger_{3,1} \underset{3,1}{L}{})_{B)CA'B'}\nonumber\\
& -  \tfrac{1}{3} \Psi_{(A}{}^{CDF}(\sCurl_{3,1} \underset{3,1}{L}{})_{B)CDF}
 -  \tfrac{8}{3} \underset{3,1}{L}{}_{ABCA'} (\sTwist_{0,0} \Lambda)^{CA'}\nonumber\\
& + \tfrac{1}{15} \underset{3,1}{L}{}^{CDFA'} (\sTwist_{4,0} \Psi)_{ABCDFA'}.
\end{align}
Due to this, the equation \eqref{eq:ZerithOrderDirac2b} reduces to the auxiliary condition
\begin{align}
0={}&\tfrac{3}{4} \Psi_{ABCD} (\sDiv_{3,1} \underset{3,1}{L}{})^{CD}
 -  \tfrac{6}{5} \underset{3,1}{L}{}_{(A}{}^{CDA'}(\sCurl_{2,2} \Phi)_{B)CDA'}\nonumber\\
& + \tfrac{5}{3} \Psi_{(A}{}^{CDF}(\sCurl_{3,1} \underset{3,1}{L}{})_{B)CDF}
 + \tfrac{4}{3} \underset{3,1}{L}{}^{CDFA'} (\sTwist_{4,0} \Psi)_{ABCDFA'}.\label{eq:auxcondDirac2}
\end{align}
We can conclude that the only restrictive equations are \eqref{eq:TwistLL31}, \eqref{eq:TwistPDirac2a} and \eqref{eq:auxcondDirac2}. The other equations express the remaining coefficients in terms of $\underset{3,1}{L}{}_{ABCA'}$ and $P_{AB}$. For convenience we make the replacement $\underset{3,1}{L}{}_{ABC}{}^{A'} \rightarrow - \tfrac{3}{2} L_{ABC}{}^{A'}$.
\end{proof}

\section{The Maxwell equation} \label{sec:spin1} 

\begin{theorem}\label{Thm::SymOpFirstKind}
There exists a symmetry operator of the first kind $\phi_{AB}\rightarrow \chi_{AB}$, with order less or equal to two, if and only if there are spinor fields $L_{AB}{}^{A'B'}=L_{(AB)}{}^{(A'B')}$, $P_{AA'}$ and $Q$ such that
\begin{subequations}
\begin{align}
(\sTwist_{2,2} L)_{ABC}{}^{A'B'C'}={}&0,\label{eq:Lfirstkind}\\
(\sTwist_{1,1} P)_{AB}{}^{A'B'}={}&- \tfrac{2}{3} (\ObstrOne L)_{AB}{}^{A'B'},\label{eq:Qcondfirstkind}\\
(\sTwist_{0,0} Q)_{BA'}={}&0.
\end{align}
\end{subequations}
The symmetry operator then takes the form
\begin{align}
\chi_{AB}={}&Q \phi_{AB}+(\sCurl_{1,1} A)_{AB},\\
\intertext{where}
A_{AA'}={}&- P^{B}{}_{A'} \phi_{AB}
 + \tfrac{1}{3} \phi^{BC} (\sCurl_{2,2} L)_{ABCA'}
 -  \tfrac{4}{9} \phi_{AB} (\sDiv_{2,2} L)^{B}{}_{A'}
 -  L^{BC}{}_{A'}{}^{B'} (\sTwist_{2,0} \phi)_{ABCB'}.
\end{align}
We also note that
\begin{align}
(\sCurlDagger_{1,1} A)_{A'B'} ={}& 0.\label{eq:CurlDaggerA}
\end{align}
\end{theorem}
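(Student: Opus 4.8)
The plan is to follow the same scheme used in the proofs of Theorem~\ref{Thm::SymOpFirstKindDirac} and Theorem~\ref{Thm::SymOpSecondKindDirac}. First I would write down the most general second order operator $\phi_{AB}\mapsto\chi_{AB}$, using the reduction of all derivatives of the Maxwell field to symmetrized ones (the Maxwell analogue of Section~\ref{sec:ReductionDirac}, carried out in Section~\ref{sec:ReductionMaxwell}), so that
\[
\chi_{AB}=N_{AB}{}^{CD}\phi_{CD}+M_{AB}{}^{CDFA'}(\sTwist_{2,0}\phi)_{CDFA'}+L_{AB}{}^{CDFHA'B'}(\sTwist_{3,1}\sTwist_{2,0}\phi)_{CDFHA'B'},
\]
with the algebraic symmetries inherited from the images of $\sTwist$. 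I would then split each coefficient $N,M,L$ into irreducible parts, exactly as was done with the underscript notation in the Dirac case.

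The symmetry condition is that $\chi_{AB}$ solve the left Maxwell equation, i.e. $(\sCurlDagger_{2,0}\chi)_{AA'}=0$. Expanding this with the Leibniz rule and the irreducible decomposition \eqref{eq:IrrDecGeneralDer}, and reducing the third order derivatives of $\phi$ as in Section~\ref{sec:ReductionMaxwell}, produces a linear combination of the mutually independent spinors $\phi_{AB}$, $(\sTwist_{2,0}\phi)_{ABC}{}^{A'}$, $(\sTwist_{3,1}\sTwist_{2,0}\phi)_{ABCD}{}^{A'B'}$ and $(\sTwist_{4,2}\sTwist_{3,1}\sTwist_{2,0}\phi)_{ABCDF}{}^{A'B'C'}$. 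By the exactness argument of Section~\ref{sec:indepspinors}, after contracting with arbitrary test spinors and extracting irreducible parts, the coefficient at each order must vanish on its own, so I would process the orders from top to bottom.

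The top two orders should be routine. The third order coefficient kills the totally symmetric part of $L$, leaving only $\underset{2,2}{L}{}_{AB}{}^{A'B'}$; the second order equations then yield the Killing spinor equation $(\sTwist_{2,2} \underset{2,2}{L}{})=0$ of \eqref{eq:Lfirstkind} and fix the symmetric part of $M$ in terms of $\sCurl_{2,2} \underset{2,2}{L}{}$. The first order equations determine $N$ and, after isolating the $\sTwist$ contribution and introducing $P$ through an ansatz of the type \eqref{eq:M11AnsatzDirac1}, give exactly the obstruction $(\sTwist_{1,1}P)=-\tfrac{2}{3}(\ObstrOne \underset{2,2}{L}{})$ of \eqref{eq:Qcondfirstkind}.

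The main obstacle will be the zeroth order part. As in Section~\ref{sec:zerothorderDirac1}, it requires taming third order compositions such as $\sCurlDagger\sCurl\sDiv$ by repeatedly applying the commutators of Lemma~\ref{lemma:commutators} together with the Bianchi system, rewriting them through $\sTwist\sDiv\sDiv$ and back. The subtle point, and the difference from the Dirac first kind case, is that the surviving zeroth order equation must collapse to $(\sTwist_{0,0}Q)=0$ alone, with no independent $\ObstrZero$ condition; I expect this to rest on an integrability identity expressing $\ObstrZero \underset{2,2}{L}{}$ through a divergence of $\ObstrOne \underset{2,2}{L}{}$, a consequence of \eqref{eq:Lfirstkind}, so that once \eqref{eq:Qcondfirstkind} holds all $\ObstrZero$ terms are absorbed and $Q$ is forced to be covariantly constant. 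Finally, collecting the now-determined coefficients into the potential $A_{AA'}$ and checking $(\sCurlDagger_{1,1}A)_{A'B'}=0$ in \eqref{eq:CurlDaggerA} is a direct computation from \eqref{eq:Lfirstkind}, the Maxwell equation and Lemma~\ref{lemma:commutators}; combined with $Q$ being covariantly constant and with the commutator \eqref{eq:CurlCurlDagger}, which trades $\sCurlDagger_{2,0}\sCurl_{1,1}A$ for $\sCurl_{0,2}\sCurlDagger_{1,1}A$ plus curvature, this gives an independent confirmation that $\chi_{AB}=Q\phi_{AB}+(\sCurl_{1,1}A)_{AB}$ solves the left Maxwell equation.
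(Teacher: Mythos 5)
Your plan reproduces the paper's proof essentially step for step: the same general ansatz with irreducible (underscript) decomposition of $N,M,L$, the same order-by-order elimination (third order killing $\underset{6,2}{L}$ and $\underset{4,2}{L}$, second order giving \eqref{eq:TwistL22} and fixing $\underset{5,1}{M},\underset{3,1}{M}$, first order prompting the ansatz $\underset{1,1}{M}{}_{AA'}=2P_{AA'}+\tfrac{4}{3}(\sDiv_{2,2}\underset{2,2}{L}{})_{AA'}$), and the zeroth-order collapse to $(\sTwist_{0,0}Q)_{AA'}=0$ happens in the paper exactly by the mechanism you predict — substituting the divergence of the condition \eqref{eq:Qcondfirstkind} (equation \eqref{eq:DivTwistQ2}) together with the gradient of $\Upsilon$ so that all would-be $\ObstrZero$ terms are absorbed into gradients, and the paper's verification of the potential form likewise uses \eqref{eq:TwistPeq1}, \eqref{eq:TwistL22} and a commutator. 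The only cosmetic caveats are that your coefficient $-\tfrac{2}{3}$ in \eqref{eq:Qcondfirstkind} holds for the rescaled $L_{ABA'B'}=\tfrac{3}{5}\underset{2,2}{L}{}_{ABA'B'}$ (for $\underset{2,2}{L}$ itself it is $-\tfrac{2}{5}$), and that the zeroth order actually splits into two irreducible components, the valence $(3,1)$ one of which must be (and in the paper is) shown to be automatically satisfied.
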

\begin{remark}
\begin{enumerate}
\item{
Observe that one can add a gradient of a scalar to the potential $A_{AA'}$ without changing the symmetry operator. Hence, adding $\nabla_{AA'}(\Lambda^{BC}\phi_{BC})$ to $A_{AA'}$ with an arbitrary field $\Lambda_{AB}$ is possible.}
\item{
With $L_{ABA'B'}=0$, the first order operator takes the form
\begin{align}
\chi_{AB}={}&
\hat{\mathcal{L}}_{P}\phi_{AB}+ Q \phi_{AB}.
\end{align}
}
\end{enumerate}
\end{remark}

\begin{theorem}\label{Thm::SymOpSecondKind}
There exists second order a symmetry operator of the second kind $\phi_{AB}\rightarrow \omega_{A'B'}$, with order less or equal to two, if and only if there is a spinor field $L_{ABCD}=L_{(ABCD)}$ such that
\begin{align}
(\sTwist_{4,0} L)_{ABCDF}{}^{A'}={}&0.\label{eq:LSecondKind}
\end{align}
The symmetry operator then takes the form
\begin{align}
\omega_{A'B'}={}&(\sCurlDagger_{1,1} B)_{A'B'},\\
\intertext{where}
B_{AA'}={}&\tfrac{3}{5} \phi^{BC} (\sCurlDagger_{4,0} L)_{ABCA'}
 + L_{ABCD} (\sTwist_{2,0} \phi)^{BCD}{}_{A'}.
\end{align}
We also note that
\begin{align}
(\sCurl_{1,1} B)_{AB}={}&0.\label{eq:CurlB}
\end{align}
\end{theorem}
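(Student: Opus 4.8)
The plan is to follow the same systematic reduction used in the proofs of Theorems~\ref{Thm::SymOpFirstKindDirac}--\ref{Thm::SymOpFirstKind}. First I would write the most general second order operator $\phi_{AB}\mapsto\omega_{A'B'}$ in the reduced form
\begin{align*}
\omega_{A'B'}={}&N_{A'B'}{}^{CD}\phi_{CD}
 + M_{A'B'}{}^{CDF}{}_{C'}(\sTwist_{2,0}\phi)_{CDF}{}^{C'}
 + L_{A'B'}{}^{CDFH}{}_{C'D'}(\sTwist_{3,1}\sTwist_{2,0}\phi)_{CDFH}{}^{C'D'},
\end{align*}
where, as in Section~\ref{sec:ReductionMaxwell}, every derivative of $\phi_{AB}$ has been rewritten in terms of $\phi_{AB}$, $(\sTwist_{2,0}\phi)_{ABC}{}^{A'}$ and $(\sTwist_{3,1}\sTwist_{2,0}\phi)_{ABCD}{}^{A'B'}$ using the Maxwell equation $(\sCurlDagger_{2,0}\phi)_{AA'}=0$ and the commutators of Lemma~\ref{lemma:commutators}. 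I would then decompose the coefficients $N$, $M$, $L$ into irreducible parts. The essential one is the twice-traced valence $(4,0)$ part $\underset{4,0}{L}{}_{ABCD}$ of $L$, which will become (after a convenient rescaling to $L_{ABCD}$) the Killing spinor of the statement.

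Next I would impose the symmetry condition $(\sCurl_{0,2}\omega)_{AA'}=0$, which is exactly the requirement that $\omega_{A'B'}$ solve the right Maxwell equation, and expand it once more via Section~\ref{sec:ReductionMaxwell} as a linear combination of the independent spinors $\phi_{AB}$, $(\sTwist_{2,0}\phi)$, $(\sTwist_{3,1}\sTwist_{2,0}\phi)$ and $(\sTwist_{4,2}\sTwist_{3,1}\sTwist_{2,0}\phi)$. By the exactness argument of Section~\ref{sec:indepspinors}, the coefficient of each must vanish separately, so I would solve from the top down. The third order part involves only the totally trace-free valence $(4,4)$ part of $L$ and forces it to vanish; the second order part then yields the valence $(4,0)$ Killing spinor equation $(\sTwist_{4,0}\underset{4,0}{L})_{ABCDF}{}^{A'}=0$, forces the remaining intermediate valence $(4,2)$ part of $L$ to be determined by $\underset{4,0}{L}$, and fixes the leading part of $M$; the first order part expresses the rest of $M$ and all of $N$ through $\underset{4,0}{L}$. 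The point to watch is that, in contrast to the Dirac second kind case (Theorem~\ref{Thm::SymOpSecondKindDirac}), no independent Killing spinor $P$ is introduced, so the $(4,0)$ Killing spinor is the only free datum.

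The main obstacle is the zeroth order part, where it must be shown that the coefficient of $\phi_{AB}$ vanishes identically, so that \emph{no} auxiliary condition survives. I would handle the third order operators appearing here (schematically $\sCurl\,\sCurlDagger\,\sDiv$ acting on $\underset{4,0}{L}$) exactly as in Section~\ref{sec:zerothorderDirac1}: commute them down to $\sTwist\,\sDiv\,\sDiv$ and back using \eqref{eq:DivTwistCurlCurlDagger}, \eqref{eq:DivTwistCurlDaggerCurl}, \eqref{eq:CurlCurlDagger} together with the Killing equation, expand all $\square$-commutators into curvature spinors, and then invoke the integrability conditions for a valence $(4,0)$ Killing spinor produced by the scheme of Section~\ref{sec:integrabilitycond}, along with the Bianchi system $(\sCurlDagger_{4,0}\Psi)_{ABCA'}=(\sCurl_{2,2}\Phi)_{ABCA'}$. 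The expectation, consistent with the stated result, is that everything cancels. Verifying this cancellation in full is the delicate computational heart of the argument, and it is precisely the step where the Maxwell case differs from the spin-$1/2$ and spin-$0$ cases, which retain an auxiliary condition.

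Finally I would repackage the operator. Using the expressions obtained for $N$ and $M$ in terms of $L_{ABCD}$, I would verify by direct computation, applying the Leibniz rule, the Maxwell equation $(\sCurlDagger_{2,0}\phi)_{AA'}=0$ and the Killing equation $(\sTwist_{4,0}L)=0$, that $\omega_{A'B'}=(\sCurlDagger_{1,1}B)_{A'B'}$ with $B_{AA'}$ as stated, and that $(\sCurl_{1,1}B)_{AB}=0$. The latter identity, fed through the commutator \eqref{eq:CurlCurlDagger}, provides an independent check that $\omega_{A'B'}$ indeed solves the right Maxwell equation, closing the argument.
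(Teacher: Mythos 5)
Your outline is sound, and the paper itself remarks that ``one can follow the same procedure as above,'' but the paper's actual proof takes a genuinely different, much shorter route: it does \emph{not} redo the order-by-order classification for the second kind. Instead it cites \cite{KalMcLWil92a}, where the classification was completely settled, for the statement that every second order symmetry operator of the second kind has the form
$\omega_{A'B'}=\tfrac{3}{5}\phi^{CD}(\sCurlDagger_{3,1}\sCurlDagger_{4,0}L)_{CDA'B'}-\tfrac{8}{5}(\sCurlDagger_{4,0}L)^{CDF}{}_{(A'}(\sTwist_{2,0}\phi)_{|CDF|B')}+L^{CDFH}(\sTwist_{3,1}\sTwist_{2,0}\phi)_{CDFHA'B'}$
with $(\sTwist_{4,0}L)=0$, and the paper's own content in Section~\ref{subsection:second_kind} is exactly the final repackaging step of your plan: the coefficients in $B_{AA'}$ are chosen so that $(\sCurlDagger_{1,1}B)_{A'B'}=\omega_{A'B'}$ follows directly from the definition of $\sCurlDagger$, irreducible decomposition and the Killing equation, while $(\sCurl_{1,1}B)_{AB}=0$ is proved using the commutator-derived relations $(\sDiv_{3,1}\sCurlDagger_{4,0}L)_{AB}=-2L_{(A}{}^{CDF}\Psi_{B)CDF}$ and $(\sCurl_{3,1}\sCurlDagger_{4,0}L)_{ABCD}=-10\Lambda L_{ABCD}-\tfrac{5}{6}(\sDiv_{5,1}\sTwist_{4,0}L)_{ABCD}+5L_{(AB}{}^{FH}\Psi_{CD)FH}$, together with the integrability condition $L_{(B}{}^{DFH}\Psi_{C)DFH}=0$ obtained from \eqref{eq:intcondKS4} via the scheme of Section~\ref{sec:integrabilitycond} --- precisely the toolkit you identify. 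What your route buys is self-containedness and uniformity with the spin-$0$ and spin-$1/2$ proofs; what it costs is that the decisive step --- the identical vanishing of the zeroth order coefficient of $\phi_{AB}$, which is what makes the auxiliary condition disappear in this case --- is only asserted as an expectation in your proposal, whereas the paper avoids carrying it out by leaning on \cite{KalMcLWil92a}. Two small bookkeeping caveats if you execute your version: by analogy with the first-kind computation, expect \emph{both} the $(4,4)$ and $(4,2)$ irreducible parts of the top coefficient to be killed already at third order (so the $(4,2)$ part vanishes rather than being ``determined by'' $\underset{4,0}{L}{}$), and you must actually show that no free lower-order datum analogous to the Killing spinor $P_{AB}$ of Theorem~\ref{Thm::SymOpSecondKindDirac} survives, rather than noting it as a point to watch.
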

\begin{remark}
\begin{enumerate}
\item{
Observe that also here we can add a gradient of a scalar to the potential $B_{AA'}$ without changing the symmetry operator. Hence, adding $\nabla_{AA'}(\Lambda^{BC}\phi_{BC})$ to $B_{AA'}$ with an arbitrary field $\Lambda_{AB}$ is possible.}
\item{
Due to the equations \eqref{eq:CurlDaggerA} and \eqref{eq:CurlB}, we can use $A_{AA'}+B_{AA'}$ as a potential for both $\chi_{AB}$ and $\omega_{A'B'}$.
}
\end{enumerate}
\end{remark}

\subsection{Reduction of derivatives of the field}\label{sec:ReductionMaxwell}
In our notation, the Maxwell equation $\nabla^{A}{}_{A'}\phi_{AB}=0$, takes the form $(\sCurlDagger_{2,0} \phi)_{A'}=0$. From this we see that the only irreducible part of $\nabla_{A}{}^{A'}\phi_{BC}$ is $(\sTwist_{2,0} \phi)_{ABC}{}^{A'}$. By commuting derivatives we see that all higher order derivatives of $\phi_{AB}$ can be reduced to totally symmetrized derivatives and lower order terms consisting of curvature times lower order symmetrized derivatives.

Together with the Maxwell equation, the commutators \eqref{eq:DivTwistCurlCurlDagger}, \eqref{eq:CurlTwist}, \eqref{eq:CurlDaggerTwist} gives
\begin{subequations}
\begin{align}
(\sDiv_{3,1} \sTwist_{2,0} \phi)_{AB}={}&
-8 \Lambda \phi_{AB}
 + 2 \Psi_{ABCD} \phi^{CD},\\
(\sCurl_{3,1} \sTwist_{2,0} \phi)_{ABCD}={}&2 \Psi_{(ABC}{}^{F}\phi_{D)F},\\
(\sCurlDagger_{3,1} \sTwist_{2,0} \phi)_{ABA'B'}={}&2 \Phi_{(A}{}^{C}{}_{|A'B'|}\phi_{B)C}
\end{align}
\end{subequations}
The higher order derivatives can be computed using the commutators \eqref{eq:DivTwistCurlCurlDagger}, \eqref{eq:CurlTwist}, \eqref{eq:CurlDaggerTwist} together with the equations above and the Bianchi system to get
\begin{subequations}
\begin{align}
(\sDiv_{4,2} \sTwist_{3,1} \sTwist_{2,0} \phi)_{ABCA'}={}&\tfrac{9}{2} \Phi_{(A}{}^{D}{}_{|A'|}{}^{B'}(\sTwist_{2,0} \phi)_{BC)DB'}
 + \tfrac{9}{2} \Psi_{(AB}{}^{DF}(\sTwist_{2,0} \phi)_{C)DFA'}\nonumber\\
& -  \tfrac{15}{2} \phi_{(AB}(\sTwist_{0,0}\Lambda)_{C)A'}
 + \tfrac{21}{10} \phi_{(A}{}^{D}(\sCurl_{2,2} \Phi)_{BC)DA'}\nonumber\\
& + \tfrac{3}{2} \phi^{DF} (\sTwist_{4,0} \Psi)_{ABCDFA'}
 - 15 \Lambda (\sTwist_{2,0} \phi)_{ABCA'},\\
(\sCurl_{4,2} \sTwist_{3,1} \sTwist_{2,0} \phi)_{ABCDFB'}={}&\Phi_{(AB|B'|}{}^{A'}(\sTwist_{2,0} \phi)_{CDF)A'}
 + 4 \Psi_{(ABC}{}^{H}(\sTwist_{2,0} \phi)_{DF)HB'}\nonumber\\
& -  \tfrac{1}{5} \phi_{(AB}(\sCurl_{2,2} \Phi)_{CDF)B'}
 -  \phi_{(A}{}^{H}(\sTwist_{4,0} \Psi)_{BCDF)HB'},\\
(\sCurlDagger_{4,2} \sTwist_{3,1} \sTwist_{2,0} \phi)_{ABC}{}^{A'B'C'}={}&\tfrac{9}{2} \Phi^{D}{}_{(A}{}^{(A'B'}(\sTwist_{2,0} \phi)_{BC)D}{}^{C')}
 -  \tfrac{1}{2} \phi_{(AB}(\sCurlDagger_{2,2} \Phi)_{C)}{}^{A'B'C'}\nonumber\\
& -  \tfrac{3}{2} \phi_{(A}{}^{D}(\sTwist_{2,2} \Phi)_{BC)D}{}^{A'B'C'}
 -  \bar\Psi^{A'B'C'}{}_{D'} (\sTwist_{2,0} \phi)_{ABC}{}^{D'}.
\end{align}
\end{subequations}
These can in a systematic way be used to reduce any derivative up to third order of $\phi_{AB}$ in terms of $\phi_{AB}$, $(\sTwist_{2,0} \phi)_{ABC}{}^{A'}$, $(\sTwist_{3,1} \sTwist_{2,0} \phi)_{ABCD}{}^{A'B'}$ and $(\sTwist_{4,2} \sTwist_{3,1} \sTwist_{2,0} \phi)_{ABCDF}{}^{A'B'C'}$.

\subsection{First kind of symmetry operator for the Maxwell equation}
\begin{proof}[Proof of Theorem~\ref{Thm::SymOpFirstKind}]
The general second order differential operator, mapping a Maxwell field $\phi_{AB}$ to $\mathcal{S}_{2,0}$ is equivalent to $\phi_{AB}\rightarrow \chi_{AB}$, where
\begin{align}
\chi_{AB}={}&N_{ABCD} \phi^{CD}
 + M_{ABCDFA'} (\sTwist_{2,0} \phi)^{CDFA'}
 + L_{ABCDFHA'B'} (\sTwist_{3,1} \sTwist_{2,0} \phi)^{CDFHA'B'},
\end{align}
and
\begin{align*}
L_{AB}{}^{CDFHA'B'}={}&L_{(AB)}{}^{(CDFH)(A'B')},&
M_{AB}{}^{CDFA'}={}&M_{(AB)}{}^{(CDF)A'},&
N_{AB}{}^{CD}={}&N_{(AB)}{}^{(CD)}.
\end{align*}
Here, we have used the reduction of the derivatives to the $\sTwist$ operator as discussed in Section~\ref{sec:ReductionMaxwell}. 
The symmetries comes from the symmetries of $ (\sTwist_{2,0} \phi)_{ABC}{}^{A'}$ and $(\sTwist_{3,1} \sTwist_{2,0} \phi)_{ABCD}{}^{A'B'}$.
To be able to make a systematic treatment of the dependence of different components of the coefficients,
we will use the irreducible decompositions
\begin{subequations}
\begin{align}
L_{AB}{}^{CDFHA'B'}={}&\underset{6,2}{L}{}_{AB}{}^{CDFHA'B'}
 -  \tfrac{4}{3} \epsilon_{(A}{}^{(C}\underset{4,2}{L}{}^{DFH)}{}_{B)}{}^{A'B'}
 -  \tfrac{3}{5} \epsilon^{(C}{}_{(A}\epsilon_{B)}{}^{D}\underset{2,2}{L}{}^{FH)A'B'},\\
M_{AB}{}^{CDFA'}={}&\underset{5,1}{M}{}_{AB}{}^{CDFA'}
 -  \tfrac{6}{5} \epsilon_{(A}{}^{(C}\underset{3,1}{M}{}^{DF)}{}_{B)}{}^{A'}
 -  \tfrac{1}{2} \epsilon^{(C}{}_{(A}\epsilon_{B)}{}^{D}\underset{1,1}{M}{}^{F)A'},\\
N_{AB}{}^{CD}={}&\underset{4,0}{N}{}_{AB}{}^{CD}
 -  \epsilon_{(A}{}^{(C}\underset{2,0}{N}{}^{D)}{}_{B)}
 -  \tfrac{1}{3} \underset{0,0}{N}{} \epsilon_{(A}{}^{(C}\epsilon^{D)}{}_{B)}.
\end{align}
\end{subequations}
where the different irreducible parts are
\begin{align*}
\underset{2,2}{L}{}_{AB}{}^{A'B'}\equiv{}&L^{CD}{}_{ABCD}{}^{A'B'},&
\underset{1,1}{M}{}_{A}{}^{A'}\equiv{}&M^{BC}{}_{ABC}{}^{A'},&
\underset{0,0}{N}{}\equiv{}&N^{AB}{}_{AB},\\
\underset{4,2}{L}{}_{ABCD}{}^{A'B'}\equiv{}&L_{(A}{}^{F}{}_{BCD)F}{}^{A'B'},&
\underset{3,1}{M}{}_{ABC}{}^{A'}\equiv{}&M_{(A}{}^{D}{}_{BC)D}{}^{A'},&
\underset{2,0}{N}{}_{AB}\equiv{}&N_{(A}{}^{C}{}_{B)C},\\
\underset{6,2}{L}{}_{ABCDFH}{}^{A'B'}\equiv{}&L_{(ABCDFH)}{}^{A'B'},&
\underset{5,1}{M}{}_{ABCDF}{}^{A'}\equiv{}&M_{(ABCDF)}{}^{A'},&
\underset{4,0}{N}{}_{ABCD}\equiv{}&N_{(ABCD)}.
\end{align*}

Now, we want the operator to be a symmetry operator, which means that
\begin{equation}
(\sCurlDagger_{2,0}\chi)_{AA'}=0.\label{eq:chiMaxwell}
\end{equation}
Using the results from the previous subsection, we see that this equation can be reduced to a linear combination of the spinors $(\sTwist_{2,4} \sTwist_{1,3} \sTwist_{0,2} \phi)^{A'B'C'}{}_{ABCDF}$, $(\sTwist_{1,3} \sTwist_{0,2} \phi)^{A'B'}{}_{ABCD}$, $(\sTwist_{0,2} \phi)^{A'}{}_{ABC}$ and $\phi_{AB}$. For a general Maxwell field and an arbitrary point on the manifold, there are no relations between these spinors. Hence, they are independent, and therefore their coefficients have to vanish individually. After the reduction of the derivatives of the Maxwell field to the $\sTwist$ operator, we can therefore study the different order derivatives of $\phi_{AB}$ in \eqref{eq:chiMaxwell} separately.

\subsubsection{Third order part}
The third order derivative terms of \eqref{eq:chiMaxwell} are
\begin{align}\label{eq:thirdorderpartmaxwell}
0={}&\tfrac{2}{3} \underset{4,2}{L}{}_{BCDFB'C'} (\sTwist_{4,2} \sTwist_{3,1} \sTwist_{2,0} \phi)_{A}{}^{BCDF}{}_{A'}{}^{B'C'}\nonumber\\
& + \underset{6,2}{L}{}_{ABCDFHB'C'} (\sTwist_{4,2} \sTwist_{3,1} \sTwist_{2,0} \phi)^{BCDFH}{}_{A'}{}^{B'C'}.
\end{align}
We can multiply this with an arbitrary vector field $T^{AA'}$ and split $(\sTwist_{4,2} \sTwist_{3,1} \sTwist_{2,0} \phi)^{ABCDFA'B'C'}T^{H}{}_{A'}$ into irreducible parts. Then we get
\begin{align}
0={}&\underset{6,2}{L}{}_{ABCDFHB'C'} T^{(A|A'|}(\sTwist_{4,2} \sTwist_{3,1} \sTwist_{2,0} \phi)^{BCDFH)}{}_{A'}{}^{B'C'}\nonumber\\
& + \tfrac{2}{3} \underset{4,2}{L}{}_{BCDFB'C'} T_{AA'} (\sTwist_{4,2} \sTwist_{3,1} \sTwist_{2,0} \phi)^{ABCDFA'B'C'}.
\end{align}
The argument in Section~\ref{sec:indepspinors} gives that the symmetrized coefficients of the irreducible parts $T^{(A|A'|}(\sTwist_{4,2} \sTwist_{3,1} \sTwist_{2,0} \phi)^{BCDFH)}{}_{A'}{}^{B'C'}$ and $T_{AA'} (\sTwist_{4,2} \sTwist_{3,1} \sTwist_{2,0} \phi)^{ABCDFA'B'C'}$ must vanish. This means that  \eqref{eq:thirdorderpartmaxwell} is equivalent to the system
\begin{subequations}\label{eq:ThirdOrderSystem}
\begin{align}
\underset{6,2}{L}{}_{ABCDFHB'C'}={}&0,\label{eq:L62}\\
\underset{4,2}{L}{}_{BCDFB'C'}={}&0.\label{eq:L42}
\end{align}
\end{subequations}
The only remaining irreducible component of $L_{AB}{}^{CDFHA'B'}$ is $\underset{2,2}{L}{}_{AB}{}^{A'B'}$.

\subsubsection{Second order part}
If we use everything above we find that the second order part of \eqref{eq:chiMaxwell} reduces to
\begin{align}
0={}&\tfrac{3}{5} \underset{3,1}{M}{}^{BCDB'} (\sTwist_{3,1} \sTwist_{2,0} \phi)_{ABCDA'B'}
 -  \tfrac{2}{5} (\sCurl_{2,2} \underset{2,2}{L}{})^{BCDB'} (\sTwist_{3,1} \sTwist_{2,0} \phi)_{ABCDA'B'}\nonumber\\
& + \tfrac{3}{5} (\sTwist_{2,2} \underset{2,2}{L}{})^{BCD}{}_{A'}{}^{B'C'} (\sTwist_{3,1} \sTwist_{2,0} \phi)_{ABCDB'C'}
 + \underset{5,1}{M}{}_{ABCDFB'} (\sTwist_{3,1} \sTwist_{2,0} \phi)^{BCDF}{}_{A'}{}^{B'}.
\end{align}
Again contracting with an arbitrary vector $T^{AA'}$ and splitting $(\sTwist_{3,1} \sTwist_{2,0} \phi)^{ABCDA'B'}T^{FC'}$ into irreducible parts we find
\begin{align}
0={}&\underset{5,1}{M}{}_{ABCDFB'} T^{(A|A'|}(\sTwist_{3,1} \sTwist_{2,0} \phi)^{BCDF)}{}_{A'}{}^{B'}\nonumber\\
& -  \tfrac{3}{5} T^{A(A'}(\sTwist_{3,1} \sTwist_{2,0} \phi)_{A}{}^{|BCD|B'C')} (\sTwist_{2,2} \underset{2,2}{L}{})_{BCDA'B'C'}\nonumber\\
& + T_{AA'} (\tfrac{3}{5} \underset{3,1}{M}{}_{BCDB'}
 -  \tfrac{2}{5} (\sCurl_{2,2} \underset{2,2}{L}{})_{BCDB'}) (\sTwist_{3,1} \sTwist_{2,0} \phi)^{ABCDA'B'}.
\end{align}
Again using the argument in Section~\ref{sec:indepspinors} we find
\begin{subequations}\label{eq:SecondOrderSystem}
\begin{align}
(\sTwist_{2,2} \underset{2,2}{L}{})_{BCD}{}^{A'B'C'}={}&0,\label{eq:TwistL22}\\
\underset{5,1}{M}{}_{ABCDFB'}={}&0,\label{eq:M51}\\
\underset{3,1}{M}{}_{BCDB'}={}&\tfrac{2}{3} (\sCurl_{2,2} \underset{2,2}{L}{})_{BCDB'}.\label{eq:M31ToL22}
\end{align}
\end{subequations}
\subsubsection{First order part}
Now, after contracting the first order part of \eqref{eq:chiMaxwell} with an arbitrary tensor $T^{A}{}_{A'}$, splitting $(\sTwist_{2,0} \phi)_{ABCA'}T_{DB'}$ into irreducible parts, and using the argument in Section~\ref{sec:indepspinors}, we find that the first order part of \eqref{eq:chiMaxwell} is equivalent to the system
\begin{subequations}
\begin{align}
\underset{2,0}{N}{}_{BC}={}&\tfrac{1}{2} (\sCurl_{1,1} \underset{1,1}{M}{})_{BC}
 -  \tfrac{1}{2} (\sDiv_{3,1} \sCurl_{2,2} \underset{2,2}{L}{})_{BC}
 -  \tfrac{3}{2} \underset{2,2}{L}{}_{(B}{}^{DB'C'}\Phi_{C)DB'C'},\label{eq:N20eq1}\\
\underset{4,0}{N}{}_{ABCD}={}&\tfrac{1}{5} (\sCurl_{3,1} \sCurl_{2,2} \underset{2,2}{L}{})_{ABCD}
 + \tfrac{3}{10} \underset{2,2}{L}{}_{(AB}{}^{B'C'}\Phi_{CD)B'C'},\label{eq:N40eq1}\\
(\sTwist_{1,1} \underset{1,1}{M}{})_{BC}{}^{A'B'}={}&12 \Lambda \underset{2,2}{L}{}_{BC}{}^{A'B'}
 -  \tfrac{9}{5} \underset{2,2}{L}{}^{DFA'B'} \Psi_{BCDF}
 -  \tfrac{6}{5} \underset{2,2}{L}{}_{BC}{}^{C'D'} \bar\Psi^{A'B'}{}_{C'D'}\nonumber\\
& + (\sCurlDagger_{3,1} \sCurl_{2,2} \underset{2,2}{L}{})_{BC}{}^{A'B'}
 - 6 \underset{2,2}{L}{}^{D}{}_{(B}{}^{C'(A'}\Phi_{C)D}{}^{B')}{}_{C'},\label{eq:TwistM11Eq1}\\
(\sTwist_{3,1} \sCurl_{2,2} \underset{2,2}{L}{})_{ABCD}{}^{A'B'}={}&3 \underset{2,2}{L}{}_{(AB}{}^{C'(A'}\Phi_{CD)}{}^{B')}{}_{C'}
 + 3 \underset{2,2}{L}{}_{(A}{}^{FA'B'}\Psi_{BCD)F}.\label{eq:TwistCurlFirstOrder}
\end{align}
\end{subequations}
The commutators \eqref{eq:DivCurl}, \eqref{eq:DivTwistCurlDaggerCurl} and \eqref{eq:CurlTwist} applied to $\underset{2,2}{L}{}{}_{AB}{}^{A'B'}$ yield
\begin{subequations}
\begin{align}
(\sDiv_{3,1} \sCurl_{2,2} \underset{2,2}{L}{})_{AB}={}&\tfrac{2}{3} (\sCurl_{1,1} \sDiv_{2,2} \underset{2,2}{L}{})_{AB}
 - 2 \underset{2,2}{L}{}_{(A}{}^{CA'B'}\Phi_{B)CA'B'},\label{eq:CommDivCurlL22}\\
(\sCurlDagger_{3,1} \sCurl_{2,2} \underset{2,2}{L}{})_{AB}{}^{A'B'}={}&-12 \Lambda \underset{2,2}{L}{}_{AB}{}^{A'B'}
 + \underset{2,2}{L}{}^{CDA'B'} \Psi_{ABCD}
 + 2 \underset{2,2}{L}{}_{AB}{}^{C'D'} \bar\Psi^{A'B'}{}_{C'D'}\nonumber\\
& -  \tfrac{3}{2} (\sDiv_{3,3} \sTwist_{2,2} \underset{2,2}{L}{})_{AB}{}^{A'B'}
 + 6 \underset{2,2}{L}{}^{C}{}_{(A}{}^{C'(A'}\Phi_{B)C}{}^{B')}{}_{C'}\nonumber\\
& + \tfrac{4}{3} (\sTwist_{1,1} \sDiv_{2,2} \underset{2,2}{L}{})_{AB}{}^{A'B'},\label{eq:CommCurlDaggerCurlL22}\\
(\sTwist_{3,1} \sCurl_{2,2} \underset{2,2}{L}{})_{ABCD}{}^{A'B'}={}&\tfrac{3}{2} (\sCurl_{3,3} \sTwist_{2,2} \underset{2,2}{L}{})_{ABCD}{}^{A'B'}
 + 3 \underset{2,2}{L}{}_{(AB}{}^{C'(A'}\Phi_{CD)}{}^{B')}{}_{C'}\nonumber\\
& + 3 \underset{2,2}{L}{}_{(A}{}^{FA'B'}\Psi_{BCD)F}.\label{eq:CommTwistCurlL22}
\end{align}
\end{subequations}
It is now clear that \eqref{eq:TwistCurlFirstOrder} is a consequence of \eqref{eq:CommTwistCurlL22} and \eqref{eq:TwistL22}. The commutators \eqref{eq:CommDivCurlL22} and \eqref{eq:CommCurlDaggerCurlL22} together with \eqref{eq:TwistL22} can be used to reduce \eqref{eq:N20eq1} and \eqref{eq:TwistM11Eq1} to
\begin{subequations}
\begin{align}
\underset{2,0}{N}{}_{BC}={}&\tfrac{1}{2} (\sCurl_{1,1} \underset{1,1}{M}{})_{BC}
 -  \tfrac{1}{3} (\sCurl_{1,1} \sDiv_{2,2} \underset{2,2}{L}{})_{BC}
 -  \tfrac{1}{2} \underset{2,2}{L}{}_{(B}{}^{DB'C'}\Phi_{C)DB'C'},\label{eq:N20eq2}\\
(\sTwist_{1,1} \underset{1,1}{M}{})_{BCA'B'}={}&- \tfrac{4}{5} \underset{2,2}{L}{}^{DF}{}_{A'B'} \Psi_{BCDF}
 + \tfrac{4}{5} \underset{2,2}{L}{}_{BC}{}^{C'D'} \bar\Psi_{A'B'C'D'}
 + \tfrac{4}{3} (\sTwist_{1,1} \sDiv_{2,2} \underset{2,2}{L}{})_{BCA'B'}.\label{eq:TwistM11Eq2}
\end{align}
\end{subequations}
Now, in view of the form of \eqref{eq:TwistM11Eq2} we make the ansatz
\begin{align}
\underset{1,1}{M}{}_{AA'}={}&2 P_{AA'}
 + \tfrac{4}{3} (\sDiv_{2,2} \underset{2,2}{L}{})_{AA'},\label{eq:M11Ansatz}
\end{align}
where $P_{AA'}$ is a new spinor field. With this choice \eqref{eq:N20eq2} and \eqref{eq:TwistM11Eq2}  reduce to
\begin{subequations}
\begin{align}
\underset{2,0}{N}{}_{BC}={}&(\sCurl_{1,1} P)_{BC}
 + \tfrac{1}{3} (\sCurl_{1,1} \sDiv_{2,2} \underset{2,2}{L}{})_{BC}
 -  \tfrac{1}{2} \underset{2,2}{L}{}_{(B}{}^{DB'C'}\Phi_{C)DB'C'},\label{eq:N20eq3}\\
(\sTwist_{1,1} P)_{BCA'B'}={}&- \tfrac{2}{5} \underset{2,2}{L}{}^{DF}{}_{A'B'} \Psi_{BCDF}
 + \tfrac{2}{5} \underset{2,2}{L}{}_{BC}{}^{C'D'} \bar\Psi_{A'B'C'D'}.\label{eq:TwistPeq1}
\end{align}
\end{subequations}
In conclusion, the third, second and first order parts of \eqref{eq:chiMaxwell} vanishes if and only if \eqref{eq:ThirdOrderSystem}, \eqref{eq:SecondOrderSystem}, \eqref{eq:N40eq1}, \eqref{eq:M11Ansatz}, \eqref{eq:N20eq3} and \eqref{eq:TwistPeq1} are satisfied.

\subsubsection{Zeroth order part}
After making irreducible decompositions of the derivatives, using \eqref{eq:TwistL22} and contracting the remaining part of \eqref{eq:chiMaxwell} with an arbitrary tensor $T^{AA'}$, splitting $T_{AA'} \phi_{CD}$ into irreducible parts, and using the argument in Section~\ref{sec:indepspinors}, we find that the order zero part of \eqref{eq:chiMaxwell} is equivalent to the system

\begin{subequations}
\begin{align}
0={}&4 \Lambda P_{BA'}
 -  \tfrac{4}{3} \Phi_{BCA'B'} P^{CB'}
 + \tfrac{2}{9} \Phi^{CD}{}_{A'}{}^{B'} (\sCurl_{2,2} \underset{2,2}{L}{})_{BCDB'}
 -  \tfrac{8}{15} \Psi_{BCDF} (\sCurl_{2,2} \underset{2,2}{L}{})^{CDF}{}_{A'}\nonumber\\*
& + \tfrac{26}{45} \underset{2,2}{L}{}^{CD}{}_{A'}{}^{B'} (\sCurl_{2,2} \Phi)_{BCDB'}
 + \tfrac{2}{9} \Phi_{B}{}^{CB'C'} (\sCurlDagger_{2,2} \underset{2,2}{L}{})_{CA'B'C'}
 + \tfrac{2}{45} \underset{2,2}{L}{}_{B}{}^{CB'C'} (\sCurlDagger_{2,2} \Phi)_{CA'B'C'}\nonumber\\*
& + \tfrac{2}{3} (\sCurlDagger_{2,0} \sCurl_{1,1} P)_{BA'}
 + \tfrac{2}{9} (\sCurlDagger_{2,0} \sCurl_{1,1} \sDiv_{2,2} \underset{2,2}{L}{})_{BA'}
 + \tfrac{8}{3} \Lambda (\sDiv_{2,2} \underset{2,2}{L}{})_{BA'}
 -  \tfrac{20}{27} \Phi_{BCA'B'} (\sDiv_{2,2} \underset{2,2}{L}{})^{CB'}\nonumber\\*
& + \tfrac{28}{9} \underset{2,2}{L}{}_{BCA'B'} (\sTwist_{0,0} \Lambda)^{CB'}
 -  \tfrac{1}{3} (\sTwist_{0,0} \underset{0,0}{N}{})_{BA'}
 -  \tfrac{1}{3} \underset{2,2}{L}{}^{CDB'C'} (\sTwist_{2,2} \Phi)_{BCDA'B'C'},\label{eq:OrderZeroEq1}\\
0={}&P^{D}{}_{A'} \Psi_{ABCD}
 + 2 \Lambda (\sCurl_{2,2} \underset{2,2}{L}{})_{ABCA'}
 + \tfrac{1}{5} (\sCurlDagger_{4,0} \sCurl_{3,1} \sCurl_{2,2} \underset{2,2}{L}{})_{ABCA'}
 + \tfrac{2}{3} \Psi_{ABCD} (\sDiv_{2,2} \underset{2,2}{L}{})^{D}{}_{A'}\nonumber\\*
& -  \tfrac{37}{75} \underset{2,2}{L}{}_{(A}{}^{D}{}_{|A'|}{}^{B'}(\sCurl_{2,2} \Phi)_{BC)DB'}
 + \tfrac{7}{30} \underset{2,2}{L}{}_{(AB}{}^{B'C'}(\sCurlDagger_{2,2} \Phi)_{C)A'B'C'}
 -  \tfrac{5}{3} \underset{2,2}{L}{}_{(AB|A'|}{}^{B'}(\sTwist_{0,0} \Lambda)_{C)B'}\nonumber\\*
& + \tfrac{7}{10} \underset{2,2}{L}{}_{(A}{}^{DB'C'}(\sTwist_{2,2} \Phi)_{BC)DA'B'C'}
 -  \Phi_{(AB|A'|}{}^{B'}P_{C)B'}
 -  \tfrac{19}{15} \Phi_{(A}{}^{D}{}_{|A'|}{}^{B'}(\sCurl_{2,2} \underset{2,2}{L}{})_{BC)DB'}\nonumber\\*
& + \tfrac{1}{6} \Phi_{(AB}{}^{B'C'}(\sCurlDagger_{2,2} \underset{2,2}{L}{})_{C)A'B'C'}
 -  \tfrac{5}{9} \Phi_{(AB|A'|}{}^{B'}(\sDiv_{2,2} \underset{2,2}{L}{})_{C)B'}
 -  \tfrac{1}{5} \Psi_{(AB}{}^{DF}(\sCurl_{2,2} \underset{2,2}{L}{})_{C)DFA'}\nonumber\\*
& + \tfrac{3}{5} \underset{2,2}{L}{}^{DF}{}_{A'}{}^{B'} (\sTwist_{4,0} \Psi)_{ABCDFB'}
 -  \tfrac{1}{2} (\sTwist_{2,0} \sCurl_{1,1} P)_{ABCA'}
 -  \tfrac{1}{6} (\sTwist_{2,0} \sCurl_{1,1} \sDiv_{2,2} \underset{2,2}{L}{})_{ABCA'}.\label{eq:OrderZeroEq2}
\end{align}
\end{subequations}
Applying the commutators repeatedly we have in general that
\begin{align}
(\sCurlDagger_{4,0} &\sCurl_{3,1} \sCurl_{2,2} \underset{2,2}{L}{})_{ABC}{}^{A'}\nonumber\\
={}&\tfrac{5}{3} \square^{A'}{}_{B'}(\sCurl_{2,2} \underset{2,2}{L}{})_{ABC}{}^{B'}
 -  \tfrac{4}{3} (\sDiv_{4,2} \sTwist_{3,1} \sCurl_{2,2} \underset{2,2}{L}{})_{ABC}{}^{A'}
 -  \square_{(A}{}^{D}(\sCurl_{2,2} \underset{2,2}{L}{})_{BC)D}{}^{A'}\nonumber\\
& + \tfrac{5}{4} (\sTwist_{2,0} \sDiv_{3,1} \sCurl_{2,2} \underset{2,2}{L}{})_{ABC}{}^{A'}\nonumber\\
={}&\tfrac{5}{3} \square^{A'}{}_{B'}(\sCurl_{2,2} \underset{2,2}{L}{})_{ABC}{}^{B'}
 - 2 (\sDiv_{4,2} \sCurl_{3,3} \sTwist_{2,2} \underset{2,2}{L}{})_{ABC}{}^{A'}
 -  \square_{(A}{}^{D}(\sCurl_{2,2} \underset{2,2}{L}{})_{BC)D}{}^{A'}\nonumber\\
& -  \nabla^{DB'}\square_{(AB}\underset{2,2}{L}{}_{C)D}{}^{A'}{}_{B'}
 -  \nabla^{DB'}\square_{(A|D|}\underset{2,2}{L}{}_{BC)}{}^{A'}{}_{B'}
 -  \tfrac{5}{4} \nabla_{(A}{}^{A'}\square^{B'C'}\underset{2,2}{L}{}_{BC)B'C'}\nonumber\\
& + \tfrac{5}{6} (\sTwist_{2,0} \sCurl_{1,1} \sDiv_{2,2} \underset{2,2}{L}{})_{ABC}{}^{A'}\nonumber\\
={}&-10 \Lambda (\sCurl_{2,2} \underset{2,2}{L}{})_{ABC}{}^{A'}
 - 2 \Psi_{ABCD} (\sDiv_{2,2} \underset{2,2}{L}{})^{DA'}
 - 2 (\sDiv_{4,2} \sCurl_{3,3} \sTwist_{2,2} \underset{2,2}{L}{})_{ABC}{}^{A'}\nonumber\\
& + \tfrac{25}{3} \underset{2,2}{L}{}_{(AB}{}^{A'B'}(\sTwist_{0,0}\Lambda)_{C)B'}
 + \tfrac{49}{15} \underset{2,2}{L}{}_{(A}{}^{DA'B'}(\sCurl_{2,2} \Phi)_{BC)DB'}
 + \tfrac{5}{6} \underset{2,2}{L}{}_{(AB}{}^{B'C'}(\sCurlDagger_{2,2} \Phi)_{C)}{}^{A'}{}_{B'C'}\nonumber\\
& -  \tfrac{7}{2} \underset{2,2}{L}{}_{(A}{}^{DB'C'}(\sTwist_{2,2} \Phi)_{BC)D}{}^{A'}{}_{B'C'}
 + \tfrac{19}{3} \Phi_{(A}{}^{DA'B'}(\sCurl_{2,2} \underset{2,2}{L}{})_{BC)DB'}\nonumber\\
& -  \tfrac{5}{6} \Phi_{(AB}{}^{B'C'}(\sCurlDagger_{2,2} \underset{2,2}{L}{})_{C)}{}^{A'}{}_{B'C'}
 + \tfrac{25}{9} \Phi_{(AB}{}^{A'B'}(\sDiv_{2,2} \underset{2,2}{L}{})_{C)B'}\nonumber\\
& + \tfrac{7}{2} \Phi_{(A}{}^{DB'C'}(\sTwist_{2,2} \underset{2,2}{L}{})_{BC)D}{}^{A'}{}_{B'C'}
 -  \Psi_{(AB}{}^{DF}(\sCurl_{2,2} \underset{2,2}{L}{})_{C)DF}{}^{A'}\nonumber\\
& -  \underset{2,2}{L}{}^{DFA'B'} (\sTwist_{4,0} \Psi)_{ABCDFB'}
 + \tfrac{5}{6} (\sTwist_{2,0} \sCurl_{1,1} \sDiv_{2,2} \underset{2,2}{L}{})_{ABC}{}^{A'}.
\end{align}
With this and \eqref{eq:TwistL22}, the equation \eqref{eq:OrderZeroEq2} reduces to
\begin{align}
0={}&P^{D}{}_{A'} \Psi_{ABCD}
 + \tfrac{4}{15} \Psi_{ABCD} (\sDiv_{2,2} \underset{2,2}{L}{})^{D}{}_{A'}
 + \tfrac{4}{25} \underset{2,2}{L}{}_{(A}{}^{D}{}_{|A'|}{}^{B'}(\sCurl_{2,2} \Phi)_{BC)DB'}\nonumber\\*
& + \tfrac{2}{5} \underset{2,2}{L}{}_{(AB}{}^{B'C'}(\sCurlDagger_{2,2} \Phi)_{C)A'B'C'}
 -  \Phi_{(AB|A'|}{}^{B'}P_{C)B'}
 -  \tfrac{2}{5} \Psi_{(AB}{}^{DF}(\sCurl_{2,2} \underset{2,2}{L}{})_{C)DFA'}\nonumber\\*
& + \tfrac{2}{5} \underset{2,2}{L}{}^{DF}{}_{A'}{}^{B'} (\sTwist_{4,0} \Psi)_{ABCDFB'}
 -  \tfrac{1}{2} (\sTwist_{2,0} \sCurl_{1,1} P)_{ABCA'}.
\end{align}
Using the commutator
\begin{align}
(\sTwist_{2,0} \sCurl_{1,1} P)_{ABCA'}={}&2 P^{D}{}_{A'} \Psi_{ABCD}
 + 2 (\sCurl_{2,2} \sTwist_{1,1} P)_{ABCA'}
 - 2 \Phi_{(AB|A'|}{}^{B'}P_{C)B'}.
\end{align}
this becomes
\begin{align}
0={}&- (\sCurl_{2,2} \sTwist_{1,1} P)_{ABCA'}
 + \tfrac{4}{15} \Psi_{ABCD} (\sDiv_{2,2} \underset{2,2}{L}{})^{D}{}_{A'}
 + \tfrac{4}{25} \underset{2,2}{L}{}_{(A}{}^{D}{}_{|A'|}{}^{B'}(\sCurl_{2,2} \Phi)_{BC)DB'}\nonumber\\
& + \tfrac{2}{5} \underset{2,2}{L}{}_{(AB}{}^{B'C'}(\sCurlDagger_{2,2} \Phi)_{C)A'B'C'}
 -  \tfrac{2}{5} \Psi_{(AB}{}^{DF}(\sCurl_{2,2} \underset{2,2}{L}{})_{C)DFA'}
 + \tfrac{2}{5} \underset{2,2}{L}{}^{DF}{}_{A'}{}^{B'} (\sTwist_{4,0} \Psi)_{ABCDFB'}.
\end{align}
However, after substituting \eqref{eq:TwistPeq1} in this equation, decomposing the derivatives into irreducible parts and using \eqref{eq:TwistL22}, this equation actually becomes trivial.

Doing the same calculations as for the Dirac-Weyl case we see that \eqref{eq:CurlDaggerCurlDivLL22Dirac} also holds for the Maxwell case. 
Directly from the commutators we find
\begin{align}
(\sCurlDagger_{2,0} \sCurl_{1,1} P)_{AA'}={}&-6 \Lambda P_{AA'}
 + 2 \Phi_{ABA'B'} P^{BB'}
 -  (\sDiv_{2,2} \sTwist_{1,1} P)_{AA'}
 + \tfrac{3}{4} (\sTwist_{0,0} \sDiv_{1,1} P)_{AA'}.
\end{align}
With this, \eqref{eq:CurlDaggerCurlDivLL22Dirac} and \eqref{eq:TwistL22} we can reduce \eqref{eq:OrderZeroEq1} to
\begin{align}
0={}&- \tfrac{2}{15} \Phi^{CD}{}_{A'}{}^{B'} (\sCurl_{2,2} \underset{2,2}{L}{})_{BCDB'}
 -  \tfrac{4}{15} \Psi_{BCDF} (\sCurl_{2,2} \underset{2,2}{L}{})^{CDF}{}_{A'}
 + \tfrac{2}{15} \underset{2,2}{L}{}^{CD}{}_{A'}{}^{B'} (\sCurl_{2,2} \Phi)_{BCDB'}\nonumber\\
& + \tfrac{4}{15} \bar\Psi_{A'B'C'D'} (\sCurlDagger_{2,2} \underset{2,2}{L}{})_{B}{}^{B'C'D'}
 -  \tfrac{2}{15} \Phi_{B}{}^{CB'C'} (\sCurlDagger_{2,2} \underset{2,2}{L}{})_{CA'B'C'}
 -  \tfrac{2}{5} \underset{2,2}{L}{}_{B}{}^{CB'C'} (\sCurlDagger_{2,2} \Phi)_{CA'B'C'}\nonumber\\
& -  \tfrac{4}{45} \Phi_{BCA'B'} (\sDiv_{2,2} \underset{2,2}{L}{})^{CB'}
 -  \tfrac{2}{3} (\sDiv_{2,2} \sTwist_{1,1} P)_{BA'}
 + \tfrac{4}{15} \underset{2,2}{L}{}_{BCA'B'} (\sTwist_{0,0} \Lambda)^{CB'}
 -  \tfrac{1}{3} (\sTwist_{0,0} \underset{0,0}{N}{})_{BA'}\nonumber\\
& -  \tfrac{1}{5} \underset{2,2}{L}{}^{CDB'C'} (\sTwist_{2,2} \Phi)_{BCDA'B'C'}
 + \tfrac{1}{2} (\sTwist_{0,0} \sDiv_{1,1} P)_{BA'}
 + \tfrac{2}{15} (\sTwist_{0,0} \sDiv_{1,1} \sDiv_{2,2} \underset{2,2}{L}{})_{BA'}.
\end{align}
Using \eqref{eq:TwistPeq1} and the irreducible decompositions, we find
\begin{align}
(\sDiv_{2,2} \sTwist_{1,1} P)_{BA'}={}&- \tfrac{2}{5} \Psi_{BCDF} (\sCurl_{2,2} \underset{2,2}{L}{})^{CDF}{}_{A'}
 + \tfrac{2}{5} \underset{2,2}{L}{}^{CD}{}_{A'}{}^{B'} (\sCurl_{2,2} \Phi)_{BCDB'}\nonumber\\
& + \tfrac{2}{5} \bar\Psi_{A'B'C'D'} (\sCurlDagger_{2,2} \underset{2,2}{L}{})_{B}{}^{B'C'D'}
 -  \tfrac{2}{5} \underset{2,2}{L}{}_{B}{}^{CB'C'} (\sCurlDagger_{2,2} \Phi)_{CA'B'C'}.\label{eq:DivTwistQ2}
\end{align}
To simplify the remaining terms, we use the same trick as for the Dirac-Weyl case. The definition \eqref{eq:UpsilonDef} and the equation \eqref{eq:GradUpsilon} can be used together with \eqref{eq:DivTwistQ2}, to reduce the equation \eqref{eq:OrderZeroEq2} to
\begin{align}
0={}&- \tfrac{1}{3} (\sTwist_{0,0} \underset{0,0}{N}{})_{BA'}
 -  \tfrac{1}{5} (\sTwist_{0,0} \Upsilon)_{BA'}
 + \tfrac{1}{2} (\sTwist_{0,0} \sDiv_{1,1} P)_{BA'}
 + \tfrac{2}{15} (\sTwist_{0,0} \sDiv_{1,1} \sDiv_{2,2} \underset{2,2}{L}{})_{BA'}.\label{eq:OrderZeroEq3}
\end{align}
We therefore make the ansatz
\begin{align}
\underset{0,0}{N}{}={}&3 Q
 -  \tfrac{3}{5} \Upsilon
 + \tfrac{3}{2} (\sDiv_{1,1} P)
 + \tfrac{2}{5} (\sDiv_{1,1} \sDiv_{2,2} \underset{2,2}{L}{}).\label{eq:N00Ansatz}
\end{align}
Now, \eqref{eq:OrderZeroEq3} becomes
\begin{align}
0={}&(\sTwist_{0,0} Q)_{AA'}.\label{eq:OrderZeroEq4}
\end{align}
\subsubsection{Potential representation}
From all this we can conclude that the only equations that restrict the geometry are \eqref{eq:TwistL22} and \eqref{eq:TwistPeq1}. Now, the operator takes the form
\begin{align}
\chi_{AB}={}&\tfrac{1}{3} \underset{0,0}{N}{} \phi_{AB}
 + \underset{4,0}{N}{}_{ABCD} \phi^{CD}
 -  \underset{2,0}{N}{}_{(A}{}^{C}\phi_{B)C}
 -  \tfrac{4}{5} (\sCurl_{2,2} \underset{2,2}{L}{})_{(A}{}^{CDA'}(\sTwist_{2,0} \phi)_{B)CDA'}\nonumber\\
& + \tfrac{1}{2} \underset{1,1}{M}{}_{CA'} (\sTwist_{2,0} \phi)_{AB}{}^{CA'}
 + \tfrac{3}{5} \underset{2,2}{L}{}^{CDA'B'} (\sTwist_{3,1} \sTwist_{2,0} \phi)_{ABCDA'B'}.
\label{eq:chiform1}
\end{align}
where $\underset{0,0}{N}$, $\underset{2,0}{N}{}_{AB}$, $\underset{4,0}{N}{}_{ABCD}$, $\underset{1,1}{M}{}_{AA'}$ are given by \eqref{eq:N00Ansatz}, \eqref{eq:N20eq3}, \eqref{eq:N40eq1} and \eqref{eq:M11Ansatz} respectively.

We can in fact simplify this expression by defining the following spinor
\begin{align}
A_{AA'}\equiv{}&P_{BA'} \phi_{A}{}^{B}
 + \tfrac{1}{5} \phi^{BC} (\sCurl_{2,2} \underset{2,2}{L}{})_{ABCA'}
 + \tfrac{4}{15} \phi_{A}{}^{B} (\sDiv_{2,2} \underset{2,2}{L}{})_{BA'}
 + \tfrac{3}{5} \underset{2,2}{L}{}_{BCA'B'} (\sTwist_{2,0} \phi)_{A}{}^{BCB'}.
\end{align}
Substituting this onto the following, and comparing with \eqref{eq:chiform1}, we find
\begin{align}
(\sCurl_{1,1} A)_{AB}={}&- Q \phi_{AB}
 + \chi_{AB}
 -  \tfrac{1}{15} \phi_{(A}{}^{C}(\sCurl_{1,1} \sDiv_{2,2} \underset{2,2}{L}{})_{B)C}
 + \tfrac{1}{10} \phi_{(A}{}^{C}(\sDiv_{3,1} \sCurl_{2,2} \underset{2,2}{L}{})_{B)C}\nonumber\\
& -  \tfrac{1}{10} \underset{2,2}{L}{}_{(A}{}^{CA'B'}\Phi_{|C}{}^{D}{}_{A'B'|}\phi_{B)D}
 -  \tfrac{1}{10} \underset{2,2}{L}{}^{CDA'B'}\Phi_{(A|CA'B'|}\phi_{B)D}\nonumber\\
={}&- Q \phi_{AB}
 + \chi_{AB},\label{eq:CurlAeq1}
\end{align}
where the last equality follows from a commutator relation. In fact the coefficients in $A_{AA'}$ were initially left free, and then chosen so all first and second order derivatives of $\phi_{AB}$ where eliminated in \eqref{eq:CurlAeq1}.

We also get
\begin{align}
(\sCurlDagger_{1,1} A)_{A'B'}={}&\tfrac{12}{5} \Lambda \underset{2,2}{L}{}_{ABA'B'} \phi^{AB}
 -  \tfrac{3}{5} \underset{2,2}{L}{}^{CD}{}_{A'B'} \Psi_{ABCD} \phi^{AB}
 + \tfrac{1}{5} \phi^{AB} (\sCurlDagger_{3,1} \sCurl_{2,2} \underset{2,2}{L}{})_{ABA'B'}\nonumber\\
& -  \tfrac{6}{5} \underset{2,2}{L}{}^{AB}{}_{(A'}{}^{C'}\Phi_{|A|}{}^{C}{}_{B')C'}\phi_{BC}
 -  \phi^{AB} (\sTwist_{1,1} P)_{ABA'B'}\nonumber\\
& -  \tfrac{3}{5} (\sTwist_{2,2} \underset{2,2}{L}{})_{ABCA'B'C'} (\sTwist_{2,0} \phi)^{ABCC'}
 -  \tfrac{4}{15} \phi^{AB} (\sTwist_{1,1} \sDiv_{2,2} \underset{2,2}{L}{})_{ABA'B'}\nonumber\\
={}&0.
\end{align}
where we in the last step used \eqref{eq:TwistPeq1}, a commutator and \eqref{eq:TwistL22}

To get the highest order coefficient equal to 1 in $A_{AA'}$ and in $\chi_{AB}$, we define a new symmetric spinor, which is just a rescaling of $\underset{2,2}{L}{}_{ABA'B'}$
\begin{align}
L_{ABA'B'}\equiv{}&\tfrac{3}{5} \underset{2,2}{L}{}_{ABA'B'}.
\end{align}
Now, the only equations we have left are
\begin{subequations}
\begin{align}
(\sTwist_{2,2} L)_{ABC}{}^{A'B'C'}={}&0,\\
(\sTwist_{1,1} P)_{AB}{}^{A'B'}={}&- \tfrac{2}{3} (\ObstrOne L)_{AB}{}^{A'B'},\\
(\sTwist_{0,0} Q)_{BA'}={}&0,\\
A_{AA'}={}&P_{BA'} \phi_{A}{}^{B}
 + \tfrac{1}{3} \phi^{BC} (\sCurl_{2,2} L)_{ABCA'}
 + \tfrac{4}{9} \phi_{A}{}^{B} (\sDiv_{2,2} L)_{BA'}\nonumber\\
& -  L^{BC}{}_{A'}{}^{B'} (\sTwist_{2,0} \phi)_{ABCB'}.
\end{align}
\end{subequations}
\end{proof}

\subsection{Second kind of symmetry operator for the Maxwell equation}\label{subsection:second_kind}
For the symmetry operators of the second kind, one can follow the same procedure as above. However, this case was completely handled in \cite{KalMcLWil92a}. In that paper it was shown that a symmetry operator of the second kind always has the form $\phi_{AB}\rightarrow \omega_{A'B'}$,
\begin{align}
\omega_{A'B'}={}&\tfrac{3}{5} \phi^{CD} (\sCurlDagger_{3,1} \sCurlDagger_{4,0} L)_{CDA'B'}
 -  \tfrac{8}{5} (\sCurlDagger_{4,0} L)^{CDF}{}_{(A'}(\sTwist_{2,0} \phi)_{|CDF|B')}\nonumber\\
& + L^{CDFH} (\sTwist_{3,1} \sTwist_{2,0} \phi)_{CDFHA'B'},
\end{align}
where $L_{ABCD}=L_{(ABCD)}$ satisfies
\begin{align}
(\sTwist_{4,0} L)_{ABCDF}{}^{A'}={}&0.\label{eq:LSecKillingSpinor}
\end{align}
Hence, the treatment in \cite{KalMcLWil92a} is satisfactory. However, it is interesting to see if the operator can be written in terms of a potential. Let
\begin{align}
B_{AA'}\equiv{}&\tfrac{3}{5} \phi^{BC} (\sCurlDagger_{4,0} L)_{ABCA'}
 + L_{ABCD} (\sTwist_{2,0} \phi)^{BCD}{}_{A'}.\label{eq:BDef}
\end{align}
Then, from the definition of $\sCurlDagger$, the irreducible decompositions and \eqref{eq:LSecKillingSpinor} we get
\begin{align}
(\sCurlDagger_{1,1} B)_{A'B'}={}&\tfrac{3}{5} \phi^{AB} (\sCurlDagger_{3,1} \sCurlDagger_{4,0} L)_{ABA'B'}
 -  \tfrac{8}{5} (\sCurlDagger_{4,0} L)^{ABC}{}_{(A'}(\sTwist_{2,0} \phi)_{|ABC|B')}\nonumber\\
& + L^{ABCD} (\sTwist_{3,1} \sTwist_{2,0} \phi)_{ABCDA'B'}\nonumber\\
={}&\omega_{A'B'}.\label{eq:CurlDaggerB}
\end{align}
The coefficients in \eqref{eq:BDef} where initially left free, and then chosen to get \eqref{eq:CurlDaggerB}.

We also get
\begin{align}
(\sCurl_{1,1} B)_{AB}={}&6 \Lambda L_{ABCD} \phi^{CD}
 -  \tfrac{3}{2} L_{AB}{}^{FH} \Psi_{CDFH} \phi^{CD}
 + \tfrac{3}{5} \phi^{CD} (\sCurl_{3,1} \sCurlDagger_{4,0} L)_{ABCD}\nonumber\\
& + \tfrac{3}{10} \phi_{(A}{}^{C}(\sDiv_{3,1} \sCurlDagger_{4,0} L)_{B)C}
 -  \tfrac{3}{2} L_{(A}{}^{CDF}\Psi_{B)CD}{}^{H}\phi_{FH}
 -  \tfrac{1}{2} L_{(A}{}^{CDF}\Psi_{|CDF|}{}^{H}\phi_{B)H}\nonumber\\
& -  (\sTwist_{4,0} L)_{ABCDFA'} (\sTwist_{2,0} \phi)^{CDFA'}\nonumber\\
={}&- \tfrac{1}{2} \phi^{CD} (\sDiv_{5,1} \sTwist_{4,0} L)_{ABCD}
 -  \tfrac{4}{5} \phi_{B}{}^{C} L_{(A}{}^{DFH}\Psi_{C)DFH}
 -  \tfrac{4}{5} \phi_{A}{}^{C} L_{(B}{}^{DFH}\Psi_{C)DFH}\nonumber\\
& -  (\sTwist_{4,0} L)_{ABCDFA'} (\sTwist_{2,0} \phi)^{CDFA'}\nonumber\\
={}&0.
\end{align}
Here, we have used \eqref{eq:LSecKillingSpinor} together with the irreducible decomposition of $L_{AB}{}^{FH} \Psi_{CDFH}$ and the relations
\begin{subequations}
\begin{align}
(\sDiv_{3,1} \sCurlDagger_{4,0} L)_{AB}={}&-2 L_{(A}{}^{CDF}\Psi_{B)CDF},\\
(\sCurl_{3,1} \sCurlDagger_{4,0} L)_{ABCD}={}&-10 \Lambda L_{ABCD}
 -  \tfrac{5}{6} (\sDiv_{5,1} \sTwist_{4,0} L)_{ABCD}
 + 5 L_{(AB}{}^{FH}\Psi_{CD)FH},\\
L_{(B}{}^{DFH}\Psi_{C)DFH}={}&0.
\end{align}
\end{subequations}
The last equation follows from the integrability condition (cf. Section~\ref{sec:integrabilitycond})
\begin{align}
L_{(ABC}{}^{L}\Psi_{DFH)L} ={}& - \tfrac{1}{4} (\sCurl_{5,1} \sTwist_{4,0} L)_{ABCDFH}=0,\label{eq:intcondKS4}
\end{align}
as explained in \cite{KalMcLWil92a}.

\section{Factorizations} \label{sec:factorizations} 
In this section we will consider special cases for which the auxiliary conditions will always have a solution. 
We will now prove Proposition~\ref{prop:factorize}, considering each case in turn.

\subsection{The case when $L_{ABA'B'}$ factors in terms of conformal Killing vectors}
\begin{proof}[Proof of Proposition~\ref{prop:factorize} part (i)]
If $\xi_{AA'}$ and $\zeta_{AA'}$ are conformal Killing vectors, i.e. 
\begin{align}
(\sTwist_{1,1}\xi)_{AB}{}^{A'B'}&=0,&
(\sTwist_{1,1}\zeta)_{AB}{}^{A'B'}&=0,\label{eq:xizetaCKV}
\end{align} 
then we have a solution 
\begin{align}
\mathfrak{L}_{\xi\zeta AB}{}^{A'B'}\equiv{}&\zeta_{(A}{}^{(A'}\xi_{B)}{}^{B')}\label{eq:Lxizetadef}
\end{align}
to the equation
\begin{align}
(\sTwist_{2,2}\mathfrak{L}_{\xi\zeta})_{ABC}{}^{A'B'C'}&=0.
\end{align}
Let
\begin{subequations}
\begin{align}
\mathfrak{Q}_{\xi\zeta}\equiv{}&\Lambda \zeta^{AA'} \xi_{AA'}
 + \tfrac{1}{3} \Phi_{ABA'B'} \zeta^{AA'} \xi^{BB'}
 + \tfrac{1}{8} (\sCurl_{1,1} \zeta)^{AB} (\sCurl_{1,1} \xi)_{AB}\nonumber\\
& + \tfrac{1}{6} \xi^{AA'} (\sCurl_{0,2} \sCurlDagger_{1,1} \zeta)_{AA'}
 + \tfrac{1}{6} \zeta^{AA'} (\sCurl_{0,2} \sCurlDagger_{1,1} \xi)_{AA'}
 + \tfrac{1}{8} (\sCurlDagger_{1,1} \zeta)^{A'B'} (\sCurlDagger_{1,1} \xi)_{A'B'}\nonumber\\
& -  \tfrac{1}{32} (\sDiv_{1,1} \zeta) (\sDiv_{1,1} \xi),\label{eq:Qxizetadef}\\
\mathfrak{P}_{\xi\zeta AA'}\equiv{}&\tfrac{1}{4} \xi^{B}{}_{A'} (\sCurl_{1,1} \zeta)_{AB}
 + \tfrac{1}{4} \zeta^{B}{}_{A'} (\sCurl_{1,1} \xi)_{AB}
 -  \tfrac{1}{4} \xi_{A}{}^{B'} (\sCurlDagger_{1,1} \zeta)_{A'B'}
 -  \tfrac{1}{4} \zeta_{A}{}^{B'} (\sCurlDagger_{1,1} \xi)_{A'B'}.\label{eq:Pxizetadef}
\end{align}
\end{subequations}

Applying the $\sTwist$ operator to the equation \eqref{eq:Qxizetadef}, decomposing the derivatives into irreducible parts and using \eqref{eq:xizetaCKV} gives a long expression with the operators $\sDiv$, $\sCurl$, $\sCurlDagger$, $\sCurl\sCurlDagger$, $\sCurlDagger\sCurl$, $\sTwist\sDiv$, $\sTwist\sCurl$, $\sTwist\sCurlDagger$, $\sDiv\sCurl\sCurlDagger$, $\sCurl\sCurl\sCurlDagger$ and $\sCurlDagger\sCurl\sCurlDagger$ operating on $\xi^{AA'}$ and $\zeta^{AA'}$. Using the commutators \eqref{eq:DivCurl}, \eqref{eq:CurlTwist}, \eqref{eq:CurlDaggerTwist}, \eqref{eq:DivTwistCurlCurlDagger} and \eqref{eq:DivTwistCurlDaggerCurl} on the outermost operators and using \eqref{eq:xizetaCKV}, the list of operators appearing can be reduced to the set $\sDiv$, $\sCurl$, $\sCurlDagger$, $\sDiv\sTwist\sCurlDagger$, $\sCurl\sTwist\sCurlDagger$ and $\sCurl\sCurl\sCurlDagger$. Then using the relations \eqref{eq:CurlDaggerTwist} and \eqref{eq:DivTwistCurlCurlDagger} on the innermost operators the list of operators appearing is reduced to $\sCurl$, $\sCurlDagger$, $\sCurl\sTwist\sDiv$, where the latter can be eliminated with \eqref{eq:CurlTwist} on the outer operators. After making an irreducible decomposition of $\xi^{AA'}\zeta^{BB'}$ and identifying the symmetric part though \eqref{eq:Lxizetadef}, one is left with
\begin{align}
(\sTwist_{0,0} \mathfrak{Q}_{\xi\zeta})_{A}{}^{A'}={}&\mathfrak{L}_{\xi\zeta}{}^{BCA'B'} (\sCurl_{2,2} \Phi)_{ABCB'}
 + \tfrac{1}{4} \Psi_{ABCD} \xi^{BA'} (\sCurl_{1,1} \zeta)^{CD}
 + \tfrac{1}{4} \Psi_{ABCD} \zeta^{BA'} (\sCurl_{1,1} \xi)^{CD}\nonumber\\
& + \mathfrak{L}_{\xi\zeta}{}_{A}{}^{BB'C'} (\sCurlDagger_{2,2} \Phi)_{B}{}^{A'}{}_{B'C'}
 + \tfrac{1}{4} \bar\Psi^{A'}{}_{B'C'D'} \xi_{A}{}^{B'} (\sCurlDagger_{1,1} \zeta)^{C'D'}\nonumber\\
& + \tfrac{1}{4} \bar\Psi^{A'}{}_{B'C'D'} \zeta_{A}{}^{B'} (\sCurlDagger_{1,1} \xi)^{C'D'}.\label{eq:TwistQxizetaEq1}
\end{align}

Applying the $\sTwist$ operator to the equation \eqref{eq:Pxizetadef}, decomposing the derivatives into irreducible parts and using \eqref{eq:xizetaCKV} gives a expression with the operators  $\sCurl\sCurlDagger$, $\sCurlDagger\sCurl$, $\sTwist\sCurl$, $\sTwist\sCurlDagger$ operating on $\xi^{AA'}$ and $\zeta^{AA'}$. Using the commutators \eqref{eq:CurlTwist}, \eqref{eq:CurlDaggerTwist} and \eqref{eq:CurlCurlDagger} and using \eqref{eq:xizetaCKV}, the entire expression can be reduced to only contain curvature terms. After making an irreducible decomposition of $\xi^{AA'}\zeta^{BB'}$ and identifying the symmetric part though \eqref{eq:Lxizetadef}, one is left with
\begin{align}
(\sTwist_{1,1} \mathfrak{P}_{\xi\zeta}{})_{AB}{}^{A'B'}={}&\mathfrak{L}_{\xi\zeta}{}^{CDA'B'} \Psi_{ABCD}
 -  \mathfrak{L}_{\xi\zeta}{}_{AB}{}^{C'D'} \bar\Psi^{A'B'}{}_{C'D'}.\label{eq:TwistPxizetaEq1}
\end{align}
Substituting \eqref{eq:Lxizetadef} into the definition of $\ObstrZero$, allows us to see that  \eqref{eq:TwistQxizetaEq1} and \eqref{eq:TwistPxizetaEq1} reduces to
\begin{subequations}
\begin{align}
(\sTwist_{0,0} \mathfrak{Q}_{\xi\zeta})_{A}{}^{A'}={}&(\ObstrZero \mathfrak{L}_{\xi\zeta})_{A}{}^{A'},\label{eq:Qxizetaaux}\\
(\sTwist_{1,1} \mathfrak{P}_{\xi\zeta})_{AB}{}^{A'B'}={}&(\ObstrOne \mathfrak{L}_{\xi\zeta})_{AB}{}^{A'B'}.\label{eq:Pxizetaaux}
\end{align}
\end{subequations}
The actual form of \eqref{eq:Qxizetadef} and \eqref{eq:Pxizetadef} was obtained by making sufficiently general symmetric second order bi-linear ans\"atze. The coefficients where then chosen to eliminate as many extra terms as possible in \eqref{eq:Qxizetaaux} and \eqref{eq:Pxizetaaux}.
\end{proof}

\subsection{The case when $L_{ABA'B'}$ factors in terms of Killing spinors}
Another way of constructing conformal Killing tensors is to make a product of valence $(2,0)$ and valence $(0,2)$ Killing spinors. It turns out that also this case admits solutions to the auxiliary conditions.

In principle we could construct $L_{ABA'B'}$ from two different Killing spinors, but if the dimension of the space of Killing spinors is greater than one, the spacetime has to be locally isometric to Minkowski space. In these spacetimes the picture is much simpler and has been studied before. The auxiliary conditions will be trivial in these cases. We will therefore only consider one Killing spinor.
\begin{proof}[Proof of Proposition~\ref{prop:factorize} part (ii)]
Let $\kappa_{AB}$ be a Killing spinor, i.e. a solution to 
\begin{align}
(\sTwist_{2,0} \kappa)_{ABCA'}={}&0.\label{eq:kappaKS}
\end{align}
We have a solution 
\begin{align}\label{eq:Lkappadef}
\mathfrak{L}_{\kappa AB}{}^{A'B'}\equiv{}&\kappa_{AB}\bar\kappa^{A'B'},
\end{align}
to the equation
\begin{align}
(\sTwist_{2,2}\mathfrak{L}_{\kappa})_{ABC}{}^{A'B'C'}&=0.
\end{align}
Now, let
\begin{subequations}
\begin{align}
\mathfrak{Q}_{\kappa}\equiv{}&\tfrac{2}{3} \Phi_{ABA'B'} \kappa^{AB} \bar{\kappa}^{A'B'}
 + \tfrac{1}{9} \kappa^{AB} (\sCurl_{1,1} \sCurl_{0,2} \bar{\kappa})_{AB}
 + \tfrac{4}{27} (\sCurl_{0,2} \bar{\kappa})^{AA'} (\sCurlDagger_{2,0} \kappa)_{AA'}\nonumber\\
& + \tfrac{1}{9} \bar{\kappa}^{A'B'} (\sCurlDagger_{1,1} \sCurlDagger_{2,0} \kappa)_{A'B'},\label{eq:Qkappadef}\\
\mathfrak{P}_{\kappa AA'}\equiv{}&\tfrac{4}{3} \kappa_{AB} (\sCurl_{0,2} \bar{\kappa})^{B}{}_{A'}
 -  \tfrac{4}{3} \bar{\kappa}_{A'B'} (\sCurlDagger_{2,0} \kappa)_{A}{}^{B'}.\label{eq:Pkappadef}
\end{align}
\end{subequations}

Applying the $\sTwist$ operator to the equation \eqref{eq:Qkappadef}, decomposing the derivatives into irreducible parts and using \eqref{eq:kappaKS} gives a long expression with the operators $\sCurl$, $\sCurlDagger$, $\sDiv\sCurl$, $\sDiv\sCurlDagger$, $\sCurl\sCurlDagger$, $\sCurlDagger\sCurl$, $\sTwist\sCurl$, $\sTwist\sCurlDagger$, $\sCurl\sCurlDagger\sCurlDagger$, $\sCurlDagger\sCurl\sCurl$, $\sTwist\sCurl\sCurl$ and $\sTwist\sCurlDagger\sCurlDagger$ operating on $\kappa_{AB}$ and $\bar\kappa_{A'B'}$. Using the commutators \eqref{eq:DivCurl}, \eqref{eq:DivCurlDagger}, \eqref{eq:CurlTwist}, \eqref{eq:CurlDaggerTwist}, \eqref{eq:DivTwistCurlCurlDagger} and \eqref{eq:DivTwistCurlDaggerCurl} on the outermost operators and using \eqref{eq:kappaKS}, the list of operators appearing can be reduced to the set $\sCurl$, $\sCurlDagger$, $\sCurl\sTwist\sCurl$, $\sCurlDagger\sTwist\sCurlDagger$, $\sDiv\sTwist\sCurl$ and $\sDiv\sTwist\sCurlDagger$. Then using the relations \eqref{eq:DivCurl}, \eqref{eq:DivCurlDagger}, \eqref{eq:CurlTwist} and \eqref{eq:CurlDaggerTwist} on the innermost operators the expression will only contain the operators $\sCurl$, $\sCurlDagger$.
\begin{align}
(\sTwist_{0,0} \mathfrak{Q}_{\kappa})_{A}{}^{A'}={}&\kappa^{BC} \bar{\kappa}^{A'B'} (\sCurl_{2,2} \Phi)_{ABCB'}
 -  \tfrac{2}{9} \Phi_{BC}{}^{A'}{}_{B'} \kappa^{BC} (\sCurl_{0,2} \bar{\kappa})_{A}{}^{B'}
 + \tfrac{1}{3} \Psi_{ABCD} \kappa^{CD} (\sCurl_{0,2} \bar{\kappa})^{BA'}\nonumber\\
& + \tfrac{2}{9} \Phi_{BC}{}^{A'}{}_{B'} \kappa_{A}{}^{C} (\sCurl_{0,2} \bar{\kappa})^{BB'}
 -  \tfrac{2}{9} \Phi_{AC}{}^{A'}{}_{B'} \kappa_{B}{}^{C} (\sCurl_{0,2} \bar{\kappa})^{BB'}\nonumber\\
& + \kappa_{A}{}^{B} \bar{\kappa}^{B'C'} (\sCurlDagger_{2,2} \Phi)_{B}{}^{A'}{}_{B'C'}
 + \tfrac{1}{3} \bar\Psi^{A'}{}_{B'C'D'} \bar{\kappa}^{C'D'} (\sCurlDagger_{2,0} \kappa)_{A}{}^{B'}\nonumber\\
& -  \tfrac{2}{9} \Phi_{ABB'C'} \bar{\kappa}^{B'C'} (\sCurlDagger_{2,0} \kappa)^{BA'}
 + \tfrac{2}{9} \Phi_{ABB'C'} \bar{\kappa}^{A'C'} (\sCurlDagger_{2,0} \kappa)^{BB'}\nonumber\\
& -  \tfrac{2}{9} \Phi_{AB}{}^{A'}{}_{C'} \bar{\kappa}_{B'}{}^{C'} (\sCurlDagger_{2,0} \kappa)^{BB'}.\label{eq:TwistQkappaEq1}
\end{align}

Applying the $\sTwist$ operator to the equation \eqref{eq:Pkappadef}, decomposing the derivatives into irreducible parts and using \eqref{eq:kappaKS} gives an expression with the operators $\sCurl\sCurlDagger$, $\sCurlDagger\sCurl$, $\sTwist\sCurl$ and $\sTwist\sCurlDagger$ operating on $\kappa_{AB}$ and $\bar\kappa_{A'B'}$. Using the commutators \eqref{eq:CurlTwist}, \eqref{eq:CurlDaggerTwist}, \eqref{eq:DivTwistCurlCurlDagger} and \eqref{eq:DivTwistCurlDaggerCurl} and using \eqref{eq:kappaKS}, the expression reduces to
\begin{align}
(\sTwist_{1,1} \mathfrak{P}_{\kappa}{})_{AB}{}^{A'B'}={}&\Psi_{ABCD} \kappa^{CD} \bar{\kappa}^{A'B'}
 -  \bar\Psi^{A'B'}{}_{C'D'} \kappa_{AB} \bar{\kappa}^{C'D'}.\label{eq:TwistPkappaEq1}
\end{align}

Substituting \eqref{eq:Lkappadef} into the definition of $\ObstrZero$, and making an irreducible decomposition of $\kappa_{AB} (\sCurl_{0,2} \bar{\kappa})_{C}{}^{B'}$ and $\bar{\kappa}_{A'B'} (\sCurlDagger_{2,0} \kappa)_{AC'}$, allows us to see that \eqref{eq:TwistQkappaEq1} and \eqref{eq:TwistPkappaEq1} reduces to
\begin{subequations}
\begin{align}
(\sTwist_{0,0} \mathfrak{Q}_{\kappa})_{A}{}^{A'}={}&(\ObstrZero \mathfrak{L}_{\kappa})_{A}{}^{A'},\\
(\sTwist_{1,1} \mathfrak{P}_{\kappa})_{AB}{}^{A'B'}={}&(\ObstrOne \mathfrak{L}_{\kappa})_{AB}{}^{A'B'}.
\end{align}
\end{subequations}
\end{proof}

\subsection{Example of a conformal Killing tensor that does not factor}
The following shows that the condition \ref{point:A0} is non-trivial. We also see that \ref{point:A1} does not imply \ref{point:A0}. Unfortunately, we have not found any example of a valence $(1,1)$ Killing spinor which does not satisfy \ref{point:A1}.

Consider the following St\"ackel metric (see \cite{michel:radoux:silhan:2013arXiv1308.1046M} and \cite{kalnins:mclenaghan:BelgAcad1984} for more general examples.)
\begin{equation}
g_{ab} = dt^2-  dz^2 -  (x + y)(dx^2 + dy^2) 
\end{equation}
with the tetrad
\begin{align*}
l^{a}={}&\tfrac{1}{\sqrt{2}}(\partial_t)^{a}
 + \tfrac{1}{\sqrt{2}}(\partial_z)^{a},&
n^{a}={}&\tfrac{1}{\sqrt{2}}(\partial_t)^{a}
 -  \tfrac{1}{\sqrt{2}}(\partial_z)^{a},&
m^{a}={}&\frac{(\partial_x)^{a}}{\sqrt{2} (x + y)^{1/2}}
 + \frac{i (\partial_y)^{a}}{\sqrt{2} (x + y)^{1/2}}.
\end{align*}
Expressed in the corresponding dyad $(o_A,\iota_A)$, the curvature takes the form
\begin{align}
\Psi_{ABCD}={}&-12 \Lambda o_{(A}o_{B}\iota_{C}\iota_{D)},&
\Phi_{ABA'B'}={}&12 \Lambda o_{(A}\iota_{B)} \bar o_{(A'}\bar\iota_{B')},&
\Lambda={}& \frac{1}{12 (x + y)^3}.
\end{align}
We can see that the spinor
\begin{align}
L_{AB}{}^{A'B'}={}&\tfrac{1}{2} (x + y) 
(\bar o^{A'} \bar o^{B'} \iota_{A} \iota_{B}
 + o_{A} o_{B} \bar\iota^{A'} \bar\iota^{B'})
 - (x - y) o_{(A}\iota_{B)} \bar o^{(A'}\bar\iota^{B')}
\end{align}
is a trace-free conformal Killing tensor. We trivially have solutions to the auxiliary condition \ref{point:A1} because
\begin{align} 
(\ObstrOne L)_{AB}{}^{A'B'} = {}&L{}^{CDA'B'} \Psi_{ABCD}
 - L{}_{AB}{}^{C'D'} \bar\Psi^{A'B'}{}_{C'D'} = 0.
\end{align}
If there is a solution to \eqref{eq:A0} we will automatically have $(\sCurl_{1,1}\ObstrZero L)_{AB}=0$ because 
$\sCurl_{1,1}\sTwist_{0,0}=0$. 
However, with the current $L_{AB}{}^{A'B'}$ we get
\begin{align} 
(\sCurl_{1,1} \ObstrZero L)_{AB}={}&\frac{5i o_{(A}\iota_{B)}}{(x + y)^5}.
\end{align}
This is non vanishing, which means that the auxiliary condition \ref{point:A0} does not admit a solution. This example shows that the conditions \ref{point:A0} and \ref{point:A1} are not equivalent.
From the previous two sections, we can also conclude that this $L_{AB}{}^{A'B'}$ can not be written as a linear combination of conformal Killing tensors of the form $\zeta_{(A}{}^{(A'}\xi_{B)}{}^{B')}$ or $\kappa_{AB} \bar{\kappa}_{A'B'}$. For the more general metric in \cite{kalnins:mclenaghan:BelgAcad1984} we can in fact also construct a valence $(2,2)$ Killing spinor which trivially satisfies condition \ref{point:A1}, but which in general will not satisfy condition \ref{point:A0}. It is interesting to note that in general this metric does not admit Killing vectors, but we can still construct symmetry operators for the Maxwell equation.

\subsection{Auxiliary condition for a symmetry operator of the second kind for the Dirac-Weyl equation}
\begin{proof}[Proof of Proposition~\ref{prop:factorize} part (iii)]
Let $\kappa_{AB}$ be a Killing spinor, and $\xi^{AA'}$ a conformal Killing vector, i.e. 
\begin{align}
(\sTwist_{2,0} \kappa)_{ABCA'}={}&0, &(\sTwist_{1,1}\xi)_{AB}{}^{A'B'}&=0.\label{eq:TwistkappaTwistxi}
\end{align}
then we have a solution 
\begin{align}
\mathfrak{L}_{\kappa\xi ABC}{}^{A'}\equiv{}&\kappa_{(AB}\xi_{C)}{}^{A'}
\end{align}
to the equation
\begin{align}
(\sTwist_{3,1}\mathfrak{L}_{\kappa\xi})_{ABCD}{}^{A'B'}&=0.
\end{align}
The auxiliary equation \eqref{eq:A1/2*} now takes the form
\begin{align}
0={}&\tfrac{3}{4} \Psi_{ABDF} \kappa^{CD} (\sCurl_{1,1} \xi)_{C}{}^{F}
 + \Psi_{ABCD} \xi^{CA'} (\sCurlDagger_{2,0} \kappa)^{D}{}_{A'}
 -  \tfrac{3}{4} \Psi_{ABCD} \kappa^{CD} (\sDiv_{1,1} \xi)\nonumber\\
& -  \tfrac{5}{4} \Psi_{(A}{}^{CDF}\kappa_{B)C}(\sCurl_{1,1} \xi)_{DF}
 -  \tfrac{5}{4} \Psi_{(A}{}^{CDF}\kappa_{|CD|}(\sCurl_{1,1} \xi)_{B)F}
 + \tfrac{6}{5} \kappa_{(A}{}^{C}\xi^{DA'}(\sCurl_{2,2} \Phi)_{B)CDA'}\nonumber\\
& + \tfrac{3}{5} \kappa^{CD}\xi_{(A}{}^{A'}(\sCurl_{2,2} \Phi)_{B)CDA'}
 - 2 \kappa^{DF} \xi^{CA'} (\sTwist_{4,0} \Psi)_{ABCDFA'}.\label{eq:auxcondDirac2factored}
\end{align}
Using the technique from Section~\ref{sec:integrabilitycond} we get that the integrability conditions for \eqref{eq:TwistkappaTwistxi} are
\begin{subequations}
\begin{align}
0={}&\Psi_{(ABC}{}^{F}\kappa_{D)F},\label{eq:intcondkappaDirac2}\\
0={}&\tfrac{1}{2} \Psi_{ABCD} (\sDiv_{1,1} \xi)
 + 2 \Psi_{(ABC}{}^{F}(\sCurl_{1,1} \xi)_{D)F}
 -  \tfrac{4}{5} \xi_{(A}{}^{A'}(\sCurl_{2,2} \Phi)_{BCD)A'}
 + \xi^{FA'} (\sTwist_{4,0} \Psi)_{ABCDFA'}.\label{eq:intcondxiDirac2}
\end{align}
\end{subequations}
Applying the operator $\sCurlDagger$ on the condition \eqref{eq:intcondkappaDirac2} gives
\begin{align}
0={}&- \tfrac{1}{2} \Psi_{ABCD} (\sCurlDagger_{2,0} \kappa)^{D}{}_{A'}
 -  \tfrac{9}{10} \kappa_{(A}{}^{D}(\sCurl_{2,2} \Phi)_{BC)DA'}
 + \tfrac{1}{4} \kappa^{DF} (\sTwist_{4,0} \Psi)_{ABCDFA'}.\label{eq:curlintcondkappaDirac2}
\end{align}
Using \eqref{eq:intcondxiDirac2} to elliminate $\Psi_{ABCD} (\sDiv_{1,1} \xi)$ and \eqref{eq:curlintcondkappaDirac2} to elliminate $\kappa^{DF} (\sTwist_{4,0} \Psi)_{ABCDFA'}$, and doing an irreducible decomposition of $\Psi_{ABCF} \kappa_{D}{}^{F}$ we see that \eqref{eq:auxcondDirac2factored} reduces to
\begin{align}
0={}&-2 (\sCurl_{1,1} \xi)^{CD} \Psi_{(ABC}{}^{F}\kappa_{D)F},
\end{align}
which is trivially satisfied due to \eqref{eq:intcondkappaDirac2}.
\end{proof}

\subsection{Factorization of valence $(4,0)$ Killing spinors with aligned matter}\label{sec:factorValence4}
\begin{proof}[Proof of Theorem~\ref{thm:Valence4Factorization}]
Assume that the matter field and the curvature are aligned, that is
\begin{align}
0={}&\Psi_{(ABC}{}^{F}\Phi_{D)FA'B'}.\label{eq:AlignmentPsiPhi}
\end{align}
Furthermore, assume that $\Psi_{ABCD}$ does not vanish, and assume that there is a solution $L_{ABCD}$ to
\begin{equation}
(\sTwist_{4,0}L)_{ABCDEA'} = 0.\label{eq:KS4}
\end{equation}
The integrability condition \eqref{eq:intcondKS4} for this equation together with the non-vanishing of the Weyl spinor, gives that $L_{ABCD}$ and $\Psi_{ABCD}$ are proportional  (c.f. \cite{KalMcLWil92a}). This means that
\begin{subequations}
\begin{align}
0={}&L_{(ABC}{}^{F}\Phi_{D)FA'B'},\label{eq:LPhiAlign}\\
0={}&- L_{(ABCD}(\sTwist_{0,0}\Lambda)_{F)A'} + L_{(ABC}{}^{H}(\sCurl_{2,2} \Phi)_{DF)HA'}
 + \tfrac{1}{5} \Phi_{(AB|A'|}{}^{B'}(\sCurlDagger_{4,0} L)_{CDF)B'},\label{eq:DerLPhiAlign}
\end{align}
\end{subequations}
where the second equation is obtained by taking a derivative of the first, decomposing the derivatives into irreducible parts, using the Killing spinor equation, and symmetrizing over all unprimed indices.

Split
$L_{ABCD}$ into principal spinors
$L_{ABCD}=\alpha_{(A}\beta_B\gamma_C\delta_{D)}$.
Now, the Killing spinor equation \eqref{eq:KS4}, and the alignment equation \eqref{eq:LPhiAlign} gives
\begin{subequations}
\begin{align}
0={}&\alpha^{A} \alpha^{B} \alpha^{C} \alpha^{D} \alpha^{F} (\sTwist_{4,0} L)_{ABCDFA'}=
\alpha^{A} \beta_{A} \alpha^{B} \gamma_{B} \alpha^{C} \delta_{C} \alpha^{D} \alpha^{F} \nabla_{FA'}\alpha_{D},\\
0={}&\alpha^{A} \alpha^{B} \alpha^{C} \alpha^{D} L_{(ABC}{}^{F}\Phi_{D)FA'B'}=
\tfrac{1}{4}  \alpha^{A} \beta_{A} \alpha^{B} \gamma_{B} \alpha^{C} \delta_{C} \alpha^{D} \alpha^{F}\Phi_{DFA'B'}.
\end{align}
\end{subequations}
We will first assume that $\alpha^A$ is not a repeated principal spinor of $L_{ABCD}$. This means that	$\alpha^A\beta_A\alpha^B\gamma_B\alpha^C\delta_{C} \neq 0$ and hence $\alpha^A\alpha^B\nabla_{A'A}\alpha_{B}=0$, that is $\alpha_A$ is a
shear-free geodesic null congruence. We also get $\alpha^{D} \alpha^{F}\Phi_{DFA'B'}=0$.
Contracting \eqref{eq:DerLPhiAlign} with $\alpha^A\alpha^B\alpha^C\alpha^D\alpha^F$ we get
\begin{align}
0 ={}& \tfrac{1}{4} \alpha^{A} \beta_{A}\alpha^{B} \gamma_{B}\alpha^{C} \delta_{C} \alpha^{D} \alpha^{F} \alpha^{H} (\sCurl_{2,2} \Phi)_{DFHA'} + \tfrac{1}{5} \Phi_{ABA'}{}^{B'} \alpha^{A} \alpha^{B} \alpha^{C} \alpha^{D} \alpha^{F} (\sCurlDagger_{4,0} L)_{CDFB'}\nonumber\\
={}&\tfrac{1}{4} \alpha^{A} \beta_{A}\alpha^{B} \gamma_{B}\alpha^{C} \delta_{C} \alpha^{D} \alpha^{F} \alpha^{H} (\sCurl_{2,2} \Phi)_{DFHA'}.
\end{align}
Hence, $\alpha^{A} \alpha^{B} \alpha^{C} (\sCurl_{2,2} \Phi)_{ABCA'}=0$.
But the Bianchi equations give
\begin{align}
\alpha^{A} \alpha^{B} \alpha^{C} \nabla^{DD'}\Psi_{ABCD}={}&\alpha^{A} \alpha^{B} \alpha^{C} (\sCurl_{2,2} \Phi)_{ABCA'}=0.
\end{align}
It follows from the generalized Goldberg-Sachs theorem that $\alpha^A$ is a repeated principal spinor of $\Psi_{ABCD}$, see for instance \cite[Proposition 7.3.35]{PenRin86}. But $L_{ABCD}$ and $\Psi_{ABCD}$ are proportional, so $\alpha^A$ is a repeated principal spinor of $L_{ABCD}$ after all. Without loss of generality, we can assume that $\gamma^A=\alpha^A$, a relabelling  and rescaling of $\beta^{A}$, $\gamma^{A}$ and $\delta^{A}$ can achieve this. Repeating the argument with $\beta^A$, we find that also $\beta^A$ is a repeated principal spinor of $L_{ABCD}$. If $\beta^A\alpha_A=0$, we can repeat the argument again with $\delta^A$ and see that all principal spinors are repeated, i.e. Petrov type N. Otherwise, we have Petrov type D.  In conclusion, we have after rescaling $L_{ABCD}=\alpha_{(A}\alpha_{B}\beta_{C}\beta_{D)}$. Now, let $\kappa_{AB}=\alpha_{(A}\beta_{B)}$.

First assume that $\alpha^A\beta_A\neq 0$. Contracting \eqref{eq:KS4} with $\alpha^{A} \alpha^{B} \alpha^{C} \alpha^{D} \beta^{F}$, $\alpha^{A} \alpha^{B} \alpha^{C} \beta^{D} \beta^{F}$, $\alpha^{A} \alpha^{B} \beta^{C} \beta^{D} \beta^{F}$, $\alpha^{A} \beta^{B} \beta^{C} \beta^{D} \beta^{F}$ we find
\begin{align*}
0={}&\alpha^{A} \alpha^{B} \alpha^{C} (\sTwist_{2,0} \kappa)_{ABCA'},&
0={}&\alpha^{A} \alpha^{B} \beta^{C} (\sTwist_{2,0} \kappa)_{ABCA'},\\
0={}&\alpha^{A} \beta^{B} \beta^{C} (\sTwist_{2,0} \kappa)_{ABCA'},&
0={}&\beta^{A} \beta^{B} \beta^{C} (\sTwist_{2,0} \kappa)_{ABCA'}.
\end{align*}
Hence, $(\sTwist_{2,0} \kappa)_{ABCA'}=0$.

If $\alpha^A\beta_A= 0$, we can find a dyad $(o^A, \iota^A)$ so that $\alpha^A=o^A$. Then we have $L_{ABCD}=\upsilon^2 o_{A}o_{B}o_{C}o_{D}$ and $\kappa_{AB}=\upsilon o_{A}o_{B}$.
Contracting \eqref{eq:KS4} with $o^{A} o^{B} o^{C} \iota^{D} \iota^{F}\upsilon^{-1}$, $o^{A} o^{B} \iota^{C} \iota^{D} \iota^{F}\upsilon^{-1}$, $o^{A} \iota^{B} \iota^{C} \iota^{D} \iota^{F}\upsilon^{-1}$, $\iota^{A} \iota^{B} \iota^{C} \iota^{D} \iota^{F}\upsilon^{-1}$ we find
\begin{align*}
0={}&o^{A} o^{B} o^{C} (\sTwist_{2,0} \kappa)_{ABCA'},&
0={}&o^{A} o^{B} \iota^{C} (\sTwist_{2,0} \kappa)_{ABCA'},\\
0={}&o^{A} \iota^{B} \iota^{C} (\sTwist_{2,0} \kappa)_{ABCA'},&
0={}&\iota^{A} \iota^{B} \iota^{C} (\sTwist_{2,0} \kappa)_{ABCA'}.
\end{align*}
Hence, $(\sTwist_{2,0} \kappa)_{ABCA'}=0$.

We can therefore conclude that if the curvature satisfies \eqref{eq:AlignmentPsiPhi}, $\Psi_{ABCD}$ does not vanish, and we have a valence $(4,0)$ Killing spinor $L_{ABCD}$, then we have a valence $(2,0)$ Killing spinor $\kappa_{AB}$ such that $L_{ABCD}=\kappa_{(AB}\kappa_{CD)}$.
\end{proof}

\section{The symmetry operators with factorized Killing spinor}\label{sec:symopfactored}

\subsection{Symmetry operators for the conformal wave equation}
Let us now consider special cases of symmetry operators for the conformal wave equation. 
If we choose
\begin{align}
L_{ABA'B'}={}&\mathfrak{L}_{\xi\zeta ABA'B'},&
P_{AA'}={}&0,&
Q ={}& \tfrac{2}{5} \mathfrak{Q}_{\xi\zeta}.
\end{align}
Then the operator takes the form
\begin{align}
\chi={}&\tfrac{1}{2} \hat{\mathcal{L}}_{\zeta}\hat{\mathcal{L}}_{\xi}\phi
 + \tfrac{1}{2} \hat{\mathcal{L}}_{\xi}\hat{\mathcal{L}}_{\zeta}\phi.
\end{align}
One can also add an arbitrary first order symmetry operator to this.

We can also choose
\begin{align}
L_{ABA'B'}={}&\mathfrak{L}_{\kappa ABA'B'},&
P_{AA'}={}&0,&
Q ={}& \tfrac{2}{5} \mathfrak{Q}_{\kappa}.
\end{align}
Substituting these expressions into \eqref{eq:wavesymop1} gives a symmetry operator, but we have not found any simpler form than the one given by \eqref{eq:wavesymop1}.

\begin{remark}
Apart from factorizations, one can in special cases get symmetry operators from Killing tensors. 
If $K_{AB}{}^{A'B'}$ is a Killing tensor, then we have 
\begin{align*}
(\sTwist_{2,2} L)_{ABCA'B'C'}={}&0, &
(\sDiv_{2,2} L)_{AA'} ={}& - \tfrac{3}{4} (\sTwist_{0,0} S)_{AA'},&
K^{ABA'B'}={}&L^{ABA'B'} + \tfrac{1}{4} S \epsilon^{AB} \bar\epsilon^{A'B'}
\end{align*}
where $L_{AB}{}^{A'B'}=K_{(AB)}{}^{(A'B')}$ and $S=K_{A}{}^{A}{}_{A'}{}^{A'}$. The commutator \eqref{eq:CurlTwist} gives $(\sCurl_{1,1} \sDiv_{2,2} L)_{AB} = 0$.
If we also assume vacuum, then the equation \eqref{eq:CurlDaggerCurlDivLL22Dirac} gives
\begin{align}
(\sTwist_{0,0} \sDiv_{1,1} \sDiv_{2,2} L)_{AA'}={}&-2 \Psi_{ABCD} (\sCurl_{2,2} L)^{BCD}{}_{A'}
 - 2 \bar\Psi_{A'B'C'D'} (\sCurlDagger_{2,2} L)_{A}{}^{B'C'D'}.
\end{align}
Hence, we can choose 
\begin{align}
Q ={}& - \tfrac{1}{15} (\sDiv_{1,1} \sDiv_{2,2} L),
\end{align}
to satisfy condition \ref{point:A0}, and get the well known symmetry operator
\begin{align}
\chi={}&-\tfrac{1}{2} (\sTwist_{0,0} S)^{AA'} (\sTwist_{0,0} \phi)_{AA'}
 + L^{ABA'B'} (\sTwist_{1,1} \sTwist_{0,0} \phi)_{ABA'B'}&
={}&\nabla_{AA'}(K^{ABA'B'} \nabla_{BB'}\phi),
\end{align}
which is valid for vacuum spacetimes.
\end{remark}

\subsection{Symmetry operator of the first kind for the Dirac-Weyl equation}
Let us now consider special cases of symmetry operators of the first kind for the Dirac-Weyl equation. 
We can choose
\begin{align}
L_{ABA'B'}={}&\mathfrak{L}_{\xi\zeta ABA'B'},&
P^{AA'}={}&- \tfrac{1}{3} \mathfrak{P}_{\xi\zeta}{}^{AA'},&
Q ={}& \tfrac{3}{10} \mathfrak{Q}_{\xi\zeta},
\end{align}
to get a symmetry operator for the Dirac-Weyl equation.
The operator then becomes
\begin{align}
\chi_{A}={}&\tfrac{1}{2}\hat{\mathcal{L}}_{\xi}\hat{\mathcal{L}}_{\zeta}\phi_{A}+\tfrac{1}{2}\hat{\mathcal{L}}_{\zeta}\hat{\mathcal{L}}_{\xi}\phi_{A}.
\end{align}
We can add any conformal Killing vector to $P^{AA'}$ and any constant to $Q$.
Note that if we add the conformal Killing vector
$\tfrac{1}{2} (\xi^{BB'} \nabla_{BB'}\zeta^{AA'} -  \zeta^{BB'} \nabla_{BB'}\xi^{AA'})$
to $P^{AA'}$, the operator gets the factored form
\begin{align}
\chi_{A}={}&\hat{\mathcal{L}}_{\xi}\hat{\mathcal{L}}_{\zeta}\phi_{A}.
\end{align}

We can also choose
\begin{align}
L_{ABA'B'}={}&\mathfrak{L}_{\kappa ABA'B'},&
P^{AA'}={}&- \tfrac{1}{3} \mathfrak{P}_{\kappa}{}^{AA'},&
Q ={}& \tfrac{3}{10} \mathfrak{Q}_{\kappa}.
\end{align}
Substituting these expressions into \eqref{eq:diracsymop1} gives a symmetry operator, but we have not found any simpler form than the one given by \eqref{eq:diracsymop1}.

\subsection{Symmetry operator of the first kind for the Maxwell equation}\label{sec:symopfirstmaxwellfact}
Let us now consider the symmetry operators of the first kind for the Maxwell equation. 
Let
\begin{align}
L_{ABA'B'}={}&\mathfrak{L}_{\xi\zeta ABA'B'},&
P^{AA'}={}&- \tfrac{2}{3} \mathfrak{P}_{\xi\zeta}{}^{AA'},&
Q ={}&0,
\end{align}
to get a symmetry operator.
With this choice the symmetry operator and the potential reduce to
\begin{subequations}
\begin{align}
\chi_{AB}={}&\tfrac{1}{2} \hat{\mathcal{L}}_{\zeta}\hat{\mathcal{L}}_{\xi}\phi_{AB}
 + \tfrac{1}{2} \hat{\mathcal{L}}_{\xi}\hat{\mathcal{L}}_{\zeta}\phi_{AB},\\
A_{AA'}={}&- \tfrac{1}{2} \zeta^{B}{}_{A'}\hat{\mathcal{L}}_{\xi}\phi_{AB} 
 -  \tfrac{1}{2} \xi^{B}{}_{A'}\hat{\mathcal{L}}_{\zeta}\phi_{AB} .
\end{align}
\end{subequations}
A general first order operator can be added to this. If we add an the same commutator as above with an appropriate coefficient to $P^{AA'}$, we get the same kind of factorization of the operator as above.

We can also get a solution by setting
\begin{align}
L_{ABA'B'}={}&\mathfrak{L}_{\kappa ABA'B'},&
P^{AA'}={}&- \tfrac{2}{3} \mathfrak{P}_{\kappa}{}^{AA'},&
Q ={}&0,
\end{align}
With this choice the symmetry operator and the potential reduce to
\begin{subequations}
\begin{align}
\chi_{AB}={}&(\sCurl_{1,1} A)_{AB},\\
A_{AA'}={}&- \tfrac{1}{3} \Theta_{AB} (\sCurl_{0,2} \bar{\kappa})^{B}{}_{A'}
 + \bar{\kappa}_{A'B'} (\sCurlDagger_{2,0} \Theta)_{A}{}^{B'},\\
\Theta_{AB}\equiv{}&-2 \kappa_{(A}{}^{C}\phi_{B)C}.
\end{align}
\end{subequations}
This proves the first part of Theorem~\ref{Thm:SymopMaxwellSimple}.

\subsection{Symmetry operator of the second kind for the Dirac-Weyl equation}\label{sec:symopsecondDiracfactored}
Let
\begin{subequations}
\begin{align}
L_{ABC}{}^{A'}={}&\mathfrak{L}_{\kappa\xi ABC}{}^{A'},\\
P_{AB}
={}&- \tfrac{1}{2} \hat{\mathcal{L}}_{\xi}\kappa_{AB}
 + \tfrac{3}{8} \kappa_{AB} (\sDiv_{1,1} \xi)\nonumber\\
={}&\tfrac{1}{8} \kappa_{AB} (\sDiv_{1,1} \xi)
 -  \tfrac{1}{2} \kappa_{(A}{}^{C}(\sCurl_{1,1} \xi)_{B)C}
 + \tfrac{1}{3} \xi_{(A}{}^{A'}(\sCurlDagger_{2,0} \kappa)_{B)A'}.
\end{align}
\end{subequations}
Using the equations \eqref{eq:TwistkappaTwistxi}, the commutators \eqref{eq:DivTwistCurlCurlDagger}, \eqref{eq:DivTwistCurlDaggerCurl}, \eqref{eq:CurlTwist}, \eqref{eq:CurlDaggerTwist} and the irreducible decompositions of $\Psi_{ABCF} \kappa_{D}{}^{F}$ and $\Phi_{ABA'B'} \xi_{C}{}^{B'}$ we get
\begin{align}
(\sTwist_{2,0} P)_{ABCA'}={}&- \tfrac{1}{6} \kappa_{(AB}(\sCurlDagger_{2,0} \sCurl_{1,1} \xi)_{C)A'}
 -  \tfrac{1}{2} \kappa_{(A}{}^{D}(\sTwist_{2,0} \sCurl_{1,1} \xi)_{BC)DA'}
 + \tfrac{1}{8} \kappa_{(AB}(\sTwist_{0,0} \sDiv_{1,1} \xi)_{C)A'}\nonumber\\
& + \tfrac{1}{6} \xi_{(A|A'|}(\sCurl_{1,1} \sCurlDagger_{2,0} \kappa)_{BC)}
 + \tfrac{1}{3} \xi_{(A}{}^{B'}(\sTwist_{1,1} \sCurlDagger_{2,0} \kappa)_{BC)A'B'}\nonumber\\
={}&\xi^{D}{}_{A'} \Psi_{(ABC}{}^{F}\kappa_{D)F}\nonumber\\
={}&0,
\end{align}
where we in the last step used the integrability condition \eqref{eq:intcondxiDirac2}.
Observe that $P_{AB}$ is given by a conformally weighted Lie derivative, but now with a different weight. The operator $\hat{\mathcal{L}}_{\xi}$ has a conformal weight adapted to the weight of the conformally invariant operator $\sCurlDagger$. The operator $\sTwist$ is also conformally invariant, but with a different weight. This explains the extra term in $P_{AB}$.

The symmetry operator of the second kind for the Dirac-Weyl equation now takes the form
\begin{align}
\omega_{A'}={}&\kappa^{BC} (\sTwist_{1,0} \hat{\mathcal{L}}_{\xi}\phi)_{BCA'}-\tfrac{2}{3} \hat{\mathcal{L}}_{\xi}\phi_{B} (\sCurlDagger_{2,0} \kappa)^{B}{}_{A'}.
\end{align}
Hence, we can conclude that if $L_{ABCA'}$ factors, then one can choose a corresponding $P_{AB}$ so that the operator factors as a first order symmetry operator of the first kind followed by a first order symmetry operator of the second kind.

\subsection{Symmetry operator of the second kind for the Maxwell equation}\label{sec:symopsecondmaxwellfact}
If we let $L_{ABCD}=\kappa_{(AB}\kappa_{CD)}$ with
\begin{align}
(\sTwist_{2,0} \kappa)_{ABCA'} ={}& 0,
\end{align}
Then the operator if the second kind now takes the form 
\begin{subequations}
\begin{align}
\omega_{A'B'}={}&(\sCurlDagger_{1,1}B)_{A'B'},\\
B_{AA'}={}&\kappa_{AB} (\sCurlDagger_{2,0} \Theta)^{B}{}_{A'}
 + \tfrac{1}{3} \Theta_{AB} (\sCurlDagger_{2,0} \kappa)^{B}{}_{A'},\\
\Theta_{AB}\equiv{}&-2 \kappa_{(A}{}^{C}\phi_{B)C}.
\end{align}
\end{subequations}
This proves the second part of Theorem~\ref{Thm:SymopMaxwellSimple}.

\section*{Acknowledgements}
The authors would like to thank Steffen Aksteiner and Lionel Mason for helpful discussions. We are particularly grateful to Lionel Mason for his ideas concerning Theorem~\ref{thm:Valence4Factorization}. Furthermore we would like to thank Niky Kamran and J.~P. Michel for helpful comments. LA thanks Shing-Tung Yau for generous hospitality and many interesting discussions on symmetry operators and related matters, during a visit to Harvard University, where some initial work on the topic of this paper was done. 
This material is based upon work supported by the National Science Foundation under Grant No. 0932078 000, while the authors were in residence at the Mathematical Sciences Research Institute in Berkeley, California, during the semester of 2013.

\section*{References}

\newcommand{\prd}{Phys. Rev. D} 

\begin{thebibliography}{10}
\expandafter\ifx\csname url\endcsname\relax
  \def\url#1{{\tt #1}}\fi
\expandafter\ifx\csname urlprefix\endcsname\relax\def\urlprefix{URL }\fi
\newcommand{\DOI}[1]{\hyperref{http://dx.doi.org/#1}{}{}{{\tt doi:#1}}}
\newcommand{\eprint}[1]{\hyperref{http://arxiv.org/abs/#1}{}{}{{\tt #1}}}

\bibitem{carter:1968PhRv..174.1559C}
{Carter} B 1968 {\em Phys. Rev.\/} {\bf 174} 1559--1571

\bibitem{PenroseWalker:1970}
{Walker} M and {Penrose} R 1970 {\em Commun. Math. Phys.\/}
  {\bf 18} 265--274

\bibitem{carter:1977PhRvD..16.3395C}
{Carter} B 1977 {\em \prd\/} {\bf 16} 3395--3414

\bibitem{miller:MR0460751}
Miller Jr W 1977 {\em Symmetry and separation of variables\/} (Addison-Wesley
  Publishing Co., Reading, Mass.-London-Amsterdam) ISBN 0-201-13503-5 with a
  foreword by Richard Askey, Encyclopedia of Mathematics and its Applications,
  Vol. 4

\bibitem{kalnins:miller:williams:1989JMP....30.2360K}
{Kalnins} E~G, {Miller} Jr W and {Williams} G~C 1989 {\em J Math. Phys.\/} {\bf 30} 2360--2365

\bibitem{kalnins:miller:williams:1989JMP....30.2925K}
{Kalnins} E~G, {Miller} Jr W and {Williams} G~C 1989 {\em J Math. Phys.\/} {\bf 30} 2925--2929


\bibitem{KalMcLWil92a}
{Kalnins} E~G, {McLenaghan} R~G and {Williams} G~C 1992 {\em R. Soc. London Proc. Ser. A\/} {\bf 439} 103--113

\bibitem{mclenaghan:smith:walker:2000RSPSA.456.2629G}
{McLenaghan} R~G, {Smith} S~N and {Walker} D~M 2000 {\em R. Soc. London Proc. Ser. A\/} {\bf 456} 2629

\bibitem{michel:radoux:silhan:2013arXiv1308.1046M}
{Michel} J~P, {Radoux} F and {{\v S}ilhan} J 2014 {\em SIGMA \/} {\bf 10}, 016, 26 pages
\DOI{10.3842/SIGMA.2014.016} (\textit{Preprint}
  \eprint{1308.1046})

\bibitem{andersson:blue:2009arXiv0908.2265A}
{Andersson} L and {Blue} P 2009 {\em ArXiv e-prints\/} (\textit{Preprint}
  \eprint{0908.2265})

\bibitem{klainerman:MR784477}
Klainerman S 1985 {\em Comm. Pure Appl. Math.\/} {\bf 38} 321--332 ISSN
  0010-3640 \DOI{10.1002/cpa.3160380305}

\bibitem{anco:pohjanpelto:2003RSPSA.459.1215A}
{Anco} S~C and {Pohjanpelto} J 2003 {\em R. Soc. London Proc. Ser. A\/} {\bf 459} 1215--1239 (\textit{Preprint} \eprint{math-ph/0202019})

\bibitem{andersson:backdahl:blue:currents}
Andersson L, B\"ackdahl T and Blue P "Conserved currents for the {M}axwell
  equation." \emph{In preparation.}

\bibitem{PenRin84}
{Penrose} R and {Rindler} W 1984 {\em "{Spinors and space-time. Vol. 1:
  Two-spinor calculus and relativistic fields.}"\/}

\bibitem{PenRin86}
{Penrose} R and {Rindler} W 1986 {\em "{Spinors and space-time. Vol. 2:
  Spinor and twistor methods in space-time geometry.}"\/}

\bibitem{Bae11a}
{B\"{a}ckdahl} T 2011-2014 Sym{M}anipulator
  \urlprefix\url{http://www.xact.es/SymManipulator}

\bibitem{xAct}
Mart\'{\i}n-Garc\'{\i}a J~M 2002-2014 x{A}ct
\urlprefix\url{http://www.xact.es}

\bibitem{kamran:mclenaghan:1985LMaPh...9...65K}
{Kamran} N and {McLenaghan} R~G 1985 {\em Lett. Math. Phys.\/}
  {\bf 9} 65--72

\bibitem{kress:thesis}
Kress J {\em Generalised conformal Killing-Yano tensors: Applications to
  electrodynamics\/} Ph.D. thesis University of Newcastle 1997

\bibitem{kalnins:miller:1983JMP....24.1047K}
{Kalnins} E~G and {Miller} Jr W 1983 {\em J. Math. Phys.\/}
  {\bf 24} 1047--1053

\bibitem{eastwood:MR2180410}
Eastwood M 2005 {\em Ann. of Math. (2)\/} {\bf 161} 1645--1665 ISSN 0003-486X
  \DOI{10.4007/annals.2005.161.1645}

\bibitem{benn:kress:2004CQGra..21..427B}
{Benn} I~M and {Kress} J~M 2004 {\em Class. Quantum Grav.\/} {\bf 21}
  427--431

\bibitem{carter:mclenaghan:1979PhRvD..19.1093C}
{Carter} B and {McLenaghan} R~G 1979 {\em \prd\/} {\bf 19} 1093--1097

\bibitem{durand:lina:vinet:1988PhRvD..38.3837D}
{Durand} S, {Lina} J~M and {Vinet} L 1988 {\em \prd\/} {\bf 38} 3837--3839

\bibitem{fels:kamran:1990RSPSA.428..229F}
{Fels} M and {Kamran} N 1990 {\em R. Soc. London Proc. Ser. A\/} {\bf 428} 229--249

\bibitem{michel:2014arXiv1208.4052}
{Michel} J~P,  2014 {\em ArXiv e-prints\/} (\textit{Preprint} \eprint{1208.4052v3})

\bibitem{kalnins:mclenaghan:williams:1992grra.conf..129K}
{Kalnins} E~G, {McLenaghan} R~G and {Williams} G~C 1992 "{Symmetry Operators
  for Maxwell's Equations on Curved Space-Time}" {\em 4th Canadian Conference
  on General Relativity and Relativistic Astrophysics\/} ed {Kunstatter} G,
  {Vincent} D~E and {Williams} J~G p 129

\bibitem{MR2056970}
Anco S~C and Pohjanpelto J 2004 Symmetries and currents of massless neutrino
  fields, electromagnetic and graviton fields {\em Symmetry in physics\/} ({\em
  CRM Proc. Lecture Notes\/} vol~34) (Providence, RI: Amer. Math. Soc.) pp
  1--12

\bibitem{kalnins:mclenaghan:BelgAcad1984}
{Kamran} N and {McLenaghan} R~G 1984 {\em Acad. Roy. Belg. Bull. Cl. Sci. (5)\/} {\bf 70} 596--610


\end{thebibliography}

\providecommand{\newblock}{}

\end{document}